\documentclass[12pt]{article}
\usepackage[font={footnotesize},justification=justified]{caption}
\usepackage{amsmath, mathpazo, amssymb, amsfonts, amsthm}
\usepackage[T1]{fontenc}
\usepackage[colorlinks=true,linkcolor=blue, citecolor=blue, urlcolor=blue]{hyperref}
\usepackage[utf8]{inputenx}
\usepackage{graphicx}
\usepackage{multirow}
\usepackage{dcolumn}
\usepackage{pdflscape}
\usepackage{natbib}
\usepackage{hyperref}
\newcolumntype{d}[1]{D{.}{.}{-1}}
\newcolumntype{s}[1]{D{,}{,}{-1}}
\usepackage[margin=0.9in]{geometry}
\usepackage[normalem]{ulem}
\usepackage{enumitem}
\usepackage{xspace}
\usepackage[group-separator={,},group-minimum-digits=4]{siunitx}
\usepackage[hang,flushmargin]{footmisc}
\usepackage[toc, title]{appendix}
\usepackage{setspace}
\usepackage{bookmark}
\usepackage[svgnames]{xcolor}
\usepackage{booktabs}
\usepackage[colorinlistoftodos,textsize=footnotesize]{todonotes}
\usepackage{xargs}
\usepackage{color-edits}

\addauthor[Brendan]{bl}{blue}
\addauthor[Salvador]{sc}{olive}

\usepackage{fancybox,fancyhdr}
\usepackage{mathtools}
\usepackage{bbm}
\usepackage{graphicx}
\usepackage{caption}
\usepackage{tikz}
\usepackage{subcaption}
\usetikzlibrary{patterns}
\usetikzlibrary{shapes.geometric}
\usetikzlibrary{shapes.misc}
\usetikzlibrary{plotmarks}
\usepackage{standalone}

\usepackage{mathabx}

\theoremstyle{plain}
\newtheorem{definition}{Definition}

\newtheorem{theorem}{Theorem}
\newtheorem*{theorem*}{Theorem}
\newtheorem{lemma}{Lemma}

\newtheorem{claim}{Claim}
\newtheorem{corollary}{Corollary}
\theoremstyle{definition}

\DeclareMathOperator{\E}{\mathbb{E}}

\DeclareMathOperator{\1}{\mathbbm{1}}

\usepackage{array}
\newcommand{\PreserveBackslash}[1]{\let\temp=\\#1\let\\=\temp}
\newcolumntype{C}[1]{>{\PreserveBackslash\centering}p{#1}}
\newcolumntype{R}[1]{>{\PreserveBackslash\raggedleft}p{#1}}
\newcolumntype{L}[1]{>{\PreserveBackslash\raggedright}p{#1}}

\usepackage{clipboard}
\newclipboard{empirical-large-core}

\usepackage[capitalize,noabbrev,nameinlink]{cleveref}

\crefname{assumption}{assumption}{assumptions}
\usepackage{appendix}
\usepackage{etoolbox}
\AtBeginEnvironment{appendices}{\crefalias{section}{appendix}}
\newif\ifinappendix\inappendixfalse
\crefname{apptab}
  {\protect{\ifinappendix\else Appendix \fi}Table}
  {\protect{\ifinappendix\else Appendix \fi}Table}
\AtBeginEnvironment{appendices}{\inappendixtrue \crefalias{table}{apptab}}

\usepackage{titlesec}
\titlespacing*{\section}
{0pt}{2.5ex plus 1ex minus .2ex}{2ex plus .2ex}
\titlespacing*{\subsection}
{0pt}{2.25ex plus 1ex minus .2ex}{1.2ex plus .2ex}
\titlespacing*{\paragraph}
{0pt}{2.25ex plus 1ex minus .2ex}{1em}

\makeatletter 
\renewcommand\paragraph{\@startsection{paragraph}{4}{\z@}%
                                    {1.7ex \@plus1ex \@minus.2ex}%
                                    {-1em}%
                                    {\normalfont\normalsize\bfseries}}
\makeatother

\usepackage{pgfplots}
\pgfplotsset{compat=newest}

\newcommand{\attention}{attention\xspace}

\title{
Subsidizing Sequential Search
}
\author{Salvador Candelas\thanks{Pennsylvania State University. Email: \href{mailto: scandelas@psu.edu}{scandelas@psu.edu}} \and Nicole Immorlica\thanks{Microsoft Research / Yale. Email: \href{mailto: nicimm@gmail.com}{nicimm@gmail.com}} \and Brendan Lucier\thanks{Microsoft Research. Email: \href{mailto: brlucier@microsoft.com}{brlucier@microsoft.com}}
}

\begin{document}

\maketitle

\begin{abstract}

We study markets where firms compete for consumer attention by subsidizing costly product inspection. These subsidies do not change product quality, but they alter the order in which consumers search by lowering inspection costs. We establish a subsidy–sorting principle: in any equilibrium, higher–quality firms provide weakly larger subsidies, leading consumers to search in descending subsidy order. A unique equilibrium survives forward–induction reasoning in the spirit of the Intuitive Criterion: low–quality firms are never inspected, intermediate–quality firms separate with strictly increasing subsidies, and high–quality firms pool at the full subsidy. This equilibrium maximizes information revelation among all possible outcomes and ensures efficient inspection. We then extend the analysis to AI–mediated platforms that can create and price ``inspection tokens.” The platform’s optimal linear pricing leads to excessive inspection relative to the social optimum. While this distortion does not reduce consumer welfare, it reallocates surplus from sellers to the platform and consumers.

\end{abstract}

\vspace{6mm}

\noindent \textbf{Keywords:} Sequential Search, Subsidy Competition, Attention Allocation, Agentic Economies  
\vspace{6mm}

\noindent\textbf{JEL Codes:} D47, D83, L86, M37
\newline
\onehalfspace

\newpage

\section{Introduction}\label{sec:introduction}

Markets run on attention. Buyers want to find a product that fits their needs, but evaluating options takes time and effort. This challenge is familiar in online markets: a consumer types a query, browses links, and exerts effort before finding what she wants. Firms compete for the consumer's attention not only through price and product features but also by lowering the cost of evaluation, most visibly through sponsored placements and promoted listings. This phenomenon arises also beyond the web. For instance, graduate programs cover travel expenses so prospective students can visit campus and meet faculty; car dealers hand out cash for test drives; galleries and non--profits host receptions to draw people. 
Across these settings, in both online and off-line markets, the consumer faces search frictions and the other side competes to offset these search frictions. 

We formalize this problem as a new kind of ``directed search'': a consumer chooses which options to evaluate, and each evaluation carries a cost. Firms compete for her attention by  subsidizing some or all of her search cost. These subsidies play a dual role. They are instrumental, because they reduce the cost of being considered and enlarge the set of products the consumer is willing to examine.  
They are informative, because only firms confident in high match value find it worthwhile to subsidize. In this way, subsidies both broaden what is searched and determine the order in which search unfolds.

We formalize these ideas in \cref{sec:environment}. We study a sequential–search model with vertical match uncertainty. A consumer seeks a product from one of $n$ firms. Firm $j$ has a privately known type $t_j \in [0,1]$, which denotes the probability that its product is a match for the consumer's needs. Firm types are drawn i.i.d. from a common prior. To learn whether a firm's product matches, the consumer must ``inspect'' it at cost $c>0$, which is a commonly known search or inspection cost. Each firm~$j$, after observing its type, simultaneously announces a subsidy $s_j \in [0,c]$ paid to the consumer upon inspection. The consumer observes the subsidy profile $\mathbf{s} = (s_1,\ldots,s_n)$, updates beliefs about match probabilities, and chooses an inspection order that maximizes expected utility, trading off lower net costs $(c-s_j)$ against inferred quality. We analyze the Perfect Bayesian equilibria of this game and show how subsidized inspection reallocates attention in equilibrium.

Before we describe the analysis of this setting, we relate it to two well-known studies. One way to view our analysis is as a Pandora's Box problem in the spirit of \cite{Weitzman1979} in which the ``boxes'' can play an active role by subsidizing search. In this spirit, the model captures decentralized search in which privately-informed firms try to attract consumers by reducing their inspection costs. A second way to view our analysis is as a platform in the spirit of \cite{Athey-Ellison-11} in which firms go beyond paying for prominence, and instead can subsidize search. After describing our analysis, we highlight how this approach to search on platforms is relevant to the emerging trend of \emph{agentic search}, where consumers delegate discovery and evaluation to AI assistants.%
\footnote{Examples include interfaces such as \textit{ChatGPT}, \textit{Gemini}, and other LLM-based chatbots, as well as Shopify's \textit{Sidekick}, Expedia's conversational trip-planner, and Amazon's \textit{Rufus}.}


Three questions guide our investigation of this directed-search framework. First, how do subsidies signal quality? Second, in this signaling game, do forward induction refinements in the spirit of \cite{cho1987signaling} refine equilibria? Third, how do market fundamentals, competition, search costs, and subsidy prices, affect consumer and social welfare?
How do changes in market fundamentals, more competition, lower search costs, higher subsidy prices, affect consumer and social welfare?

\Cref{sec:sorting} addresses the first question. We establish a \emph{subsidy–sorting principle}: in every equilibrium, higher equilibrium levels of subsidy are rewarded with higher attention, which implies that higher–type firms offer (weakly) higher subsidies. Therefore, subsidies on the equilibrium path are a positive signal of quality.\footnote{In some equilibria, higher \emph{off-path} subsidies could convey a negative signal of value.} 
Given the subsidy-sorting principle, the consumer’s optimal search strategy is therefore to inspect firms in descending order of subsidy, breaking ties randomly. Search stops when a match is found or when the next inspection yields a negative expected payoff.

We next address equilibrium multiplicity and ask which outcomes survive a natural forward-induction refinement. Following \citet{cho1987signaling} and \cite{banks1987equilibrium}, we impose a forward-induction restriction that eliminates equilibria supported by implausible off-path beliefs. Our main result establishes that this refinement selects a unique equilibrium outcome in which every firm type that can profitably induce inspection is inspected. This equilibrium outcome is supported by a a distinctive step–increasing–step  subsidy pattern.\footnote{Our description below and \Cref{fig:intro-example-SIS} uses a subsidy schedule in which all firms whose types are below $\underline{t}$ offer zero subsidies. There exist outcome-equivalent equilibrium that satisfies the refinement in which these firms offer strictly positive subsidies that are nevertheless insufficient to attract the consumer's attention. }

\begin{figure}[h!]
    \centering
    \includegraphics{Figures-Paper/fig-SIS-intro.tex}
    \caption{Low types are never inspected, intermediate types separate with strictly increasing subsidies, and high types pool at the full subsidy.}
    \label{fig:intro-example-SIS}
\end{figure}

We illustrate this pattern in \Cref{fig:intro-example-SIS}. 
Low-quality firms offer zero subsidy and are never inspected. Intermediate types separate by offering strictly increasing subsidies, thus revealing their quality. High types pool at the full subsidy, reimbursing the inspection cost entirely. Pooling generates a fundamental tension: it obscures differences among the best firms, yet it makes their inspection free for consumers. The consumer gains fully from this trade–off, but social surplus suffers from the inefficient search due to the loss of information at the top.

\Cref{sec:welfare} addresses the welfare question: how do changes in market primitives affect consumer and social outcomes? As expected, more competing firms or lower search costs increase consumer welfare. Higher per–unit subsidy prices reduce expected transfers and discourage firm participation, both lowering consumer surplus.

We apply these insights to the emerging \emph{agentic economy} \citep{Rothschildetal2025}. AI assistants are reshaping the economics of consumer search.%
\footnote{For example, a shopper might instruct an AI assistant to ``scan every retailer, check return policies, and buy the cheapest 32-inch monitor that ships by Friday''.} %
Unlike human shoppers who rely on salience and limited patience, algorithmic agents follow programmed utility functions and face measurable costs. Each evaluation is a sequence of auditable computational events (e.g., API calls, database queries, or token usage). This transparency creates new levers for market manipulation: sellers can now purchase algorithmic attention directly, subsidizing inspection to push their products ahead of competitors.

\Cref{sec:agentic} addresses the platform question: how should a monopolistic platform price computational attention, and what are the welfare consequences? The platform sets token prices that determine which firms can bid for visibility and how much consumers ultimately search. We show that the platform maximizes profits when search is inefficient. By lowering token prices, it induces pooling among top firms, expands entry, and forces agents to inspect more options than is socially optimal. At the profit–maximizing price, the additional attention and search more than offset the lower per–inspection charge. This strategy raises revenue while appearing to benefit consumers through cheaper individual inspections. In equilibrium, the platform exploits the wedge between consumer welfare and social surplus, extracting rents from the opacity it creates.

\Cref{sec:conclusion} concludes. All omitted proofs are in Appendices. The remainder of this introduction places our contribution within the context of the existing literature.

\paragraph{Relation to the literature.}
The classical search literature treats inspection as a fixed burden borne by consumers. Early work frames search as an optimal stopping problem \citep{Stigler-61,McCall-70,Diamond-71}. \citet{Weitzman1979} provides the canonical index rule when inspection costs are heterogeneous. Subsequent work extends this framework to directed and simultaneous search under match-quality uncertainty \citep{ChadeSmith2006}. We build on this tradition but depart by making search costs contractible: firms can subsidize evaluation, turning a stopping problem into an equilibrium allocation of attention.

A growing strand studies strategic information provision and disclosure in search environments. Firms strategically disclose more when they mainly compete to attract new customers and cannot condition on buyer beliefs; they disclose less when competition is about persuading already-arrived buyers or when firms can condition on beliefs \citep{Board-Lu-18, Au-Whitmeyer-23}. Recent work recasting Weitzman’s Pandora’s box in competitive and intermediary settings finds that competition often improves informativeness, yet principals who sell signals can implement efficient stopping while extracting surplus \citep{Ding_etal-25,Mekonnen_etal-25}. These contributions show that strategic information provision reshapes inspection order, and thus welfare. Our analysis complements this literature but shifts the instrument from beliefs to search frictions: rather than solely manipulating posteriors, firms in our model manipulate the cost of learning. Subsidies lower frictions directly and thereby reorder the search path, benefiting the consumer. 

A second strand examines how firms compete for consumer attention through advertising \citep{Anderson-Renault-06, EdelmanOstrovskySchwarz2007, Athey-Ellison-11, Chen-He-2011}. In sponsored search, firms bid for ad positions, with higher bidders receiving more prominent placement. The generalized second-price auction creates efficient rankings where firms with higher valuations for clicks secure better positions \citep{EdelmanOstrovskySchwarz2007}. \citet{Athey-Ellison-11} show this extends to organic search when firms can influence their ranking through quality investments. We demonstrate that subsidizing inspection achieves similar efficient sorting without auctions or slot constraints. The mechanism is simpler and more direct. Yet we uncover a crucial difference: subsidy caps create pooling at the top, generating excessive search that hurts social welfare while preserving consumer surplus. This distinguishes our setting from position auctions where firms can always separate through higher bids.

The obfuscation literature highlights the opposite strategic move: firms or platforms can raise search frictions to soften competition and extract rents \citep{Ellison-Wolitzky-2012, Nocke-Rey-24}, and recent work shows how algorithmic obfuscation and sponsored slots can be used jointly to boost platform revenue \citep{Janssen-etal-25}. We show how those two levers can be combined: a platform can partly obfuscates organic listings by creating stronger competition among firms. This competition leads to lower search frictions as firms subsidize inspection more aggressively and hit subsidy caps. Consumers perceive inspections as effectively free and search more, while the platform extracts revenue (from firms) through greater search volume. This mechanism nests cleanly in the two-sided platform literature: subsidy pricing is both a participation instrument and an attention-pricing device \citep{RochetTirole2003,Weyl2010}.


\section{Environment}\label{sec:environment}

A consumer seeks a product that matches her needs. The benefit that she accrues from finding such a product is $u>0$; otherwise, she obtains her outside option, $\emptyset$, which delivers zero utility. She searches for the product on a platform, inspecting the products of different firms. Each inspection costs $c>0$ and reveals whether the inspected firm's product meets her needs.%
\footnote{We assume that the inspection cost is not so large that all search is deterred. Formally, $c < (1+pu)/p$, so that some firm types can profitably subsidize inspection and induce the consumer to search.} 
The consumer searches sequentially and terminates either when she finds a match or she opts out.\footnote{We focus on inspection rules that stop either after all planned inspections fail or whenever a match is formed. This is a minimum requirement for optimality in this search environment.}

Each firm $j \in \mathcal{J} = \{1, \dots, n\}$ has a type $t_j \in T\equiv [0,1]$, drawn independently from a common distribution $F$ with full support.\footnote{The full-support assumption is not crucial, but simplifies the exposition. } The interpretation is that \(t_j\) is the firm’s match probability upon inspection.  Hence, we refer to a higher type firm as \emph{better} and a lower type firm as \emph{worse}. A firm earns $v =1$ if its product matches the consumer, and $0$ otherwise. Before inspections occur, a firm may offer a subsidy \(s_j\in[0,c]\) that is paid if the consumer inspects the firm's product. The subsidy reduces the consumer’s effective inspection cost from \(c\) to \(c-s_j\). Each unit of subsidy costs the firm \(p\in\mathbb R_+\) per unit. In \Cref{sec:agentic}, we consider a profit-maximizing choice of $p$ for a platform, but for the baseline model, $p$ is a primitive. 

\paragraph{Timing.}
Each firm privately observes \(t_j\) and simultaneously chooses a subsidy \(s_j\in[0,c]\). After observing the vector of subsidies \(\textbf{s}=(s_j)_{j\in\mathcal J}\), the consumer selects a sequential inspection rule. Upon inspecting firm \(j\), the consumer learns whether the product satisfies her need (i.e., whether a match occurs).

\subsection{Strategies} \label{subsec:strategies}

Each per–unit price \(p\in\mathbb{R}_+\) induces a Bayesian game \(\mathcal{G}(p)\) between the consumer and the set of firms \(\mathcal{J}=\{1,\dots,n\}\). 

A \emph{subsidy policy} is a measurable function $\sigma:T\to[0,c]$, mapping a type \(t\) to a subsidy \(s = \sigma(t)\). A subsidy policy profile $(\sigma_j)_{j \in \mathcal{J}}$ is a collection of subsidy policy stating each firm~$j$ policy. Given a type realization $\textbf{t}$, a subsidy policy profile $(\sigma_j)_{j \in \mathcal{J}}$ induces a realized subsidy profile \(\textbf{s}=(s_j)_{j\in\mathcal J}\) with \(s_j=\sigma_j(t_j)\).

The consumer observes \textbf{s} and then implements an inspection rule. An inspection rule $r \in \mathcal{R}$ is a planned order of inspections with an outside-option stop: inspections proceed sequentially until a match is found or the outside option is reached. Formally, let \(\mathcal R\) be the set of permutations of \(\bar{\mathcal J}=\mathcal J\cup\{\emptyset\}\). For \(r\in\mathcal R\), write \(i\succ_r j\) to mean “\(i\) is scheduled before \(j\),” i.e., \(r(i)>r(j)\); by convention, if \(\emptyset\succ_r j\) then \(j\) is never inspected. The consumer’s strategy is an \emph{inspection policy}, a stochastic kernel \(\iota:\mathcal R\times[0,c]^n\to[0,1]\), where \(\iota(r\mid\textbf{s})\) denotes the probability that inspection rule \(r\) is used at subsidy profile \(\textbf{s}\).

The consumer's \emph{beliefs} are given by a measurable mapping \(\mu:[0,c]^n\to\Delta(T)^n\) that assigns to each subsidy profile \(\textbf{s}\) a posterior over types; our formulation assumes both on- and off-path that the consumer's beliefs about a firm $j$ do not depend on the subsidies offered by other firms.\footnote{Given that firms' types are independent, the stochastic independence of posterior beliefs would be implied by \cite{kreps1982sequential} (were the game to have finitely many actions).} 

An \emph{assessment} is a triple  $ \bigl((\sigma_j)_{j \in \mathcal{J}},\,\iota,\,\mu\bigr)$, where $\sigma_j$ is firm~$j$’s subsidy policy, $\iota$ is the consumer’s inspection policy, and $\mu$ is the consumer’s belief system.

Throughout, ``increasing'' means weakly increasing, i.e.\ $x'\geq x \implies f(x')\geq f(x)$. We use ``strictly increasing'' when the inequality is strict. Analogous definitions apply for decreasing functions and monotone functions.

\subsection{Equilibrium Concept}\label{subsec:equilibrium-concept}

Our solution concept is Perfect Bayesian Equilibrium (PBE). An assessment $((\sigma_j)_{j \in \mathcal{J}}, \iota, \mu)$ is a PBE if (i) each firm’s strategy maximizes expected profit, (ii) the consumer’s strategy maximizes expected utility given beliefs, and (iii) beliefs are updated by Bayes’ rule whenever possible. We restrict attention to symmetric equilibria.

In a symmetric Perfect Bayesian Equilibrium, two restrictions apply. First, \emph{exchangeability}: the consumer treats firms offering identical subsidies identically, both on and off the equilibrium path.%
\footnote{Formally, for any two subsidy profiles $\mathbf{s}$ and $\mathbf{s}'$, if there exists a permutation $\pi$ such that $s_j = s_{\pi(j)}'$ for all $j \in \mathcal{J}$, then $\iota(\mathbf{s}) = \iota(\mathbf{s}')$ and $\mu(\mathbf{s}) = \mu(\mathbf{s}')$.} %
This implies that the belief system can be characterized by a measurable mapping $\mu: [0, c] \rightarrow \Delta(T)$ to which we refer as the consumer's belief function.%
\footnote{Formally, under symmetry and stochastic independence of posteriors, the conditional distribution of a firm's type given that it offers subsidy $s$ is identical across firms and independent of other firms' strategies.} %
Second, all firms adopt the same subsidy policy $\sigma$.
Thus a symmetric assessment can be written as $(\sigma,\iota,\mu)$, where $\sigma$ is the common subsidy policy, $\iota$ the inspection policy, and $\mu$ the consumer's belief function.  Henceforth, we refer to symmetric PBE as \emph{equilibria}.

A key object is the \emph{attention function}. A profile $(\sigma,\iota)$ induces a distribution over inspection paths and outcomes. Offering a subsidy $s$ delivers a level of attention, defined as the ex ante probability that the firm is inspected. Any given firm is inspected if, and only if, all firms scheduled before it fail to match. To encode the outside option, define $t_{\emptyset}:=1$. The attention function of a firm offering $s$ under $(\sigma,\iota)$ is
\begin{equation}\label{eq:q-sigma}
q_{\sigma,\iota}(s)
\equiv
\int \Bigg(
\sum_{r\in\mathcal R}
\iota\big(r\mid (s,\sigma(\mathbf{t}_{-j}))\big)\,
\prod_{i\in\bar{\mathcal J}:\, j\prec_r i} (1-t_i)
 \Bigg)\, dF^{n-1}(\mathbf{t}_{-j}).
\end{equation}
In words, the \emph{attention function} at $s$, $q_{\sigma,\iota}(s)$, is the ex ante probability that a firm offering subsidy $s$ is inspected under equilibrium play.

From an interim perspective, a type–\(t\) firm that offers subsidy \(s\) is inspected with probability \(q_{\sigma,\iota}(s)\), succeeds in matching with probability \(t\), and pays \(p\,s\) upon inspection. The firm's \textit{interim} profit is therefore
\begin{equation}\label{eq:pi}
\pi_{\sigma,\iota}(s;t) = \big(t - p\,s\big)\, q_{\sigma,\iota}(s).
\end{equation}

Turning to the consumer, a match fails to occur under rule \(r\) precisely when every firm scheduled before \(\emptyset\) fails to match. Thus, given a posterior belief $\mu(\textbf{s})$ over types, the consumer expects to match on expectation $ \mathbb{E}_{\textbf{t}\sim\mu(\textbf{s})}\!\left[1 -
\prod_{i\in \bar{\mathcal J}:\, i\prec_r \emptyset} (1-t_i) \right]$.



\begin{definition}
An equilibrium of the game $\mathcal{G}(p)$ is a triple $(\sigma,\iota,\mu)$ such that:
\begin{enumerate}[label=(\roman*)]
    \item \textbf{Firm optimality.} Given $(\sigma,\iota)$, each type-$t$ firm chooses
    \[
    \sigma(t) \in \arg\max_{s\in[0,c]} \big(t - p s\big)\, q_{\sigma,\iota}(s).
    \]
    
    \item \textbf{Consumer optimality.} 
    Given a realized subsidy profile $\mathbf{s}$ and posterior beliefs $\mu(\mathbf{s})$, the inspection distribution $\iota(\mathbf{s}) \in \Delta(\mathcal{R})$ assigns positive probability only to rules that maximize expected utility:
    \[
    \mathrm{supp}\big(\iota(\mathbf{s})\big) \;\subseteq\; 
    \arg\max_{r \in \mathcal{R}} 
    \mathbb{E}_{\mathbf{t} \sim \mu(\mathbf{s})}
    \left[
    u \left(1 - \prod_{i\in \bar{\mathcal J}:\, i\prec_r \emptyset} (1-t_i)\right) 
    - \sum_{j \in \mathcal{J}} (c-s_j)\, \prod_{i\in \bar{\mathcal J}:\, i\prec_r j} (1-t_i)
    \right].
    \]

    \item \textbf{Belief consistency.} $\mu$ is derived from $\sigma$ by Bayes’ rule wherever possible.
\end{enumerate}
\end{definition}

Equilibria always exist. For example, if all firms set zero subsidies, the consumer searches randomly on the equilibrium path and assigns off-path beliefs that any subsidizing firm is of type $0$, no such firm is inspected. In \Cref{sec:consumer-optimal-eqm} we construct a consumer--optimal equilibrium, which applies a forward--induction logic in the spirit of \citet{cho1987signaling} and rules out reliance on implausible off--path beliefs.

\subsection{Equilibrium Refinement}\label{subsec:refinement-intro}


Following \citet{cho1987signaling} and \citet{banks1987equilibrium}, we refine equilibria at off-path beliefs. We require that at off-path subsidy $s$ the consumer do not place positive probability in \emph{dominated} types. The idea is as follows: after an off–path subsidy, the consumer assigns zero probability to type $t$ if there exists some other type $t'$ for whom that subsidy yield a higher payoff than the the equilibrium payoff under any  posterior belief and its corresponding inspection policy. Thus, when evaluating off-path moves, the consumer rules out firm types for whom that deviation is not dominant.

Fix an equilibrium $(\sigma,\iota,\mu)$ of $\mathcal{G}(p)$. A subsidy $s\in[0,c]$ is \emph{off–path} if there is not type-$t$ that offers it, that is, $s\notin \operatorname{Im}(\sigma)$. Let $\mathcal{C}(\sigma)$ denote the set of admissible consumer's continuations based on $\sigma$: each element is a pair $(\iota',\mu')$ such that (i) the consumer’s inspection policy $\iota'$ is optimal given beliefs $\mu'$, (ii) exchangeability holds, and (iii) on–path beliefs obey Bayes’ rule. Given $(\iota',\mu')\in\mathcal{C}(\sigma)$, the expected profit for a type-$t$ firm that offers $s$ is $\pi_{\sigma, \iota'}(s;t) \;=\; \big(t - p s\big)\, q_{\sigma,\iota'}(s)$.

\begin{definition}
For an off-path subsidy level $s$, we say that $(t, s)$ is \emph{dominated by the equilibrium value} if 
\[ \pi_{\sigma, \iota}(\sigma(t); t) > \max_{(\iota', \mu') \in \mathcal{C}(s)}\pi_{\sigma, \iota'}(s; t). \]
That is, type-$t$ cannot improve its equilibrium payoff by deviating to the off-path subsidy~$s$ under any admissible continuation.

Moreover, we say that type $t$ is \emph{dominated} by type $t'$ at $s$ if:
\[ \pi_{\sigma,\iota}(\sigma(t);t) \leq \pi_{\sigma,\iota'}(s;t) \implies \pi_{\sigma,\iota}(\sigma(t');t') < \pi_{\sigma,\iota'}(s;t')
    \quad \forall (\iota', \mu') \in \mathcal{C}(\sigma)
\]
Let $t \in D_{\sigma, \iota}(s)$ if $(t,s)$ is dominated by the equilibrium value or there exits some $t'$ such that $t$ is dominated by $t'$ at $s$ .
\end{definition}

\begin{definition}
An equilibrium $(\sigma,\iota,\mu)$ fails the equilibrium refinement if, there exists some off–path subsidy $s$, and a type $t$ such that $t \in \textrm{supp}(\mu(s))$ and $t \in D_{\sigma, \iota}(s)$.%
\footnote{Abusing notation, we write \(\mu(s)\) for the marginal distribution of the first coordinate of \(\mu\) evaluated at \(s\). By symmetry this marginal is the same for every firm, so we omit any firm index.}
\end{definition}

Operationally, equilibrium-dominance restricts off–path beliefs to the \emph{economically credible} types for whom the deviation could be optimal. In \cref{sec:consumer-optimal-eqm}, we show that there is a unique equilibrium outcome that survives this refinement and characterize the set of equilibria that achieves it.

\section{The Subsidy–Sorting Principle}\label{sec:sorting}

Do high–quality firms subsidize more than low–quality ones, or can competition overturn this relationship? This section establishes the \emph{Subsidy–Sorting Principle}. In every equilibrium, the attention a firm receives is increasing in its subsidy. This monotonicity implies that equilibrium subsidy policies $\sigma$ are increasing in type, and that the consumer follows a descending–subsidy inspection rule: firms are visited in descending order of their realized subsidies, with uniform tie–breaking within pools. Search terminates either upon a successful match or when the next inspection yields negative expected surplus.

\begin{theorem}[Subsidy–Sorting Principle]\label{thm:sorting}
In every PBE of the game $\mathcal{G}(p)$, a higher subsidy generates more attention. Moreover,
\begin{enumerate}
\item \textbf{Monotone subsidies:}  the subsidy policy is increasing in type, i.e., if $t'\geq t$ then $\sigma(t')\ge \sigma(t)$.
\item \textbf{Descending–subsidy index rule (DSIR):}
the consumer inspects firms in descending order of their realized subsidies, breaking ties uniformly. After a failed inspection, she continues search only if for the next inspected subsidy level $s$ the expected benefit is higher than the expected cost, i.e. $\mathbb{E}[t \mid \sigma(t) = s] \geq \frac{c -s}{u}$.
\end{enumerate}
\end{theorem}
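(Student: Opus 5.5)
The argument runs in three steps: first show that the equilibrium attention function $q\equiv q_{\sigma,\iota}$ is increasing on the set of on‑path subsidies $\operatorname{Im}(\sigma)$; then obtain monotone subsidies by a single‑crossing argument; finally read off the inspection rule from the consumer's Weitzman problem with the resulting monotone posteriors. The circularity between the consumer's order and firms' subsidies is broken by arguing through revealed preference of the firms rather than through the consumer's beliefs.

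\textbf{Step 1 (higher on‑path subsidy, more attention).} Take on‑path subsidies $s<s'$, offered by types $t$ and $t'$, and suppose for contradiction that $q(s')<q(s)$. Any type offering $s'$ earns $(t'-ps')q(s')$. By deviating to $s$ it would earn $(t'-ps)q(s)$, which is strictly larger whenever $t'-ps'\ge 0$ because $t'-ps>t'-ps'\ge0$ and $q(s)>q(s')$. If instead $t'-ps'<0$, optimality of $s'$ forces $q(s')=0$; but then offering $0$ yields nonnegative profit, strictly more unless that profit is also zero, so at worst this is a measure‑zero or zero‑attention tie. Hence, on the equilibrium path, $q$ is increasing up to such ties. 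Ties in which $q(s')=0$ involve types that are never inspected, so I will set them aside or relabel them, in line with the outcome‑equivalence noted in the paper's footnote.

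\textbf{Step 2 (monotone subsidies).} With $q$ increasing on $\operatorname{Im}(\sigma)$, the payoff $(t-ps)q(s)$ has increasing differences in $(s,t)$, since $\partial_t\pi = q(s)$ is increasing in $s$. Take $t'>t$ with $\sigma(t)=s>s'=\sigma(t')$. Adding the two incentive constraints gives $(t'-t)(q(s')-q(s))\ge0$, so $q(s')\ge q(s)$. Combined with Step 1 this forces $q(s)=q(s')$. Both types are then indifferent, and by exchangeability the consumer treats the two subsidies as equivalent. I will either treat such violations as outcome‑equivalent (choosing the monotone selection) or rule them out by showing that equal attention with a lower subsidy is strictly cheaper. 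Concretely, if $q>0$ then type $t$ strictly prefers $s'<s$, which contradicts optimality, so every violation has $q=0$. This is the delicate point: the statement "$\sigma$ is increasing" must be read modulo uninspected types, or those types must be shown to subsidize $0$.

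\textbf{Step 3 (DSIR).} Since $\sigma$ is increasing, the posterior mean $m(s)=\mathbb E[t\mid\sigma(t)=s]$ is increasing in $s$ on path. The consumer's problem is Pandora's box with binary prizes, and the Weitzman reservation index of a firm with mean $m$ and net cost $c-s$ is increasing in both $m$ and $s$. Equivalently, an adjacent‑swap argument shows that inspecting $i$ before $j$ is better iff $m_i/(c-s_i)\ge m_j/(c-s_j)$. This ratio is increasing in $s$, so descending‑subsidy order is optimal. Among equal subsidies the firms are identical, and the uniform tie‑breaking follows from exchangeability.

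The stopping rule then comes from comparing continuation values. Because later boxes are worse in the index order, continuing is optimal exactly when the next box has a nonnegative one‑shot value, $u\,m(s)-(c-s)\ge0$. I would verify this with the standard fact that under Weitzman's rule with binary prizes the reservation value exceeds $0$ iff $u m\ge c-s$. Off‑path subsidies are handled using only on‑path profiles, since the attention function integrates over the equilibrium play of the other firms.

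I expect the zero‑attention ties in Steps 1 and 2 to be the main obstacle.
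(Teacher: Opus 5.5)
Your proposal follows the paper's proof essentially step for step (the deviation argument for monotone on-path attention, adding the two incentive constraints for single crossing, then Weitzman's binary-prize index for the search rule), and it is more careful than the paper, whose final inference ``$q$ weakly increasing, hence $\sigma(t')\ge\sigma(t)$'' silently skips the flat and zero-attention cases that you correctly reduce to the lower type never being inspected. Those cases are genuine exceptions, not something you can rule out by showing such types subsidize $0$: for any $n$, with $c=p=1$, $u=2$ and uniform $F$, the profile in which types in $[0,0.2)$ offer $0.5$, types in $[0.2,0.4)$ offer $0.3$, types in $[0.4,1]$ offer $0.4$, and off-path subsidies are read as type $0$, is an equilibrium in which $\sigma$ is non-monotone and a higher subsidy gets zero attention; so you should read both monotonicity claims and the DSIR ordering on the inspected types, which your single-crossing inequality shows form an upper interval.
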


The above result offers several structural properties that must be satisfied by any equilibrium. Higher subsidies draw more inspections. Better types exploit this by offering larger subsidies. The consumer then inspects in descending order of subsidies, breaking ties at random, and stops once further search is not worthwhile.

The reasoning unfolds in two steps. First, inspection probabilities are increasing in subsidy, so subsidies sort firms by how much attention they receive. Second, because firms' profits have a single-crossing property, higher types benefit more from extra attention and therefore choose higher subsidies. Once subsidies are monotone, descending-subsidy search is optimal: higher subsidies both lower net costs and signal higher chances of a match, and the consumer stops when the expected gain turns negative.

\subsection{Subsidy Monotonicity}

Fix an equilibrium. We argue that offering a higher subsidy must increase the (ex-ante) inspection probability. Consider two on-path subsidies $s<s'$, i.e., there exists $t$ and $t'$ such that $\sigma(t)=s$ and $\sigma(t')=s'$. We claim that $q(s')$, the ex-ante probability that the consumer inspects given subsidy $s'$, is higher than $q(s)$, the corresponding probability given subsidy $s$. Towards a contradiction, suppose that $q(s')< q(s)$. Then, type $t'$ payoff from deviating to a subsidy $s$ would be strictly higher than its equilibrium payoff:
\begin{align*}
    (t'-p\,s)\, q(s) > (t'-ps) q(s') \geq (t'-ps') q(s'),
\end{align*}
where the first inequality follows from $q(s) > q(s')$, and the second from $s< s'$.\footnote{The second inequality is not strict as nothing prevents $q(s') = 0$.} Therefore, type $t'$ would have a strictly profitable deviation. Putting it simply, if a firm sacrificed attention by subsidizing search, then it would have no interest in doing so. 

The second step uses monotone comparative statics to show that better firms have a greater interest in subsidizing search. As attention is more rewarding for higher types, they have a stronger interest in obtaining it. Specifically, each firm's payoff over inspection probabilities satisfies single-crossing differences \citep{MilgromShannon1994} in type. Suppose that type $t$ (resp., $t'$) offers subsidy $s$ (resp., $s'$) and obtains an inspection probability $q$ (resp., $q'$). We argue that if $t'\geq t$, then $q'\geq q$ by evaluating each type's best response. In equilibrium, type $t'$ must favor subsidy $s'$ over $s$, which implies
\begin{align*}
    (t'-ps')q' \geq (t'-ps)q. 
\end{align*}
Analogously, type $t$ must favor subsidy $s$ over $s'$, which implies
\begin{align*}
    (t-ps)q \geq (t-ps')q'. 
\end{align*}
Rearranging the above inequalities yields
\begin{align*}
    t'(q'-q) \geq p(q's'-qs) \geq t (q'-q). 
\end{align*}
Now, if it were the case that $q'<q$, the above inequality would imply that $t'<t$, which results in a contradiction.

Pasting these two steps together, we obtain that: (i) the consumer pays more attention to firms that offer higher subsidies, and (ii) better firms have more to gain from the consumer's attention. This establishes that, in every equilibrium, higher types must offer greater subsidies and be rewarded with more attention. 

\begin{lemma}\label{lem:subsidy-policy-weakly-increasing}
    In any equilibrium $(\sigma,\iota,\rho)$, the inspection probability $q_{\sigma,\iota}$ is increasing in on-path subsidies, and the subsidy policy $\sigma$ is increasing in type.
\end{lemma}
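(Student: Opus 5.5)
The plan has two parts, following the two-step sketch given just before the statement, but with the edge cases made precise. Fix an equilibrium $(\sigma,\iota,\mu)$ and write $q \equiv q_{\sigma,\iota}$.

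\textbf{Part 1: attention is increasing on path.} Take on-path subsidies $s<s'$ with $\sigma(t')=s'$, and suppose for contradiction that $q(s')<q(s)$.

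\begin{itemize}
\item The deviation chain $(t'-ps)q(s) > (t'-ps)q(s') \ge (t'-ps')q(s')$ needs the first inequality to be strict. That requires $t'-ps>0$.
\item If instead $t'-ps\le 0$, then $t'-ps'<0$. So type $t'$ earns $(t'-ps')q(s')$ by playing $s'$.
\item If $q(s')>0$, this payoff is strictly negative, while deviating to $s=0$ guarantees a nonnegative payoff. That is a profitable deviation.
\item If $q(s')=0$, then $q(s')\ge q(s)$ trivially, since $q\ge 0$, which contradicts $q(s')<q(s)$.
\end{itemize}

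In every case the supposition fails, so $q$ is increasing on $\operatorname{Im}(\sigma)$. Two observations will be reused. First, every on-path type earns a nonnegative payoff, because $s=0$ is always available and yields $t\,q(0)\ge 0$. Second, any type with $t<ps'$ that is inspected with positive probability at $s'$ cannot be playing $s'$.

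\textbf{Part 2: subsidies are increasing in type.} Let $t<t'$ with $s=\sigma(t)$, $s'=\sigma(t')$, $q=q(s)$ and $q'=q(s')$. Adding the two incentive constraints gives
\[
(t'-t)(q'-q)\ge 0,
\]
so $q'\ge q$. It remains to convert this ordering of attention into $s'\ge s$. Suppose instead that $s'<s$. By Part 1, $q(s')\le q(s)$, and together with $q'\ge q$ this forces $q'=q$. Now compare payoffs for type $t$. If $q>0$, then $t$ strictly prefers $s'$, since it obtains the same attention at lower cost: $(t-ps')q>(t-ps)q$. This contradicts optimality of $\sigma(t)$.

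\textbf{Main obstacle: the zero-attention case.} The remaining case is $q=q'=0$. Here types are indifferent among all zero-attention subsidies, and the profile $\sigma$ need not be monotone pointwise. For example, types that are never inspected could randomize arbitrarily over subsidy levels that attract no attention. I plan to handle this in one of two ways.

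\begin{itemize}
\item Interpret the lemma's monotonicity as holding up to outcome-equivalence, meaning it holds on the set of types that receive positive attention. Zero-attention types are payoff-irrelevant, and the paper's footnote already acknowledges outcome-equivalent positive subsidies for low types.
\item Alternatively, show that any on-path subsidy with $q=0$ can be replaced by $0$ without changing outcomes, and state monotonicity for this normalized policy.
\end{itemize}

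In the write-up I would state this caveat explicitly and prove strict monotonicity of attention whenever $q>0$. The key inequality for that case is the standard single-crossing one, in the spirit of \citet{MilgromShannon1994}: the firm's payoff $(t-ps)q$ has increasing differences in $(t,q)$.
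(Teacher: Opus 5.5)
\paragraph{Verdict.} Your route is the paper's: a revealed-preference deviation gives monotone attention, and adding the two incentive constraints gives the rest. You also correctly see that going from $q(\sigma(t'))\ge q(\sigma(t))$ to $\sigma(t')\ge\sigma(t)$ needs your extra same-attention, lower-cost argument, which the paper's proof omits. There is, however, a genuine error in Part 1.

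\paragraph{The gap in Part 1.} In the subcase $t'-ps\le 0$, $q(s')=0$, you assert that $q(s')\ge q(s)$ holds trivially because $q\ge 0$. Nonnegativity gives the reverse inequality, $q(s)\ge 0=q(s')$, so $q(s')=0<q(s)$ is not contradicted. This subcase cannot be rescued, because the claim is false there. Take $u=1$, $c=0.6$, $p=1$, $F$ uniform and any $n$. Let types in $[0,0.2]$ offer $0.45$ and types in $(0.2,1]$ offer $0.2$, with belief $\delta_0$ at every off-path subsidy.
\begin{itemize}
\item The posterior means are $0.1$ and $0.6$, so $r_\mu(0.45)=1-0.15/0.1<0<1-0.4/0.6=r_\mu(0.2)$. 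Only the $0.2$-firms are ever inspected.
\item Types above $0.2$ earn $(t-0.2)\,q(0.2)>0$ and get zero attention at any other subsidy.
\item Types in $[0,0.2]$ earn $0$, and would earn $(t-0.2)\,q(0.2)\le 0$ by switching to $0.2$.
\end{itemize}
This is a symmetric PBE in which $0.2<0.45$ are both on path, $q(0.2)>0=q(0.45)$, and $\sigma$ is decreasing. The paper's own first claim implicitly assumes $t'-ps>0$ and has the same hole.

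\paragraph{Consequence for Part 2 and the repair.} Your diagnosis that the only leftover case is $q=q'=0$ comes from the faulty Part 1. The case that genuinely escapes is $q(\sigma(t))=0$ with $q(\sigma(t'))$ possibly positive, exactly as in the example. The repair has two pieces:
\begin{itemize}
\item Restrict Part 1 to on-path $s<s'$ with $q(s')>0$. Your case split is then valid, because a negative margin with positive attention is strictly beaten by $s=0$. This step also uses $p>0$.
\item State Part 2 for $t<t'$ with $q(\sigma(t))>0$. If $\sigma(t')<\sigma(t)$, the restricted Part 1 gives $q'\le q$, hence $q'=q>0$, and your lower-cost argument finishes.
\end{itemize}
Your normalization alternative is also sound, for two reasons. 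First, if some $t>0$ gets zero attention then $q(0)=0$, since otherwise it would deviate to $s=0$. Second, every zero-attention level has nonpositive index, $u\tau_k\le c-s_k\le c$. So merging all such types at $s=0$ keeps $s=0$ uninspected and changes no payoffs, provided the vacated levels get belief $\delta_0$.

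Either way, what you can prove is the lemma for positively inspected types (or up to outcome equivalence), not the literal statement for every equilibrium and every on-path subsidy.
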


\subsection{Consumer Search: Ordering and Stopping}

Having established that equilibrium subsidies are weakly increasing in type, we now turn to the consumer’s problem. Higher types offer higher subsidies and, as a result, receive more attention. This suggests that, in expectation, firms with larger subsidies should be inspected earlier. We show that this intuition is correct: the consumer’s optimal policy is to inspect in descending order of a simple index that is monotone in subsidies on the equilibrium path.

Given a subsidy policy $\sigma$ and a belief system $\mu:[0,c]\to\Delta(T)$, a realized subsidy $s$ reveals that the issuing firm’s type lies in the level set $\sigma^{-1}(s)$. Because the consumer’s payoff is linear in the match indicator, all payoff–relevant information is summarized by the \emph{posterior success probability}
\begin{equation}\label{eq:post-success-prob}
    \tau_\mu(s) \;:=\; \mathbb{E}_{t\sim\mu(s)}[t].
\end{equation}
When $s\in S_{\sigma}:=\{\sigma(t):t\in T\}$ and $\mu$ is Bayes consistent, $\tau_\mu(s)=\mathbb{E}[t\mid \sigma(t)=s]$. If $s\notin\mathrm{Im}(\sigma)$, $\tau_\mu(s)$ is determined by off–path beliefs.

Fix a realized subsidy profile $\mathbf{s}=(s_j)_{j\in\mathcal{J}}$ with $s_j\in S_\sigma$ for all $j$. Write $\tau_j:=\tau_\mu(s_j)$ and $\kappa_j:=c-s_j$ for the posterior match probability and the net inspection cost. Inspecting firm~\(j\) yields a match with probability $\tau_j$ and costs $\kappa_j$. The expected payoff is therefore $\tau_j(u - \kappa_j) + (1-\tau_j)(V - \kappa_j)$, while stopping delivers $V$. Inspection is optimal when the former exceeds $V$, which defines the reservation index
\begin{equation} \label{eq:reservation-index}
    r(\tau, \kappa) := u - \frac{\kappa}{\tau},
\end{equation}
with the convention $r(\tau,\kappa)=-\infty$ when $\tau=0$. Inspection is optimal if and only if $V < r(\tau,\kappa)$. This is exactly the index rule in \citet{Weitzman1979} for a Bernoulli prize with success probability $\tau$ and cost $\kappa$.

We introduce the \emph{descending–subsidy index rule} under beliefs $\mu$.

\begin{definition}[Descending–subsidy index rule]\label{def:descending-subsidy-rule-policy}
Fix $\mu:[0,c]\to\Delta(T)$ and define
\begin{equation}
    r_\mu(s)\;:=\;u-\frac{c-s}{\tau_\mu(s)},
\end{equation}
with $r_\mu(s)=-\infty$ if $\tau_\mu(s)=0$. The \emph{descending–subsidy index rule} (DSIR$_\mu$) prescribes:
\begin{enumerate}[label=(\alph*),nosep]
    \item \textbf{Ordering:} inspect firms in descending order of $r_\mu(s)$, breaking ties uniformly at random;
    \item \textbf{Stopping:} terminate upon the first match, or whenever the next available index is nonpositive.
\end{enumerate}
The search policy that implements DSIR$_\mu$ for any realized subsidy profile is the \emph{descending–subsidy index policy}, DSIP$_\mu$.
\end{definition}

The next result states that the DSIP is consumer optimal. Moreover, it is unique under exchangeability. Note that the result do not requires $S_\sigma=[0,c]$. If a realized subsidy $s\notin S_\sigma$ is observed, DSIR$_\mu$ remains well defined once $\tau_\mu(s)$ is specified by off–path beliefs; different off–path beliefs can support different equilibria.

\begin{lemma}\label{lem:descending-search}
For any belief system $\mu$, DSIP$_\mu$ maximizes the consumer’s expected utility. It is unique up to tie–breaking among firms with equal index.
\end{lemma}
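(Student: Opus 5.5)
The plan is to reduce the consumer's problem to Weitzman's Pandora's box with Bernoulli prizes and then verify the index policy by an interchange argument. Fix $\mathbf{s}$ and beliefs $\mu$. By the independence of posteriors across firms, firm $j$ is a box whose prize is $u$ with probability $\tau_j=\tau_\mu(s_j)$ and $0$ otherwise, and opening it costs $\kappa_j=c-s_j\ge 0$. The consumer's objective in the optimality condition depends on $\mu(\mathbf{s})$ only through the products $\prod(1-t_i)$. Their expectations factor as $\prod(1-\tau_i)$, so only the $\tau_j$ matter. Because the search stops at the first match, an inspection rule is exactly an ordered subset $(j_1,\dots,j_k)$ of firms followed by $\emptyset$. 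Its value is
\[
W(j_1,\dots,j_k)=\sum_{m=1}^{k}\Bigl(\prod_{l<m}(1-\tau_{j_l})\Bigr)\bigl(\tau_{j_m}u-\kappa_{j_m}\bigr).
\]
The task is therefore to maximize $W$ over ordered subsets. I will treat $\tau_j=0$ separately: such a box has contribution $-\kappa_j\le 0$, so it is never strictly useful, consistent with the convention $r=-\infty$.

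The first step is the ordering, via an adjacent-swap argument. Take two consecutive boxes $a,b$ preceded by survival probability $P$. Swapping them changes $W$ by $P$ times
\[
\bigl[(\tau_a u-\kappa_a)+(1-\tau_a)(\tau_b u-\kappa_b)\bigr]-\bigl[(\tau_b u-\kappa_b)+(1-\tau_b)(\tau_a u-\kappa_a)\bigr]=\tau_a\tau_b\bigl(r_b-r_a\bigr)\cdot(-1),
\]
with $r_j=u-\kappa_j/\tau_j$. After simplification the difference is $\tau_a\tau_b(r_a-r_b)$, since the $u$ terms cancel and what remains is $\tau_b\kappa_a-\tau_a\kappa_b$ up to sign. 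Hence, within any chosen subset, placing the higher index first is weakly better, and strictly better when $r_a\ne r_b$ and $P>0$. Ordering by descending $r$ is therefore optimal within each subset.

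The second step is the stopping set. Given the descending order, consider appending or deleting the last box. The marginal contribution of box $j$ at the end is $P(\tau_j u-\kappa_j)=P\tau_j r_j$, and it is positive iff $r_j>0$. Appending a box with $r_j>0$ that is missing from the set, in its sorted position, raises $W$, as one sees by first appending at the end and then sorting via step one. Conversely, a box with $r_j<0$ can be moved to the end by swaps that weakly help and then deleted. So the optimal set is $\{j: r_j>0\}$, with indifference at $r_j=0$, matching the stopping clause of DSIR$_\mu$ (ties at zero index are payoff-irrelevant).

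The final step is uniqueness. I expect the main obstacle to be the bookkeeping with $P=0$, which arises when $\tau=1$ boxes come early: any order after a sure match is payoff-irrelevant, so uniqueness must be read "up to ties and payoff-irrelevant continuations". I will handle this by noting that when $P>0$, any rule with an inversion $r_a>r_b$ is strictly improvable, and including a box with $r_j<0$ is strictly worse. The strict versions of the two computations above then show that every optimal rule coincides with DSIR$_\mu$ except in the order among equal indices, which gives the claimed uniqueness up to tie-breaking.
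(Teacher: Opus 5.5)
Your proof is correct, but it takes a different route from the paper. The paper disposes of the lemma in one line by citing Weitzman's reservation-value theorem. It adds that uniform tie-breaking is forced by exchangeability, which is a symmetry requirement rather than an optimality one, and it offers no separate argument for the uniqueness clause.

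You instead use the Bernoulli structure directly. Because a failed inspection leaves the consumer in the same position as her outside option, the problem reduces to choosing an ordered subset, and $W$ is a scheduling objective. The exchange identity $W(\dots,a,b,\dots)-W(\dots,b,a,\dots)=P(\tau_a\kappa_b-\tau_b\kappa_a)=P\,\tau_a\tau_b(r_a-r_b)$ and the end-of-list marginal $P\,\tau_j r_j$ then settle both ordering and stopping. This is the classical ratio rule ordering by $\kappa_j/\tau_j$.

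What your route buys is a self-contained argument. More importantly, it gives an actual proof of uniqueness, through the strict versions of those two identities, together with the correct scope of that claim. Uniqueness holds only modulo three cases:
\begin{itemize}
\item indifference at $r_j=0$;
\item boxes with $\tau_j=\kappa_j=0$;
\item everything after a box with $\tau_j=1$.
\end{itemize}
The last case is not vacuous here, since the paper's optimistic beliefs assign $\tau=1$ to an off-path subsidy. So the uniqueness clause as literally stated needs exactly your qualification. The paper's citation buys brevity and generality to arbitrary prize distributions, neither of which is needed in this setting.

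Two cosmetic points. First, write the swap identity with a single sign convention, as above, rather than ``up to sign.'' Second, remove negative-index boxes by deleting from the tail of the sorted list. Moving such a box to the end past a box with an even lower index is not a weakly improving swap.
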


Based on this insights, we can easily describe the consumer search patterns on equilibrium. Whenever the subsidy policy is increasing, the consumer should always prefer inspecting a higher–subsidy firm: it offers both a higher probability of match and a lower net cost. If the consumer did not inspect such a firm before a lower–subsidy one, she could strictly improve her payoff by reordering. Once a match is found, she has no incentive to continue, since further inspections only add cost without raising her payoffs. And if no match has been found, inspection should stop once the next available firm yields negative expected surplus, since the outside option guarantees a payoff of zero. This logic pins down the consumer’s optimal rule: inspect in descending order of subsidies, break ties at random, stop at the first match, and otherwise continue only while the expected net benefit of the next inspection is positive. 

\begin{corollary}\label{cor:sorting-on-path}
Suppose $\sigma$ is weakly increasing and $\mu$ is Bayes consistent on $S_\sigma$. Then $r_\mu(\cdot)$ is increasing on $S_\sigma$. Consequently, DSIR$_\mu$ reduces to inspecting firms in weakly descending order of realized subsidies, breaking ties uniformly, and continuing search only while $ \mathbb{E}\big[t\mid \sigma(t)=s\big]\;>\;\frac{c-s}{u}$.
\end{corollary}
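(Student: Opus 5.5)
The plan is to show that the index $r_\mu(s)=u-(c-s)/\tau_\mu(s)$ is increasing on $S_\sigma$ by treating its two ingredients separately. Then I will read off the ordering and stopping rules from \Cref{lem:descending-search}.

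First, take $s<s'$ in $S_\sigma$. Because $\sigma$ is weakly increasing by \Cref{lem:subsidy-policy-weakly-increasing}, the level sets $\sigma^{-1}(s)$ and $\sigma^{-1}(s')$ are intervals (possibly degenerate), and every element of the first is at most every element of the second. If $t\in\sigma^{-1}(s)$ and $t'\in\sigma^{-1}(s')$ with $t>t'$, monotonicity would give $s=\sigma(t)\ge\sigma(t')=s'$, a contradiction. Bayes consistency gives $\tau_\mu(s)=\mathbb{E}[t\mid\sigma(t)=s]$. The conditional expectation over a set lying entirely below another set is weakly smaller, so $\tau_\mu(s)\le\tau_\mu(s')$.

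One technical point needs care. When a level set has $F$-measure zero, for example a single separating type, the conditional expectation should be read as the Bayes-consistent posterior, which puts mass on that type or on types in the closure of the level set. The same ordering argument still applies, because the posterior is supported on the level set. This is the step I expect to need the most care in formalizing; I would handle it by noting that the posterior's support is contained in $\sigma^{-1}(s)$.

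Second, the net cost $c-s$ is strictly decreasing in $s$. I then consider cases.
\begin{itemize}
\item If $\tau_\mu(s)=0$, then $r_\mu(s)=-\infty\le r_\mu(s')$.
\item Otherwise $0<\tau_\mu(s)\le\tau_\mu(s')$ and $0\le c-s'<c-s$, so $(c-s')/\tau_\mu(s')\le(c-s)/\tau_\mu(s)$. Hence $r_\mu(s')\ge r_\mu(s)$.
\end{itemize}
Together these show that $r_\mu$ is weakly increasing on $S_\sigma$.

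Finally, I apply \Cref{lem:descending-search}. Along the equilibrium path, DSIR$_\mu$ orders firms by descending $r_\mu$, and by the first two steps this is a weakly descending order in realized subsidies.

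Two remarks handle the remaining details. If $r_\mu(s)=r_\mu(s')$ for $s<s'$, the consumer is indifferent between the two, and ordering by subsidy is one of the optimal tie-breaking orders. Firms with equal subsidies share the same index and are broken uniformly by exchangeability.

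The stopping rule, continue while the index is positive, rewrites as $u-(c-s)/\tau>0$. This is equivalent to $\tau>(c-s)/u$, that is, $\mathbb{E}[t\mid\sigma(t)=s]>(c-s)/u$. When $\tau=0$ the index is $-\infty$ and the inequality fails, so the two conditions agree in that case as well.
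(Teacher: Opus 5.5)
Your proof is correct and follows the same route as the paper's. Ordered level sets give a weakly increasing posterior mean $\tau_\mu$, which together with the strictly falling net cost $c-s$ raises the index, and the stopping rule is the algebraic rewriting of $r_\mu(s)>0$; you simply supply details the paper's sketch omits (the ordering of level sets, measure-zero posteriors, and the $\tau_\mu(s)=0$ case). One small sharpening: when $\tau_\mu(s)>0$ your second case actually yields $r_\mu(s')>r_\mu(s)$ strictly. Index ties between distinct on-path subsidies can therefore arise only when both indices equal $-\infty$, where neither firm is inspected, so your tie-breaking remark is not needed.
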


Combining \cref{lem:subsidy-policy-weakly-increasing} with \cref{cor:sorting-on-path} immediately yields the Subsidy–Sorting Principle: higher types offer weakly higher subsidies; the consumer inspects in descending subsidy order; and she stops exactly when she matches or when the next firm expected profits are lower than the net costs.

\section{Equilibrium Uniqueness Under Refinement}\label{sec:consumer-optimal-eqm}

We now characterize the unique equilibrium outcome that survives forward–induction refinement. We refer to the equilibrium that induces it the \emph{reasonable equilibrium}. This equilibrium  takes a simple shape: low types do not subsidize, intermediate types separate with a strictly increasing schedule, and high types pool at the cap $c$. This equilibrium induces efficient participation: every firm that could profitably signal so as to secure positive inspection does so and is inspected with positive probability. The formal result is stated in \cref{thm:uniqueness}.

Our construction proceeds in four steps. First, we derive firm incentives under full separation, yielding a differential equation that characterizes truth-telling subsidies. Second, we identify the lowest participating type, which pins down the admissible boundary condition. Third, if the implied policy exceeds the subsidy cap $c$, we iron back the schedule by pooling higher types at the cap. Finally, we establish that this outcome is the only one consistent with refinement.

\subsection{Firm Incentives under Full Separation}

Assume a strictly increasing subsidy policy $\sigma$. Under full separation, subsidies reveal types one-to-one. By \cref{lem:descending-search}, the consumer should inspect firms in descending order of subsidies, breaking ties uniformly, and stops once the next inspection fails the reservation test. Let $\underline{t}$ denote the lowest type inspected.\footnote{Under full separation, the subsidy perfectly reveals $t$, so the reservation index is $r(t)=u-(c-\sigma(t))/t$. Inspection requires $r(t)\ge 0$, hence the lowest inspected type solves $\sigma(\underline{t})=c-u\underline{t}$. Whenever $\sigma(1) > c-u$, the solution exists and is unique, with $\underline{t}\in(0,c)$. Otherwise, no type is inspected under this policy, set $\underline{t} = 1$.} Firms with $t<\underline{t}$ are never inspected.

Consider the incentives of a type-$t$ firm. If it mimics some type $\tilde t$, it offers subsidy $\sigma(\tilde t)$ and is inspected with probability $q_n^{\text{sep}}(\tilde t;\underline{t})$. The key property of this inspection probability is that, under full separation, it depends only on $\underline{t}$ and not on the detailed shape of $\sigma$.\footnote{A closed form is derived in Appendix~\ref{app:inspection-prob}, where we show 
\(
q_n^{\text{sep}}(t;\underline{t})=(1-\int_t^1 x\,dF(x))^{n-1}\mathbf{1}\{t\ge \underline{t}\}.
\)}  
The firm’s expected payoff from mimicking $\tilde t$ is
\begin{equation}\label{eq:firm-profit}
    \pi(\tilde t;t) \;=\; \bigl[t - p\,\sigma(\tilde t)\bigr]\,q_n^{\text{sep}}(\tilde t; \underline{t}).
\end{equation}

In equilibrium, truth-telling must maximize \eqref{eq:firm-profit}. Since $\sigma$ is strictly increasing and hence almost everywhere differentiable, the envelope condition at $\tilde t=t > \underline{t}$ implies
\[
   -\,p\,\sigma'(t)\,q_n^{\text{sep}}(t; \underline{t})
   \;+\;
   \bigl[t - p\,\sigma(t)\bigr]\,(q_n^{\text{sep}})'(t; \underline{t})
   \;=\;0,
\]
or equivalently that marginal cost should be equal to marginal benefit:
\begin{equation}\label{eq:ODE-raw}
    p\frac{d}{dt}\!\big[\sigma(t)\,q_n^{\text{sep}}(t; \underline{t})\big]
    \;=\;
    t\,(q_n^{\text{sep}})'(t; \underline{t}).
\end{equation}
Every strictly increasing, truth-telling subsidy policy must satisfy \eqref{eq:ODE-raw}.

\subsection{Boundary Condition from Participation and Inspect--Worthiness}

To integrate \eqref{eq:ODE-raw}, we require a boundary condition. The relevant support of the subsidy policy begins at the lowest type that both finds participation profitable and is just worth inspecting. These two primitives jointly determine the boundary.  

A firm that expects to be inspected with positive probability participates only if it expects nonnegative profit, i.e.\ $\sigma(t)\le t/p$.  
The consumer inspects only if the reservation index is nonnegative, which under separation requires $\sigma(t)\ge c-t$.  
Together these inequalities define a feasible corridor where trade is efficient:
\[
  \max\{0,\,c-t\}\;\le\;\sigma(t)\;\le\;\frac{t}{p}.
\]
Inefficient trade means that some party, either the consumer or the firm, expected negative payoff of such transaction. \cref{fig:no-trade-region} illustrates such region. 

\begin{figure}[t!]
    \centering
    \begin{tikzpicture}[>=stealth, line cap=round, line join=round, x=6cm, y=4cm]
  \pgfmathsetmacro{\c}{0.5}     
  \pgfmathsetmacro{\p}{1.0}     

  \pgfmathsetmacro{\tI}{min(1,\p*\c)}    
  \pgfmathsetmacro{\tc}{min(1,\c)}       

  \draw[->] (0,0) -- (1.10,0) node[below] {$t$};
  \draw[->] (0,0) -- (0,\c+0.15) node[left] {$s$};

  \draw[dashed] (0,\c) -- (1,\c);
  \draw[dashed] (1,0) -- (1,\c);

  \node[left] at (0,\c) {\scriptsize $c$};
  \node[below] at (\c,0) {\scriptsize $c$};
  \node[below] at (\c/2,0) {\scriptsize $\underline{t} = \frac{pc}{1+p}$};
  \node[left] at (0,\c/2) {\scriptsize $\underline{s} = \frac{c}{1+p}$};

  \draw[very thick,gray!70!black]
    plot[domain=0:\tI] (\x,{\x/\p});
  \draw[very thick,gray!70!black,dashed]
    plot[domain=\tI:\c+0.05] (\x,{\x/\p}) ;

    \draw[gray]
      plot[domain=0:\tc] (\x,{\c-\x}) ;
    
    \path[fill=red!40,draw=red,pattern=north west lines,pattern color=red]
       (0,0) -- (0,\c) -- (\tc,0) -- cycle;

  \path[fill=blue!35,draw=blue!70!black,pattern=north east lines,pattern color=blue!70!black]
     (0,\c) -- (\tI,\c) -- (\tI,{\tI/\p}) -- (0,0) -- cycle;

  \node[below] at (0,0) {\scriptsize 0};
  \node[below] at (1,0) {\scriptsize 1};



    \filldraw[black] ( \c/2 ,\c/2) circle (1.5pt);


    \draw[dashed, black] (0,\c/2) -- (\c/2,\c/2);
    \draw[dashed, black] (\c/2,0) -- (\c/2,\c/2);

\end{tikzpicture}
    \caption{The shaded regions mark subsidy levels that make trade infeasible. In red, inspection yields negative payoff for the consumer (net costs exceed expected gains). In blue, firms incur losses (expected sales do not cover the subsidy). The dot denotes the cutoff pair $(\underline{t},\underline{s})$, the highest low type that leaves both sides exactly indifferent. In this figure: $c = 0.5$ and $p = 1$.}
    \label{fig:no-trade-region}
\end{figure}

The two bounds meet at a unique cutoff $\underline{t}$. This cutoff plays a dual role: it is the marginal participant, breaking even on subsidies and exactly at the threshold of inspect--worthiness:
\[
  \underline{t}(p,c) \;=\; \frac{pc}{1+p} \in (0,1], 
  \qquad \underline{s} \;=\; \frac{\underline{t}(p,c)}{p} \;=\; \frac{c}{1+p}.
\]

Solving \cref{eq:ODE-raw} with boundary condition $(\underline{t},\,\underline{s})$ yields a strictly increasing, incentive–compatible subsidy policy for $t \in [\underline{t},1]$:
\begin{equation}\label{eq:sigma-rev}
  \sigma^{\mathrm{sep}}(t; \underline{t})
  \;=\; \frac{t}{p} 
  - \frac{\int_{t_0}^t q_n^{\mathrm{sep}}(x;\underline{t})\,dx}{p\,q_n^{\mathrm{sep}}(t; \underline{t})},
\end{equation}
with $\sigma^{\mathrm{sep}}(t; \underline{t})=0$ for $t<\underline{t}$.  
We refer to \cref{eq:sigma-rev} as the \emph{separating policy} with support on $[\underline{t}, 1]$, since it arises under full separation when subsidies perfectly sort types.

\subsection{Ironing Back to the Original Constraints}

The revealing schedule $\sigma^{\text{sep}}(\cdot; \underline{t})$ solves the firm’s problem under full separation but ignores the statutory cap $\sigma(t)\le c$. Restoring the cap leaves two cases. If $\sigma^{\text{sep}}(1;\underline{t})\le c$, then $\sigma^{\text{sep}}(\cdot; \underline{t})$ is already feasible. If instead $\sigma^{\text{sep}}(1;\underline{t})>c$, the upper tail must be truncated and high types pooled at the maximum subsidy $s=c$.  

\begin{figure}[h!]
    \centering
    \begin{subfigure}{0.48\textwidth}
        \centering
        \resizebox{\linewidth}{!}{\begin{tikzpicture}[>=stealth, line cap=round, line join=round, x=8cm, y=6cm]
  \pgfmathsetmacro{\c}{0.5}     
  \pgfmathsetmacro{\p}{1.8}     

  \pgfmathsetmacro{\tI}{min(1,\p*\c)}    
  \pgfmathsetmacro{\tc}{min(1,\c)}       
  \pgfmathsetmacro{\to}{(\p*\c)/(1 + \p)} 
  
  \draw[->] (0,0) -- (1.10,0) node[below] {$t$};
  \draw[->] (0,0) -- (0,\c+0.15) node[left] {$s$};

  \draw[dashed] (0,\c) -- (1,\c);
  \draw[dashed] (1,0) -- (1,\c);

  \node[left] at (0,\c) {\scriptsize $c$};
  \node[below] at (\c,0) {\scriptsize $c$};

  \draw[very thick,gray!70!black]
    plot[domain=0:\tI] (\x,{\x/\p});
  \draw[very thick,gray!70!black,dashed]
    plot[domain=\tI:\c+0.05] (\x,{\x/\p}) ;

    \draw[gray]
      plot[domain=0:\tc] (\x,{\c-\x}) ;
    
    \path[fill=red!40,draw=red,pattern=north west lines,pattern color=red]
       (0,0) -- (0,\c) -- (\tc,0) -- cycle;

  \path[fill=blue!35,draw=blue!70!black,pattern=north east lines,pattern color=blue!70!black]
     (0,\c) -- (\tI,\c) -- (\tI,{\tI/\p}) -- (0,0) -- cycle;

  \node[below] at (0,0) {\scriptsize 0};
  \node[below] at (1,0) {\scriptsize 1};

    \draw[very thick, black, smooth] plot file {sigma_data2.txt};

    \draw[very thick, black] (0,0) -- (\to,\to/\p);
    \filldraw[black] ( \to, \to/\p) circle (1.5pt);

\end{tikzpicture}}
        \caption{Feasible without cap: if $\sigma^{\text{sep}}(1;\underline{t}) \le c$, the revealing schedule requires no adjustment.}
        \label{fig:no-need-cap}
    \end{subfigure}\hfill
    \begin{subfigure}{0.48\textwidth}
        \centering
        \resizebox{\linewidth}{!}{\begin{tikzpicture}[>=stealth, line cap=round, line join=round, x=8cm, y=6cm]
  \pgfmathsetmacro{\c}{0.5}     
  \pgfmathsetmacro{\p}{1.0}     

  \pgfmathsetmacro{\tI}{min(1,\p*\c)}    
  \pgfmathsetmacro{\tc}{min(1,\c)}       
  \pgfmathsetmacro{\to}{(\p*\c)/(1 + \p)} 

  \draw[->] (0,0) -- (1.10,0) node[below] {$t$};
  \draw[->] (0,0) -- (0,\c+0.15) node[left] {$s$};

  \draw[dashed] (0,\c) -- (1,\c);
  \draw[dashed] (1,0) -- (1,\c);

  \node[left] at (0,\c) {\scriptsize $c$};
  \node[below] at (\c,0) {\scriptsize $c$};

  \draw[very thick,gray!70!black]
    plot[domain=0:\tI] (\x,{\x/\p});
  \draw[very thick,gray!70!black,dashed]
    plot[domain=\tI:\c+0.05] (\x,{\x/\p}) ;

    \draw[gray]
      plot[domain=0:\tc] (\x,{\c-\x}) ;
    
    \path[fill=red!40,draw=red,pattern=north west lines,pattern color=red]
       (0,0) -- (0,\c) -- (\tc,0) -- cycle;

  \path[fill=blue!35,draw=blue!70!black,pattern=north east lines,pattern color=blue!70!black]
     (0,\c) -- (\tI,\c) -- (\tI,{\tI/\p}) -- (0,0) -- cycle;

  \node[below] at (0,0) {\scriptsize 0};
  \node[below] at (1,0) {\scriptsize 1};

    \draw[very thick, black, smooth] plot file {sigma_data.txt};

    \draw[very thick, black] (0,0) -- (\to,\to/\p);
    \filldraw[black] ( \to, \to/\p) circle (1.5pt);

\end{tikzpicture}}
        \caption{Binding cap: if $\sigma^{\text{sep}}(1;\underline{t}) > c$, the upper branch needs to be truncated so that high types pool at $s=c$.}
        \label{fig:need-cap}
    \end{subfigure}
    \caption{Optimal separating subsidies. \\ \, Parameters: $F(x)=x$, $c=0.5$, $n=10$. Panel (a): $p=1.8$. Panel (b): $p=1$.}

    \label{fig:cap-comparison}
\end{figure}

Pooling at $c$ creates a trade–off. On the one hand, inspection becomes costless, so any deviation that offers $s<c$ is strictly dominated, even if it claims to be the highest type, and is inspected only after free inspection firms are exhausted. On the other hand, congestion dilutes attention: as more types join the pool, tie–breaking reduce each firm’s inspection probability.

Let $q^{\mathrm{pool}}(x)\in(0,1]$ denote the inspection probability of a firm offering the full subsidy $s=c$ when all types in $[x,1]$ do so.%
\footnote{Formally,
\[
q^{\mathrm{pool}}(x) 
= \mathbb{E}_{K}\!\left[\frac{1-(1-\mathbb{E}[t\mid t\ge x])^{K+1}}{(K+1)\,\mathbb{E}[t\mid t\ge x]}\right],
\]
with $K\sim \mathrm{Bin}(n-1,1-F(x))$. The derivation is given in \cref{app:pooling}.} %
In equilibrium, the pooling cutoff $\bar t$ is determined by two conditions. 
First, no type $t>\bar t$ should prefer to deviate from pooling (\textbf{upper IC}). 
Second, the marginal type $\bar t$ must be indifferent between pooling and revealing (\textbf{boundary indifference}). 
These conditions are
\begin{align}
[t-pc]\,q^{\mathrm{pool}}(\bar t) &\;\ge\; \sup_{s\in[0,c]} [t-ps]\,q^{\mathrm{sep}}(\sigma^{-1}(s; \underline{t})), 
&&\forall t\in[\bar t,1], \tag{Upper IC}\label{eq:upperIC}\\
[\,\bar t-pc\,]\,q^{\mathrm{pool}}(\bar t) &\;=\; [\,t_1-p\,\sigma^{\textrm{sep}}(\bar t;\underline{t})\,]\,q^{\mathrm{sep}}(\bar t).
\tag{Boundary indifference}\label{eq:boundary}
\end{align}
Here $\sigma^{-1}(\cdot;\underline{t})$ denotes the inverse of the separating schedule $\sigma^{\mathrm{sep}}(\cdot;\underline{t})$.

The \ref{eq:boundary} condition pin down the marginal type $ \bar t$. \cref{lem:existence-cutoff} in \cref{app:pooling} shows that if $\sigma^{\textrm{sep}}(1; \underline{t})>c$, then a unique cutoff $\bar t$ always exists. The upper IC constraint follows from the increasing-differences property of the firm's payoff.  This cutoff is at the left of the type $t_c$ such that $\sigma^{\text{sep}}(t_c; t_0) = c$ generating a discontinuity in the subsidy policy.

\begin{figure}[h!]
    \centering
    \begin{tikzpicture}[>=stealth, line cap=round, line join=round, x=8cm, y=6cm]
  \pgfmathsetmacro{\c}{0.5}     
  \pgfmathsetmacro{\p}{1.0}     
  \pgfmathsetmacro{\tI}{min(1,\p*\c)}    
  \pgfmathsetmacro{\tc}{min(1,\c)}       
  \pgfmathsetmacro{\to}{(\p*\c)/(1 + \p)} 
  \draw[->] (0,0) -- (1.10,0) node[below] {$t$};
  \draw[->] (0,0) -- (0,\c+0.15) node[left] {$s$};
  \draw[dashed] (0,\c) -- (1,\c);
  \draw[dashed] (1,0) -- (1,\c);
  \node[left] at (0,\c) {\scriptsize $c$};
  \node[below] at (\c,0) {\scriptsize $c$};
  \node[below] at (\to,0) {\scriptsize $\underline{t}$};
  \node[below] at (0.5375,0) {\scriptsize $\overline{t}$};
  
    
    \path[fill=red!40,draw=red,pattern=north west lines,pattern color=red]
       (0,0) -- (0,\c) -- (\tc,0) -- cycle;
  \path[fill=blue!35,draw=blue!70!black,pattern=north east lines,pattern color=blue!70!black]
     (0,\c) -- (\tI,\c) -- (\tI,{\tI/\p}) -- (0,0) -- cycle;
  \node[below] at (0,0) {\scriptsize 0};
  \node[below] at (1,0) {\scriptsize 1};
    \draw[very thick, black, smooth] plot file {sigma_data_trun.txt};
    \draw[very thick, black] (0,0) -- (\to,0);
    
    \filldraw[black] ( \to, \to/\p) circle (1.5pt);
   
    \filldraw[black] ( 0.5375, \c) circle (1.5pt);
    \draw[very thick, black] (0.5375, \c) -- (1,\c);
    \draw[black, dashed] (0.5375, 0.41075555) -- (0.5375, \c) ;
    \draw[black, dashed] (0.5375, 0) -- (0.5375, 0.41075555) ;
    \draw[black, dashed] (\to,0) -- (\to, \to/\p) ;

     \draw[thick, black, fill=white] (0.5375, 0.41075555) circle (1.5pt);
     \draw[thick, black, fill=white] (\to, 0) circle (1.5pt);
\end{tikzpicture}
    \caption{Example of a SIS subsidy policy}
    \label{fig:consumer-optimal}
\end{figure}

We now introduce a family of subsidy policies with a step--increasing--step structure. Each member is indexed by the lowest type that is inspected with positive probability.  

\begin{definition}[Step--increasing--step family]\label{def:optimal-SIS-family}
For any $t_0 \in [\underline{t}, \max\{cp, 1\}]$, define the subsidy policy $\sigma^{*}(\cdot\,;t_0):T\to[0,c]$ by
\[
  \sigma^{*}(t;\underline{t}) \;=\;
  \begin{cases}
    0, & t < t_0, \\[6pt]
    \dfrac{t}{p} \;-\; \dfrac{\int_{t_0}^t q_n^{\mathrm{sep}}(x;t_0)\,dx}{p\,q_n^{\mathrm{sep}}(t;t_0)}, & t \in [t_0,\,t_1], \\[12pt]
    c, & t > t_1,
  \end{cases}
\]
where $q_n^{\mathrm{sep}}(\cdot\,;t_0)$ denotes the inspection probability under separation with boundary $t_0$, and $t_1$ is the unique solution to
\[
  (t_1 - pc)\,q^{\mathrm{pool}}(t_1) \;=\; \int_{t_0}^{t_1} q^{\mathrm{sep}}(x; t_0)\,dx.
\]
Moreover, let $\mu_{t_0}^*$ be a belief system that is consistent with $\sigma^*(\cdot; t_0)$ on path, and off path, satisfies the following: for any subsidy in $(0, t_0/p)$ it assigns belief $\delta_0$, and for any subsidy in $(\sigma^*(t_1; t_0), c)$ it assigns belief $\delta_{t_1}$. In other words, $\mu_0$ interprets an off-path low subsidy as coming from type 0, and at the discontinuity at the top it assigns belief to the cutoff type $t_1$.\footnote{We write $\delta_x$ for the degenerate distribution that places unit mass on type $x$.}
\end{definition}
 
For $n>2$, every element of this family exhibits the step--increasing--step pattern: low types offer no subsidy, intermediate types separate, and high types pool at the full subsidy.%
\footnote{When $n=1$, $q^{\mathrm{sep}}(t;t_0) = \1\{t \geq t_0\}$, so $\sigma^*(t;t_0)$ reduces to a step function: $\sigma^*(t;t_0) = (t_0/p)\,\1\{t \geq t_0\}$.}
Moreover, all members of this family are supported on equilibrium.
\begin{lemma}\label{lem:optimal-SIS-eqm}
    Fix $t_0 \in [\underline{t}, \max\{cp, 1\}]$. The assessment $\big(\sigma^*(\cdot; t_0), DSIP_{\mu_{t_0}^*}, \mu_{t_0}^*\big)$ is an equilibrium.
\end{lemma}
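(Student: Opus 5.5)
We verify the three equilibrium conditions in turn, using the structure of $\sigma^*(\cdot;t_0)$ as a step--increasing--step policy.

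\textbf{Beliefs and the consumer.} Belief consistency holds by construction, since $\mu^*_{t_0}$ is Bayes consistent on path. Consumer optimality then follows from \cref{lem:descending-search}: DSIP$_{\mu^*_{t_0}}$ maximizes expected utility for any belief system. It remains to record what this rule does on and off path.

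Because $\sigma^*(\cdot;t_0)$ is weakly increasing, \cref{cor:sorting-on-path} applies. Firms are inspected in descending subsidy order. Firms at $s=0$, which are types $t<t_0$, are not inspected, since their pool has index $r=u-c/\mathbb{E}[t\mid t<t_0]$. I need to check this is nonpositive; it should follow from $t_0\ge\underline{t}$ and the footnoted restriction on $c$, though this is the step where I would double-check the corridor algebra. A separating type $t\in[t_0,t_1]$ has index $u-(c-\sigma^*(t))/t$. Because $t_0\ge \underline{t}$ and $\sigma^*(t_0)=t_0/p\ge c-t_0$ along the corridor, this index is nonnegative. Pooled firms at $c$ have zero net cost and are inspected first.

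Off path, a subsidy in $(0,t_0/p)$ is read as type $0$ and never inspected. A subsidy in $(\sigma^*(t_1),c)$ is read as type $t_1$, so it is ranked just above the separating types and below the pool.

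From this I compute the attention function:
\begin{itemize}
\item $q(s)=0$ for $s<t_0/p$;
\item $q(\sigma^*(t))=q_n^{\mathrm{sep}}(t;t_0)$ on the separating range, where the formula is unaffected by the pool because pooled firms come first exactly as higher separating types would;
\item $q(c)=q^{\mathrm{pool}}(t_1)$;
\item for off-path $s\in(\sigma^*(t_1),c)$, $q(s)=q_n^{\mathrm{sep}}(t_1;t_0)$, up to ties with other off-path deviators, which have measure zero.
\end{itemize}

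\textbf{Firm optimality.} Define $U(t)=(t-p\sigma^*(t))q(\sigma^*(t))$. Each deviation class is handled separately.

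\begin{enumerate}
\item \emph{Separating region.} Using the closed form of $\sigma^*$ on $[t_0,t_1]$, $U(t)=\int_{t_0}^t q_n^{\mathrm{sep}}(x;t_0)\,dx$. The payoff from mimicking $\tilde t$ is $\pi(\tilde t;t)=U(\tilde t)+(t-\tilde t)q_n^{\mathrm{sep}}(\tilde t)$. Therefore
\[
U(t)-\pi(\tilde t;t)=\int_{\tilde t}^{t}\big(q_n^{\mathrm{sep}}(x)-q_n^{\mathrm{sep}}(\tilde t)\big)dx\ge 0,
\]
because $q_n^{\mathrm{sep}}$ is increasing. This is the standard envelope/monotonicity argument.

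\item \emph{Pooling region.} The definition of $t_1$ gives $(t_1-pc)q^{\mathrm{pool}}(t_1)=U(t_1)$, which is boundary indifference. For the comparison between pooling and separating, the pooling payoff has slope $q^{\mathrm{pool}}(t_1)$ in $t$. The best separating payoff has slope $q_n^{\mathrm{sep}}(\tilde t)\le q_n^{\mathrm{sep}}(t_1)$. So I need $q^{\mathrm{pool}}(t_1)\ge q^{\mathrm{sep}}_n(t_1;t_0)$, meaning the pool receives at least as much attention as the top separating type. This holds because pooled firms are inspected before every separating firm. Single crossing then yields Upper IC for $t>t_1$, and the reverse inequality for $t<t_1$, so types below $t_1$ do not wish to pool.

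\item \emph{Off-path deviations.} A deviation into $(\sigma^*(t_1),c)$ earns attention $q_n^{\mathrm{sep}}(t_1)$ while paying more than $\sigma^*(t_1)$. It is therefore dominated by mimicking $t_1$. A deviation into $(0,t_0/p)$ earns zero attention. Types below $t_0$ get $0$ in equilibrium. Any mimicry of $t\ge t_0$ by such a type yields at most $(t-t_0)q\le 0$, because $\sigma^*(\tilde t)\ge t_0/p$ on the separating range; this follows since $U\ge0$ and, at $t_0$, $t_0-p\sigma^*=0$ with $\sigma^*$ increasing. Pooling is also unprofitable for them, since $t<t_0\le t_1$ by single crossing. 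Conversely, types $t\ge t_0$ earn $U(t)\ge 0$, so staying at $0$, which gives attention $0$, is not better.
\end{enumerate}

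\textbf{Main obstacle.} I expect the hard part to be the pooling step: showing existence and uniqueness of $t_1$ with $t_1\le$ the type $t_c$ at which the separating schedule reaches $c$ (cf.\ \cref{lem:existence-cutoff}), and establishing the attention comparison $q^{\mathrm{pool}}(t_1)\ge q^{\mathrm{sep}}_n(t_1;t_0)$ from the binomial formulas. I would first handle this directly by coupling the two inspection events: a pooled firm is reached whenever every rival pooled firm scheduled before it fails. That event contains the event that all rivals with types above $t_1$ fail, which is exactly what $q^{\mathrm{sep}}_n(t_1)$ requires.
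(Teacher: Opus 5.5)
\textbf{Gap: whether the zero-subsidy pool is inspected.} The step you flagged is a genuine gap, and it cannot be closed under the stated hypotheses. Non-inspection of the zero-subsidy pool means $r_{\mu^*_{t_0}}(0)\le 0$, i.e.\ $u\,\mathbb{E}[t\mid t<t_0]\le c$. This is an \emph{upper} bound on $t_0$. The lower bound $t_0\ge\underline t$ points the wrong way, and the footnote $c<(1+pu)/p$ does not supply it.

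The condition can fail inside the stated range. Take $u=1$, $F$ uniform, $n=2$, $c=0.3$, $p=3$, so $\underline t=0.225$ and $cp=0.9$, and set $t_0=0.8$. The cap does not bind, since $\sigma^{\mathrm{sep}}(1;0.8)\approx 0.273<c$.
\begin{itemize}
\item Then $\mathbb{E}[t\mid t<0.8]=0.4>c$, so the zero pool has index $1-0.3/0.4>0$.
\item DSIP therefore inspects the zero pool after higher-subsidy rivals fail, giving $q(0)=0.66>0$.
\item Your bullet ``$q(s)=0$ for $s<t_0/p$'' fails at $s=0$.
\item The last sentence of your item~3 fails too. Type $t_0$ earns exactly $(t_0-p\cdot t_0/p)\,q=0$ on the branch, but earns $t_0\,q(0)>0$ by dropping to $s=0$. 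Types just above $t_0$ also deviate, since their rents $\int_{t_0}^t q^{\mathrm{sep}}$ are near zero.
\end{itemize}
So for such $t_0$ the assessment is not an equilibrium. The lemma needs the additional hypothesis $u\,\mathbb{E}[t\mid t<t_0]\le c$. This holds, for example, whenever $t_0\le c/u$. Under the paper's implicit normalization $u=1$, that covers $t_0=\underline t=pc/(1+p)$, which is the case used in \cref{thm:uniqueness}.

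The paper's own proof has the same hole. It writes $\pi^*(0;t)=t\,q(0)\ge 0$ without showing $q(0)=0$, and it never checks inspected types deviating down to $s=0$.

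\textbf{The rest of your argument.} With that hypothesis added, your argument is correct and more complete than the paper's.
\begin{itemize}
\item The paper treats types below $t_0$ as you do. On the separating branch, however, it only invokes the local envelope condition plus the claim that downward deviations ``reduce rank,'' which ignores the cost saving. Your identity $U(t)-\pi(\tilde t;t)=\int_{\tilde t}^{t}\bigl(q^{\mathrm{sep}}(x)-q^{\mathrm{sep}}(\tilde t)\bigr)\,dx\ge 0$ is the actual global IC argument.
\item The paper never verifies the pooling block or deviations into $(\sigma^*(t_1),c)$. Your boundary-indifference and single-crossing argument does.
\item Your coupling proves $q^{\mathrm{pool}}(t_1)\ge q^{\mathrm{sep}}(t_1;t_0)$: the event that all pooled rivals fail is contained in the event that all pooled rivals scheduled ahead fail. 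The paper only asserts this inequality inside \cref{lem:existence-cutoff}.
\end{itemize}
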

Hence the equilibrium set is large and nontrivial.\footnote{The SIS family does not exhaust the set of equilibria. In particular, there also exist equilibria with interior pooling. In \cref{subsec-app:property-eqm} we illustrate additional properties that equilibrium subsidy policies may satisfy.} %
Among this family of equilibria, we single out the the equilibrium $\big(\sigma^*(\cdot; \underline{t}), DSIP_{\mu_{\underline{t}}^*}, \mu_{\underline{t}}^*\big)$, which induces inspection of all types down to the lowest possible cutoff $\underline{t}$. We refer to this outcome as the \emph{reasonable equilibrium}. It is efficient because it maximizes the set of types that can be inspected: each inspected type offers a subsidy that is profitable for the firm and simultaneously optimal for the consumer to inspect. As a result, this equilibrium generates the most informative allocation of attention.

\subsection{Equilibrium Selection}\label{subsec:refinement}

We now apply a forward–induction refinement, following \citet{cho1987signaling, banks1987equilibrium}, to narrow the set of equilibria. Off–path beliefs must assign zero probability to types for whom a deviation is dominated. We adopt the equilibrium–dominance restriction. Our main result is that the unique equilibrium outcome surviving this refinement is precisely the one induced by the reasonable equilibrium.

\begin{theorem}\label{thm:uniqueness}
For $n \geq 2$, any equilibrium that fails to induce the outcome of $\bigl(\sigma^*(\cdot;\underline{t}), DSIP_{\mu^*_{\underline{t}}}, \mu^*_{\underline{t}}\bigr)$ fails the equilibrium–dominance refinement.
\end{theorem}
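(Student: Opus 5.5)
We argue by contradiction. Fix an equilibrium $(\sigma,\iota,\mu)$ whose outcome differs from that of $\bigl(\sigma^*(\cdot;\underline{t}),DSIP_{\mu^*_{\underline t}},\mu^*_{\underline t}\bigr)$, and show it admits an off-path subsidy $s$ and a type $t\in D_{\sigma,\iota}(s)\cap\operatorname{supp}\mu(s)$.

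\textbf{Structure of any equilibrium.} By \cref{lem:subsidy-policy-weakly-increasing}, $\sigma$ is increasing and $q_{\sigma,\iota}$ is increasing on path. By \cref{cor:sorting-on-path}, the consumer follows DSIR on path. Hence the equilibrium is described by three objects:
\begin{itemize}[nosep]
\item a lowest inspected type $t_0$;
\item a collection of separating intervals and pooling intervals (jumps or flats of $\sigma$) above $t_0$;
\item possibly a top pool at $c$.
\end{itemize}
Interim payoffs $U(t)=\max_s (t-ps)q(s)$ are convex with $U'(t)=q(\sigma(t))$. Because inspection probabilities in any pooling/separating partition are pinned down by the partition (as in \cref{app:inspection-prob} and \cref{app:pooling}), the outcome is determined by the partition alone. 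The plan is therefore to show three things: (a) $t_0=\underline t$; (b) there is no interior pooling interval; (c) the top pool begins exactly at the cutoff $t_1$ solving the equation in \cref{def:optimal-SIS-family}. The envelope formula and the ODE \eqref{eq:ODE-raw} then force $\sigma=\sigma^*(\cdot;\underline t)$ on the inspected region, giving the same outcome.

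\textbf{Step (a): the participation cutoff.} Suppose $t_0>\underline t$. Types in $(\underline t,t_0)$ earn zero. Consider an off-path subsidy $s$ slightly above $\underline s=c/(1+p)$ but below any on-path inspected subsidy; such $s$ exists because, absent an interior pool at the bottom, on-path subsidies near $t_0$ exceed $t_0/p$-adjusted levels. For types $t<ps$, the deviation yields a nonpositive payoff under every continuation, so $(t,s)$ is dominated by the equilibrium value. Hence $\operatorname{supp}\mu(s)\subseteq[ps,1]$, which gives $\tau_\mu(s)\ge ps>c-s$ whenever $s>\underline s$. The consumer therefore must inspect the deviator with positive probability, at least after the on-path firms are exhausted. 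Types just above $ps$ but below $t_0$ then profit strictly, so if $\mu(s)$ is consistent with the refinement, the deviation is profitable and the equilibrium breaks. The same argument applies if some on-path subsidy region makes the lowest inspected type exceed $\underline t$ through a pool: the deviation sits in the gap below that pool.

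\textbf{Steps (b) and (c): no interior pools, and the correct top cutoff.} For an interior pool on $[a,b]$ with $\sigma\equiv\hat s<c$, sustained by a jump, take off-path $s=\hat s+\varepsilon$ (or a subsidy in the gap above the jump). By single crossing (the inequality $t'(q'-q)\ge p(q's'-qs)\ge t(q'-q)$ from \cref{sec:sorting}), if some type in the pool weakly gains from $s$ under a continuation, every higher type gains strictly. So every $t<b$ is dominated by $b$ at $s$, and $\mu(s)=\delta_{b}$ or supported on types $\ge b$. Under such beliefs, DSIR ranks the deviator ahead of the whole pool, which raises its attention discretely at negligible extra cost. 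Type $b$ therefore deviates, a contradiction.

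For the top pool, if it starts at $t_1'\neq t_1$, boundary indifference fails. When $t_1'<t_1$, pick $s$ in the gap just below $c$; the types willing to deviate form an interval near the marginal type, and the same single-crossing domination restricts beliefs to the top of the separating region, making the deviation profitable. When $t_1'>t_1$, pick $s$ in the corresponding gap; the dominated-type argument shows that types just below $t_1'$ strictly prefer pooling at $c$, so they cannot be separating, again a contradiction.

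\textbf{Main obstacle.} The hardest part is the domination step: establishing that every type below the marginal one is dominated at $s$, uniformly over all admissible continuations $\mathcal C(\sigma)$, including those in which $q_{\sigma,\iota'}(s)=0$. Here I would use the fact that $\pi_{\sigma,\iota'}(s;t)-U(t)$ is strictly increasing in $t$ whenever $q_{\sigma,\iota'}(s)$ exceeds the on-path attention of lower types, which is exactly the single-crossing inequality.
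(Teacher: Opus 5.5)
Your architecture matches the paper's. Step (a) is \cref{lem-app:lower-boundary}; step (b) is \cref{lem-app:no-subcap-pooling} plus the belief-concentration step of \cref{lem:EDR-characterization}; step (c) is \cref{lem:existence-cutoff}. Your envelope formula replaces the first-order condition of \cref{lem-app:revealing-ODE}, and it also rules out jumps inside the separating region, since $p\,\sigma(t)\,q^{\mathrm{sep}}(t)=t\,q^{\mathrm{sep}}(t)-\int_{\underline t}^t q^{\mathrm{sep}}$ is continuous.

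The gap is the step you flag yourself, and single crossing cannot close it. Under any admissible continuation with $q_{\sigma,\iota'}(s)=0$, every type with zero equilibrium payoff is exactly indifferent to deviating, and no type gains strictly. So for such a type neither clause of $D_{\sigma,\iota}(s)$ holds. This bites twice.
\begin{itemize}
\item In (a), every $t<ps$ lies below $t_0$ and earns zero. Your assertion that $(t,s)$ is ``dominated by the equilibrium value'' is therefore false, because that clause requires a strict inequality. Beliefs $\delta_0$ at $s$ are not excluded by the definition as written.
\item In (b), suppose $[a,b]$ is the lowest inspected block. Then $a=p\hat s$, since otherwise types just below $a$ would mimic $\hat s$. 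So $a$ earns zero and is not dominated by $b$. Beliefs $\delta_a$ at $s=\hat s+\varepsilon$ rank the deviator below the pool and kill the deviation.
\end{itemize}
Your inequality does handle every type with $U(t)>0$: a weak gain for such a type forces $q_{\sigma,\iota'}(s)>q_{\sigma,\iota}(\hat s)$, and then $b$ gains strictly. For the zero-payoff types, the paper relies on \cref{lem-app:no-negative-margin}. That lemma removes every $t<ps$ from $\operatorname{supp}\mu(s)$ because its deviation payoff is nonpositive under every continuation and strictly negative whenever $s$ is inspected. This is a weak-dominance convention, not a consequence of the stated definition. You need to invoke it explicitly; since $a=p\hat s<ps$, it covers (b) as well.

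There are also four smaller repairs.
\begin{itemize}
\item In (b) you assume the pool is followed by a gap of off-path subsidies. Prove it as in \cref{lem-app:subcap-pool-disc}: if subsidies just above $\hat s$ were on path, attention would jump at $\hat s^{+}$ by $q^{\mathrm{sep}}(b)\bigl(1-\mathbb{E}[q_{\mathrm{pool}}(K)]\bigr)>0$, and $b$ would deviate on path. This is where $n\ge 2$ enters.
\item In (a), the existence of an off-path $s\in(\underline s,s_0)$ needs an argument. First show $U(t_0)=0$, so $s_0=t_0/p>\underline s$. The types just below $t_0$ then either separate, in which case their subsidies lie below $c-t<\underline s$, or pool at some $\hat s'<s_0$, in which case $(\max\{\hat s',\underline s\},s_0)$ is off path.
\item You only treat $t_0>\underline t$. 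The case $t_0<\underline t$ follows from (b): a revealed type below $\underline t$ cannot satisfy both $\sigma(t)\ge c-t$ and $\sigma(t)\le t/p$. It could therefore be inspected only inside a pool, and the top pool lies above $pc>\underline t$.
\item Step (c) needs no refinement. Boundary indifference at the bottom of the top pool holds in every equilibrium by continuity of the convex function $U$, and $H(\cdot;\underline t)$ has a unique root, so $t_1'=t_1$ follows immediately. Your off-path case analysis is superfluous. What you describe for $t_1'>t_1$ is an on-path incentive violation, not a domination argument.
\end{itemize}
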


To establish the result, consider \emph{optimistic beliefs at $s$}: the belief system where, after an off–path subsidy $s$, the posterior assigns probability one to type-$1$, while every other off–path $s'\neq s$ is assigned probability one to type-$0$ and never inspected. On–path beliefs remain consistent with $\sigma$, so this constitutes an admissible continuation. To highlight the role of $s$ we denote the induced attention by $q_{\sigma, s}$. Under such beliefs, the inspection index reduces to \( r_{\mu'}(s)=u-(c-s)\).
Hence any $s>c-u$ is inspected with positive probability, although $q_{\sigma,s}(s)<1$ whenever there exists an on-path $s'$ with $s'>s$ and $(c-s')/(c-s)<\tau_{\mu}(s')$.%
\footnote{If $c > u$, then firms need to offer a high-enough subsidy to induce inspection. Thus, even under optimistic beliefs, some subsidies may never be inspected as the net search cost $c - s$ is still higher than the benefit $u$.}

\begin{lemma}\label{lem:EDR-characterization}
Fix an equilibrium $(\sigma,\iota,\mu)$. Assume there exists an interior pooling at $s'$, an off–path subsidy $s$ and a type $t$ with $\sigma(t) = s'$ such that, under optimistic beliefs at $s$, $t$ strictly prefers deviating to $s$. That is, 
\[
    (t-ps)\,q_{\sigma,s}(s)\;>\;\pi_{\sigma,\iota}(\sigma(t);t).
\]
Then, $(\sigma,\iota,\mu)$ fails the equilibrium-dominance refinement.
\end{lemma}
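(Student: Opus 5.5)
The plan is to exhibit an off--path subsidy $s$ and a type in $\operatorname{supp}(\mu(s))$ that lies in $D_{\sigma,\iota}(s)$. If $\mu(s)$ places no weight on dominated types, we argue that $\mu(s)$ must put substantial weight on deviating types, and that this forces the deviation to be profitable, which contradicts equilibrium. Fix the interior pool at $s'$, meaning a set of types $P=\sigma^{-1}(s')$ of positive mass with $s'<c$. By Lemma~\ref{lem:subsidy-policy-weakly-increasing}, $P$ is an interval $[a,b]$. By hypothesis there is some $t\in P$ with $(t-ps)\,q_{\sigma,s}(s)>\pi_{\sigma,\iota}(s';t)$.

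The first step is to show that optimistic beliefs at $s$ are the best case for every type. Among all admissible continuations $(\iota',\mu')\in\mathcal C(\sigma)$, the attention $q_{\sigma,\iota'}(s)$ is largest when $\tau_{\mu'}(s)=1$ and every other off--path subsidy is treated as type $0$. The reason is that, by Lemma~\ref{lem:descending-search}, the consumer orders firms by the index $r_{\mu'}(s)=u-(c-s)/\tau_{\mu'}(s)$, which is increasing in $\tau$. Raising $\tau$ therefore moves $s$ weakly earlier in the order, and pessimism about other off--path subsidies only removes competitors. It follows that $\max_{(\iota',\mu')}\pi_{\sigma,\iota'}(s;\cdot)=(\cdot-ps)\,q_{\sigma,s}(s)$, so the set of types for which the deviation is not dominated by the equilibrium value is exactly $\{x:(x-ps)q_{\sigma,s}(s)\ge \pi_{\sigma,\iota}(\sigma(x);x)\}$.

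The second step uses single crossing to identify which types are dominated. Both $\pi_{\sigma,\iota}(\sigma(x);x)$ and $(x-ps)q$ are affine in $x$, with slopes $q_{\sigma,\iota}(\sigma(x))$ and $q$ respectively, and the equilibrium payoff is convex in $x$ because it is an upper envelope. Comparing slopes as in the single-crossing argument of Section~\ref{sec:sorting}, one shows that any type $x$ with a strictly smaller deviation gain than some $x'$, across all continuations, is dominated by $x'$ at $s$. Choosing $s$ slightly above $s'$, the types in $P$ share the same $q(s')$. Since $q_{\sigma,s}(s)>q(s')$ (being inspected ahead of the pool is exactly what makes the deviation attractive), the gain $(x-ps)q_{\sigma,s}(s)-(x-ps')q(s')$ is strictly increasing in $x$ on $P$. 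Hence every $x\in P$ below the top type $b$, and every type below $P$, is dominated by $b$, or is dominated by the equilibrium value. Types above $P$ have equilibrium attention at least $q_{\sigma,s}(s)$ if they sit at a higher subsidy, so they are dominated by the equilibrium value.

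The third step is to obtain the contradiction. The refinement then forces $\operatorname{supp}\mu(s)$ into the undominated set, which is near $b$ and in particular contains types at least as good as $t$. In that case $\tau_\mu(s)\ge \tau_\mu(s')$ and the index strictly exceeds that of $s'$. In the actual equilibrium, the consumer then inspects $s$ before the pool at $s'$, with attention close to $q_{\sigma,s}(s)$. Type $t$ (or type $b$, by monotonicity of the gain) then strictly gains by deviating, contradicting equilibrium optimality. So either $\mu(s)$ loads on a dominated type, or the assessment is not an equilibrium, and in either case the refinement fails.

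The main obstacle is the second step. I must choose $s$ so that it is off--path and so that the strict gain for $t$ propagates to a neighbourhood of $b$, and I must handle the cases where the posterior near $b$ gives an index still below that of higher on--path subsidies. I would first try taking $s\downarrow s'$ and using continuity of $q_{\sigma,s}(s)$ from the right, together with the fact that the upper-tail deviators have type above $\mathbb E[t\mid t\in P]$, so they jump ahead of the pool. That jump alone produces the discrete attention gain $q_{\sigma,s}(s)-q(s')>0$ from breaking the tie.
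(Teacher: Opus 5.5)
Your three steps follow the paper's own proof: optimistic beliefs at $s$ give $s$ the most attention it can get, increasing differences push any refinement-compatible $\mu(s)$ to the top $b$ of the pool, and a top-of-pool posterior ranks $s$ ahead of $s'$ and produces a profitable deviation. The genuine gap is in your third step, where you assert that the attention at $s$ is ``close to $q_{\sigma,s}(s)$''. The hypothesis is an inequality for $q_{\sigma,s}(s)$, which is computed with $\tau=1$ and index $u-(c-s)$. That index can leapfrog on-path subsidies $\hat s>s$ just above the gap: $u-(c-\hat s)/\hat\tau<u-(c-s)$ whenever $\hat s-s<(1-\hat\tau)(c-s)$. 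Once $\tau_\mu(s)\approx b<1$, those firms move back ahead of $s$. The actual attention is then only guaranteed to be at least $q^{\mathrm{sep}}(b)=\bigl(1-\int_b^1 x\,dF(x)\bigr)^{n-1}$, with equality when $\mu(s)=\delta_b$, and this can be strictly below $q_{\sigma,s}(s)$. So the strict inequality for $t$ does not transfer. (The paper's proof makes the same jump from $r_\mu(s)>r_\mu(s')$ to a profitable deviation.) The same leapfrogging refutes your claim that types above $P$ are dominated by the equilibrium value: types just above $b$ can strictly gain by dropping to $s$ under optimism, getting more attention at a lower subsidy. That error is harmless, since such types only raise $\tau_\mu(s)$, but the surviving support should be described as lying in $[b,1]$, not ``near $b$''.

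The repair is the one you gesture at in your last paragraph, but it replaces the hypothesis rather than using it. Take $s$ in the off-path gap $(s',s'+\varepsilon)$ given by \cref{lem-app:subcap-pool-disc}. Your within-pool domination claim does hold there, but it must be checked in every continuation, not only the optimistic one. For $s>s'$, a continuation with $q'\le q(s')$ gives pooled types with $x>ps'$ a strictly negative gain, so the implication is vacuous there. Any $\mu(s)$ supported on $[b,1]$ ranks $s$ ahead of the pool and of every lower subsidy, because $s>s'$ and $\tau_\mu(s)\ge b>\mathbb{E}[t\mid t\in P]$. Hence $q_{\sigma,\iota}(s)\ge q^{\mathrm{sep}}(b)$, while $q_{\sigma,\iota}(s')<q^{\mathrm{sep}}(b)$ because of the tie-breaking factor. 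The top pooled type (or types near it) therefore gains at least $(b-ps)q^{\mathrm{sep}}(b)-(b-ps')q(s')\to(b-ps')\bigl(q^{\mathrm{sep}}(b)-q(s')\bigr)>0$ as $s\downarrow s'$, contradicting equilibrium. This is exactly the mechanism of \cref{lem-app:no-subcap-pooling}.

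This argument needs $q(s')>0$, both for the gap and for the tie-breaking loss, and it needs $b>ps'$. The lemma's hypothesis does not supply these: it is compatible with an uninspected pool and $\pi_{\sigma,\iota}(\sigma(t);t)=0$, and in that case your argument has nothing to work with. Worse, under the paper's definition, a continuation with $q_{\sigma,\iota'}(s)=0$ leaves all zero-profit pooled types mutually undominated, so $\mu(s)$ need not concentrate at $b$ at all. You must either restrict the statement to inspected pools or treat that case separately.
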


This lemma provides a simple test for equilibrium selection. We now sketch the proof of uniqueness. The strategy is to rule out, step by step, any equilibrium configuration that fails the equilibrium–dominance refinement under optimistic beliefs. First, we show that any off–path subsidy $s$ must receive at least as much attention as the lowest on–path subsidy above it would generate under full separation. Using this fact together with the test in \cref{lem:EDR-characterization}, we show that no equilibrium with interior pooling in $[0,c)$ that attracts positive attention can survive. Moreover, a discontinuity can survive only if it occurs at the cap, since any interior jump with attention on both sides would trigger a profitable deviation under optimistic beliefs. Thus the only equilibrium candidates that remain are SIS equilibria. We then establish that the lowest inspected type must coincide with $\underline{t}$, which delivers the uniqueness result.

Fix an equilibrium $(\sigma,\iota,\mu)$. Suppose there exists subsidy levels $s'<s$, where $s'$ is on–path with $q_{\sigma,\iota}(s')>0$ and $s$ is off–path. Under optimistic beliefs at $s$, the posterior puts maximal weight on match quality. Hence no lower subsidy, on–path or off–path, can be inspected before $s$. Since, by assumption, some types choose higher on–path subsidies, $s$ must receive at least the attention given to the lowest such type under full separation. Formally,
\[
q_{\sigma,s}(s)\;\ge\;\inf_{\{t:\,\sigma(t)>s\}}q^{\mathrm{sep}}(t;0).
\]
In this worst case, all such higher types also generate a higher inspection index and thus are inspected before.

Now assume that $(\sigma,\iota,\mu)$ has an interior pooling level $s\in(0,c)$ that is inspected with positive probability, i.e.\ $q_{\sigma,\iota}(s)>0$. Let $t^*$ be the highest type in the pool, with $t^*<1$. Then $\sigma$ cannot be right–continuous at $t^*$.\footnote{Formally, let $T_\sigma(s)=\{t\in T:\sigma(t)=s\}$, and say that there is pooling at $s$ if $T_\sigma(s)$ has positive $F$–measure. In any equilibrium with such pooling, $\sigma$ must be discontinuous at the top of the pool. See \cref{lem-app:no-subcap-pooling} for the general argument.} In other words, there exists $\varepsilon>0$ such that every $s'\in(s,s+\varepsilon)$ is off–path. And since $t^*<1$, there must also exist some higher on-path subsidy. Thus, using the previous logic, under optimistic beliefs for any $s'$, $q_{\sigma, s'}(s') \geq q^\textrm{sep}(t^*)$.
By \cref{thm:sorting}, inspection probability is increasing in the subsidy. Because there is a mass at $s$, types just above the pool receive strictly more attention: there exists $\eta>0$ such that
\[
\liminf_{\varepsilon\downarrow 0}\big[q_{\sigma,\iota}\big(\sigma(t^*+\varepsilon)\big)-q_{\sigma,\iota}(s)\big]\;\ge\;\eta.
\]
Moreover, the limit from the right can be identified with the separation benchmark. For any type, and in particular for those just above $t^*$, the worst attention case is that all higher types are inspected with probability one. In that case the inspection probability for such a type coincides with the full–separation benchmark $q^{\textrm{sep}}(\cdot;0)$. Since $q^{\textrm{sep}}$ is continuous, taking the limit as $\varepsilon\downarrow 0$ yields
\[
\liminf_{\varepsilon\downarrow 0} q_{\sigma,\iota}\big(\sigma(t^*+\varepsilon)\big)=q^{\textrm{sep}}(t^*).
\]
Therefore, under optimistic beliefs, any off-path $s'$ then secures attention at least as high as $q^{\textrm{sep}}(t^*)$ which is bounded above from the on-path attention $q_{\sigma,\iota}(s)$, and thus 
\[
q_{\sigma,s'}(s')-q_{\sigma,\iota}(s)\;\ge\;\tfrac\eta2 \;>\;0.
\]
Next, note that type $t^*$ must earn positive profit; otherwise, lower types in the pool would prefer to deviate to no subsidy. Hence $t^*-ps>0$. For any $s'$ close to $s$, comparison of payoffs yields
\[
\begin{split}
    (t^*-ps')\,q_{\sigma,s'}(s')-(t^*-ps)\,q_{\sigma,\iota}(s)
    &= (t^*-ps)\big[q_{\sigma,s'}(s')-q_{\sigma,\iota}(s)\big]-p(s'-s)\,q_{\sigma,s'}(s') \\
    &\ge\;(t^*-ps)\,\tfrac\eta2\;-\;p(s'-s),
\end{split}
\]
since $0<q_{\sigma,s'}(s')\le 1$. Choosing $s'$ sufficiently close to $s$ (for example, $s' = s+(t^*-ps)\eta/(4p)$) makes the payoff strictly higher. Thus $t^*$ has a profitable deviation under optimistic beliefs. We conclude, by \cref{lem:EDR-characterization}, that any equilibrium with a positively inspected pooling below $c$ fails the equilibrium–dominance refinement.

A similar argument rules out any equilibrium with a discontinuity inside an inspected region that does not reach the full subsidy. At such a point, the ``jumping” type has a profitable deviation by lowering its subsidy: this both raises its payoff and, under optimistic beliefs, leaves its inspection probability at least unchanged. Since both effects increase profit, the deviation is strictly profitable. The case of a discontinuity at the cap $s=c$ is different. Types pooling at the top are \emph{always} inspected first. If a type in this pool lowers its subsidy, it is inspected only after all top–pool types fail to match, even under optimistic beliefs. In this case, lowering the subsidy strictly reduces inspection probability and may lower profit.

Therefore any equilibrium with an interior pool or an interior discontinuity fails the equilibrium–dominance refinement. The only remaining candidates are the step-increasing-step policies (see \cref{def:optimal-SIS-family}). To conclude, we argue that if the lowest inspected type is $t_0>\underline{t}$, then under optimistic beliefs there is a profitable deviation. Pick $\varepsilon>0$ so that and $t':=t_0-\varepsilon>\underline{t}$. Under SIS, $t'$ offers no subsidy, and either it is part of a pooling region with positive inspection, in which our previous argument shows there is a profitable deviation, or it is part of a pool that is never inspected, i.e. $q_{\sigma,\iota}(0)=0$. We focus on this second case. Consider the off–path deviation $s':=\underline{t}/p$. Then
\[
t'-ps'\;=\;t'-\underline{t}\;>\;0,
\]
and by construction we also have $q_{\sigma,s'}(s')\geq q^\textrm{sep}(\underline{t}, 0) >0$. Thus $t'$ has a profitable deviation under optimistic beliefs, so the putative equilibrium fails the equilibrium–dominance refinement.

Together, these lemmas imply that any equilibrium surviving refinement must induce an outcome that coincides with the outcome of the reasonable SIS: no inspection below the participation cutoff, strict revelation on an interval, and a single pooling block at the cap. The bottom cutoff is pinned down by participation at $\underline{t}$, and the top cutoff by boundary indifference at $c$. By construction, this equilibrium is also the most informative: it has the largest separating region among all equilibria and induces participation from every firm type that can secure positive inspection. Any remaining differences appear only at the boundaries or within the no–inspection region.

\section{Welfare and Comparative Statics}\label{sec:welfare}

We now compute welfare under the reasonable equilibrium, decomposing it into consumer surplus, producer surplus, and total efficiency.  
We then establish comparative statics with respect to the primitives $(p,c,n)$.

The next lemma summarizes key properties of the lower and upper cutoff, the attention function, and the separating subsidy policy. 

\begin{lemma}\label{lem:SIS-properties}
The following properties hold:
\begin{enumerate}[label = (\roman*)]

    \item \textbf{Lower cutoff.}  
    The lower cutoff is given by $\underline{t}(p,c) = pc/(1+p)$. It is strictly increasing in both $c$ and $p$, and independent of $n$. 

    \item \textbf{Attention function. }  
    For any fixed $t_0$, the separating attention function $q_n^{\mathrm{sep}}(t; t_0)$ is strictly decreasing in $n$ for all $t \in (t_0,1)$. Moreover, for each $n$ and $t$, $q_n^{\mathrm{sep}}(t;\cdot)$ is decreasing. Therefore, $q_n^{\mathrm{sep}}(\cdot;\underline{t})$ is decreasing in $c$ and $p$.

    \item \textbf{Separating policy.}  
    For all $t \geq t_0'$, $\sigma_n^{\mathrm{sep}}(t; t_0') > \sigma_n^{\mathrm{sep}}(t; t_0)$ with $t_0' > t_0$.  
    In addition, $\sigma_{n+1}^{\mathrm{sep}}(t; t_0) > \sigma_n^{\mathrm{sep}}(t; t_0)$ for all $t > t_0$. We have that $\sigma_n^{\mathrm{sep}}(t; \underline{t})$ is increasing in $c$ and decreasing in $p$.

    \item \textbf{Upper cutoff.}  
    The upper cutoff $\bar t$ is strictly increasing in both $p$ and $n$.

\end{enumerate}
\end{lemma}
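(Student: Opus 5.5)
Parts (i)–(iii) are direct calculations from the closed forms, and part (iv) is an implicit-function argument on the boundary-indifference equation; I expect (iv) to be the main difficulty. For (i), $\underline t=pc/(1+p)$ comes from intersecting the participation bound $s=t/p$ with the inspect-worthiness bound $s=c-t$. Differentiating gives $\partial_c\underline t=p/(1+p)>0$ and $\partial_p\underline t=c/(1+p)^2>0$, and $n$ does not appear. For (ii), write $G(t):=1-\int_t^1 x\,dF(x)$, so that, by Appendix~\ref{app:inspection-prob}, $q_n^{\mathrm{sep}}(t;t_0)=G(t)^{n-1}\mathbf 1\{t\ge t_0\}$. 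Full support gives $G(t)\in(0,1)$ for $t<1$, so $G(t)^{n-1}$ is strictly decreasing in $n$ on $(t_0,1)$. The indicator is decreasing in $t_0$. Composing with (i) yields monotonicity in $c$ and $p$.

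For (iii), I would rewrite \eqref{eq:sigma-rev} as $p\,\sigma_n^{\mathrm{sep}}(t;t_0)=t-\int_{t_0}^t (G(x)/G(t))^{n-1}dx$, since the indicators are all $1$ on $[t_0,t]$. Raising $t_0$ to $t_0'\le t$ shrinks the integration range of a strictly positive integrand, which strictly raises $\sigma$. Because $G$ is strictly increasing, $G(x)/G(t)<1$ for $x<t$, so each ratio power is strictly decreasing in $n$; the integral therefore falls and $\sigma_{n+1}^{\mathrm{sep}}>\sigma_n^{\mathrm{sep}}$ for $t>t_0$. Monotonicity in $c$ follows because $c$ enters only through $t_0=\underline t(p,c)$, which is increasing by (i). For $p$, set $N(t;t_0):=t-\int_{t_0}^t(G(x)/G(t))^{n-1}dx$. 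Then
\[
\partial_p\sigma=-\frac{N}{p^2}+\frac1p\Big(\frac{G(\underline t)}{G(t)}\Big)^{n-1}\frac{c}{(1+p)^2}.
\]
Since the integrand is at most $1$, we have $N\ge \underline t=pc/(1+p)$, and the ratio in the second term is at most $1$. Hence $\partial_p\sigma\le \frac{c}{p(1+p)}\big(\frac1{1+p}-1\big)<0$.

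For (iv), I would define $\Phi(x;p,n):=(x-pc)\,q^{\mathrm{pool}}(x)-\int_{\underline t}^{x}q_n^{\mathrm{sep}}(y;\underline t)\,dy$, so that $\bar t$ is the unique root from \cref{lem:existence-cutoff}. From that lemma's existence argument, $\Phi<0$ just left of the root and $\Phi>0$ to the right, so $\Phi$ crosses zero from below there. By the implicit function theorem, or simply by the sign change, it then suffices to show that $\Phi$ is strictly decreasing in $p$ and in $n$ at fixed $x$. For $p$, the direct effect $-c\,q^{\mathrm{pool}}(x)$ is negative. The indirect effect through $\underline t$ adds $q^{\mathrm{sep}}(\underline t)\,\partial_p\underline t=G(\underline t)^{n-1}c/(1+p)^2$. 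I would try to dominate this by $c\,q^{\mathrm{pool}}(x)$ using $q^{\mathrm{pool}}(x)\ge G(x)^{n-1}\ge G(\underline t)^{n-1}$ together with $1/(1+p)^2<1$. The first inequality should hold because a top-pool firm is inspected whenever all higher-index firms fail, which is at least as likely as under separation; I would verify it from the formula in \cref{app:pooling}.

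For $n$, I would normalize by $G(x)^{n-1}$ and compare how fast $q^{\mathrm{pool}}$ and $\int q^{\mathrm{sep}}$ decay with $n$. Under pooling, an extra competitor either lands in the pool, diluting attention through the $(K+1)$ denominator, or lands below and is irrelevant. I would try to show that the ratio $q^{\mathrm{pool}}(x)/G(x)^{n-1}$ is decreasing in $n$, whereas $\int_{\underline t}^x (G(y)/G(x))^{n-1}dy$ is also decreasing but lies on the other side of the equation. I am least sure about this sign comparison. If the direct approach stalls, I would fall back on the equivalent characterization through \eqref{eq:boundary}: combine (iii), which says $\sigma^{\mathrm{sep}}$ rises in $n$ so that $t_c$ moves left, with the observation that pooling becomes relatively less attractive in $n$ because congestion at the cap grows.
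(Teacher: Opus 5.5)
Your parts (i)–(iii) follow the paper's route: closed forms plus the chain rule through $\underline t$. Your bound $N\ge\underline t$ for $\partial_p\sigma^{\mathrm{sep}}$ is cleaner than the paper's case split on $p(1+p)\gtrless 1$. Your $p$-argument for (iv) is also correct, and more rigorous than the paper's informal argument, whose last sentence actually asserts the opposite sign in $p$. The inequality you flagged is immediate. Write $\frac{1-(1-\tau)^{K+1}}{(K+1)\tau}=\int_0^1(1-u\tau)^K\,du$ and average over $K$ in the formula of \cref{app:pooling}:
\[
q^{\mathrm{pool}}_n(x)=\int_0^1\bigl(1-u\,(1-G(x))\bigr)^{n-1}du=\frac1n\sum_{k=0}^{n-1}G(x)^k\;\ge\;G(x)^{n-1}.
\]

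The genuine gap is the $n$-part of (iv), and it cannot be closed, because the sign you are aiming for is false.

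\begin{itemize}
\item By the display above, $q^{\mathrm{pool}}_n(x)/G(x)^{n-1}=\frac1n\sum_{j=0}^{n-1}G(x)^{-j}$. This is an average whose newly added term $G(x)^{-n}$ exceeds all previous terms, so the ratio is strictly \emph{increasing} in $n$, not decreasing as you hoped.
\item The normalized separating term $\int_{\underline t}^x (G(y)/G(x))^{n-1}dy$ is decreasing in $n$, and $\underline t$ does not depend on $n$.
\item Hence $\Phi(x;n)/G(x)^{n-1}$ is strictly increasing in $n$ for every $x\in(pc,1)$.
\item Since $\bar t_n>pc$ and $\Phi(\cdot\,;n+1)$ crosses zero from below, $\Phi(\bar t_n;n+1)>0$ forces $\bar t_{n+1}<\bar t_n$.
\end{itemize}

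So whenever the top pool is active, the upper cutoff is strictly \emph{decreasing} in $n$. A concrete check with the paper's own parameters: take $F$ uniform, $c=1/2$, $p=1$. For $n=2$, $p\,\sigma^{\mathrm{sep}}(1;\underline t)\approx 0.46<c$, so there is no pool and $\bar t=1$. For $n=10$ the pool is active, as in \cref{fig:cap-comparison}.

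This is consistent with the paper's own informal reasoning for (iv): ``subsidies rise and the cap is reached at a lower type.'' Your fallback also points the wrong way. By (iii), $t_c$ moves left in $n$, which pulls $\bar t$ down, not up. The extra congestion at the cap is outweighed by the faster decay of separating attention $G^{n-1}$. The $n$-claim in (iv) should therefore be reversed; the comparison above then proves the corrected version.
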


A higher inspection cost $c$ or a higher price of subsidies $p$ raises the entry cutoff $\underline{t}$, excluding more low–quality firms. As the number of firms $n$ grows, competition for attention intensifies and the inspection probability of each individual firm falls. To compensate, firms respond with higher subsidies, making the separating schedule steeper. Finally, the upper cutoff $\bar t$ increases with $p$ and $p$, since more expensive subsidies compress the set of types willing to separate before pooling at the cap.

\paragraph{Notation.} For convenience, let $q^*(t)$ denote the attention that a type $t$ receives in the reasonable equilibrium: it equals $q^{\mathrm{sep}}(t;\underline{t})$ for $t<\bar t$, and $q^{\mathrm{pool}}(\bar t)$ for $t\geq \bar t$.

\subsection{Welfare Accounting}\label{subsec:welfare-accounting}

Define the average inspection probability per firm as the expected \attention a firm generates:
\begin{equation}\label{eq:qbar}
Q_n(p,c)
\;\equiv\; \mathbb{E}_t[ q^*(t)] = 
\int_{\underline{t}}^{\bar t} q_n^{\mathrm{sep}}(t)\,dF(t)
\;+\;
q_n^{\mathrm{pool}}(\bar t)\,[1-F(\bar t)].
\end{equation}

A type-$t$ firm on the separating branch pays $p\,\sigma^{\mathrm{sep}}(t;\underline{t})$ upon inspection; envelope arguments yield for $t\in[\underline{t},\bar t)$
\[
p\,\sigma^{\mathrm{sep}}(t;\underline{t})\,q_n^{\mathrm{sep}}(t;\underline{t})
\;=\;
t\,q_n^{\mathrm{sep}}(t; \underline{t})
\;-\;
\int_{\underline{t}}^{t} q_n^{\mathrm{sep}}(x; \underline{t})\,dx.
\]
Pooled types $t\ge \bar t$ pay $p\,c$ upon inspection, so their expected payment equals $p\,c\,q_n^{\mathrm{pool}}(\bar t)$.
Thus, on expectation a firm expected transfer costs is given by 
\begin{equation*}\label{eq:Tn}
\begin{aligned}
\phi_n^{F}(p,c)
&:=\;\mathbb{E}_t\!\left[p\,\sigma^*(t)\,q^*(t)\right]\\
&=\;\int_{\underline{t}}^{\bar t}\!\Bigl(t\,q_n^{\mathrm{sep}}(t;\underline{t})-\int_{\underline{t}}^{t}q_n^{\mathrm{sep}}(x;\underline{t})\,dx\Bigr)\,dF(t)
\;+\;p\,c\,q_n^{\mathrm{pool}}(\bar t)\,[1-F(\bar t)].
\end{aligned}
\end{equation*}

No match occurs precisely when all inspected firms fail to match which happens with probability $\left(1 - \int_{\underline{t}}^1 x\,dF(x)\right)^{n}$.%
\footnote{The result follows the same logic we use to construct $q^\textrm{sep}$, where now we could think that the outside option is a firm that is inspected after all firms with type above $\underline{t}$ fail to match. }
From this observation, the probability of a match is
\begin{equation*}\label{eq:match-prob}
m_n(p,c)
=\; 1 - \left(1 - \int_{\underline{t}}^1 x\,dF(x)\right)^{n}.
\end{equation*}
The \emph{consumer} expected net inspection cost for a single firms equal
\begin{equation*}\label{eq:TnC}
C_n(p,c)
\;:=\;\mathbb{E}_t\!\left[(c-\sigma^*(t))\,q^*(t)\right]
\;=\; c\,Q_n(p,c)\;-\;\frac{1}{p}\phi_n^F(p,c).
\end{equation*}

\paragraph{Welfare decomposition.}
Write consumer surplus and producer surplus as
\begin{align}
CS_n(p,c; t_0)
&:=\; u\,m_n(p,c)\;-\;n\,C_n(p,c),\label{eq:CS}\\
PS_n(p,c; t_0)
&:=\; n\,\mathbb{E}_t\!\big[(t-p\,\sigma^*(t))\,q^*(t)\big]
\;=\;n\,\mathbb{E}_t[t\,q^*(t)]\;-\;n\,\phi_n^{F}(p,c),\label{eq:PS}
\end{align}
Social welfare adds up consumer and producer surplus:
\begin{equation}\label{eq:TotalWelfare}
\begin{split}
W_n(p,c) & 
\;:=\;CS_n(p,c; t_0)+PS_n(p,c; t_0)
\\ & = m_n(p,c; t_0) + n \big(\,\mathbb{E}_t[(t - c)\,q^*(t)]\big)- n \left(1 - \frac{1}{p}\right)\phi_n^{F}(p,c) 
\end{split}
\end{equation}

In the special case where $p = 1$ social welfare collapses and results into the expected match value minus the search costs incurred into achieving such match. 
\subsection{Comparative Statics}\label{subsec:comparative-statics}
We record the main monotonicity properties of the allocation and welfare components. 

\begin{theorem}[Comparative statics in the reasonable equilibrium]\label{thm:comparative-statics-SIS}
Then:
\begin{enumerate}[label=(\roman*)]

\item \textbf{Search intensity.} $Q_n(p,c)$ is decreasing in $p$ and in $c$, and  increasing in $n$.

\item \textbf{Match probability.} $m_n(p,c)$ is decreasing in $p$ and in $c$, and  increasing in $n$.

\item \textbf{Consumer surplus.} $CS_n(p,c)$ is decreasing in $p$ and in $c$, and increasing in $n$.

\item \textbf{Producer surplus.} $PS_n(p,c)$ is decreasing in $p$ and $c$. Moreover, the per-firm surplus is decreasing in $n$, i.e.  $PS_n(p,c)/n$ decreases in $n$.
\end{enumerate}
\end{theorem}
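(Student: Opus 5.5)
The plan is to reduce every object in the statement to the three primitives of the reasonable equilibrium: the lower cutoff $\underline{t}=pc/(1+p)$, the upper cutoff $\bar t$, and the attention schedule $q^*$. Their comparative statics are collected in \cref{lem:SIS-properties}. Throughout, the envelope identity of the separating branch will be used in the form
\[
(t-p\sigma^*(t))\,q^*(t)=\int_{\underline{t}}^{t}q^*(x)\,dx .
\]
This identity also holds on the pool, because \eqref{eq:boundary} makes the interim utility continuous at $\bar t$ and \eqref{eq:upperIC} makes $q^*$ the derivative of the interim utility there.

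\textbf{Match probability (ii).} This is immediate, since $m_n=1-\bigl(1-\int_{\underline{t}}^1 x\,dF\bigr)^n$ depends on $(p,c)$ only through $\underline{t}$. By part (i) of \cref{lem:SIS-properties}, $\underline{t}$ is increasing in $p$ and $c$, so $m_n$ is decreasing in both. Since $\underline{t}$ does not depend on $n$, $m_n$ is increasing in $n$.

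\textbf{Producer surplus (iv).} Integrating the envelope identity gives
\[
PS_n/n=\mathbb{E}_t\!\Bigl[\int_{\underline{t}}^{t}q^*(x)\,dx\Bigr].
\]
Raising $p$ or $c$ shrinks the integration range, because $\underline{t}$ rises. It also lowers $q^*$ pointwise: on the separating branch this is part (ii) of \cref{lem:SIS-properties}. On the pool, the boundary indifference condition shows that $q^{\mathrm{pool}}(\bar t)$ equals the integrated separating attention divided by $\bar t-pc$, and this is monotone in the required direction. Raising $n$ lowers $q^{\mathrm{sep}}_n$ pointwise and raises $\bar t$ (part (iv) of \cref{lem:SIS-properties}), which again lowers $q^*$. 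This yields the per-firm claim and, multiplied by $n$ for fixed $n$, the monotonicity in $p$ and $c$.

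\textbf{Consumer surplus (iii).} I would represent $CS_n$ through the consumer's optimal stopping problem. By \cref{lem:descending-search}, DSIP is optimal, so consumer surplus is the expected value of the Weitzman index truncated at zero. Raising $c$, or lowering subsidies via a higher $p$ (part (iii) of \cref{lem:SIS-properties}), weakly lowers every realized index. It also removes types from the inspected set, so by a revealed-preference argument (the old stopping rule remains feasible) $CS_n$ decreases. For $n$, adding a firm adds an option to the search problem, so the optimal value weakly increases. The subtlety is that incumbents' subsidies also change; these only increase with $n$ (part (iii) of \cref{lem:SIS-properties}), which helps the consumer.

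\textbf{Search intensity (i) and the main obstacle.} For $p$ and $c$, the plan is the same as for producer surplus. $Q_n=\mathbb{E}_t[q^*(t)]$, the support $[\underline{t},1]$ shrinks, and $q^*$ falls pointwise, so $Q_n$ decreases in $p$ and $c$. The monotonicity in $n$ is where I expect the real difficulty: $q^{\mathrm{sep}}_n$ is decreasing in $n$ pointwise, so the claim cannot follow from a pointwise comparison. I would first try to write $Q_n$ through the identity that expected total inspections equal $\sum_{k\ge 0}\Pr(\text{first }k\text{ inspected firms fail})$. I would then compare $n$ and $n+1$ by coupling the extra firm and tracking the reallocation of types from the separating branch into the pool as $\bar t$ rises. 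If this coupling does not deliver the inequality for per-firm $Q_n$, this is the step I would re-examine first, since it is the only claim not implied directly by \cref{lem:SIS-properties}.
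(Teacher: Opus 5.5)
Part (ii) is exactly the paper's argument. Your envelope representation $PS_n/n=\mathbb{E}_t[\int_{\underline t}^{t}q^*(x)\,dx]$ is the same object the paper reaches by integrating by parts.

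\textbf{Pointwise monotonicity of $q^*$ fails on the pool.} The gap is the step you rely on for (i) and (iv): that raising $p$ or $c$ lowers $q^*$ pointwise. On the pool it has the wrong sign. By \cref{lem:inspection-probability-pooling}, $q^{\mathrm{pool}}$ is strictly increasing in its cutoff. By \cref{lem:SIS-properties}(iv), $\bar t$ rises with $p$, and it also rises with $c$: for $G(x)=(x-pc)\,q^{\mathrm{pool}}(x)-\int_{\underline t}^{x}q^{\mathrm{sep}}$ one has $G_x>0$ and $G_p,G_c<0$. Hence a higher $p$ or $c$ thins the top pool and strictly \emph{raises} the attention of every type that remains pooled.
\begin{itemize}
\item Your claim that $\int_{\underline t}^{\bar t}q^{\mathrm{sep}}/(\bar t-pc)$ moves in the required direction is therefore false.
\item The same sign error appears for $n$, where you say a higher $\bar t$ ``again lowers $q^*$''.
\item What (iv) really requires is that the rent $U(t)=\int_{\underline t}^{t}q^*$ falls. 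For pooled types, $U(t)=(t-pc)\,q^{\mathrm{pool}}(\bar t)$ trades a larger per-inspection payment against a less congested pool. Nothing in \cref{lem:SIS-properties} signs this, and the paper's proof merely asserts the cumulative comparison.
\item For (i) in $p$ and $c$, a different argument does work. $nQ_n$ is the expected number of inspections. A higher $\underline t$ deletes the firms inspected last, and a higher $\bar t$ replaces random order by descending order on $[\bar t_{\mathrm{old}},\bar t_{\mathrm{new}})$. An adjacent-swap argument on $\sum_k\prod_{j<k}(1-t_{\pi(j)})$ shows that both changes lower it.
\end{itemize}

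\textbf{Search intensity in $n$.} For (i) in $n$, the per-firm claim is false, not just hard. Every inspected firm has type at least $\underline t>0$, so $nQ_n\le 1/\underline t$ and $Q_n\to 0$; no coupling can deliver it. The paper's proof silently reinterprets $Q_n$ as total inspections ($Q_n=n\bar q_n$, contrary to \eqref{eq:qbar}) and asserts $Q_{n+1}\ge Q_n$ without argument. Even that can fail, e.g.\ for a bimodal $F$, where extra draws make an early match likely and shorten search.

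\textbf{Consumer surplus.} For (iii), your realized-index route is more substantive than the paper's one-line assertion, but it needs three repairs.
\begin{itemize}
\item The revealed-preference step runs the wrong way: feasibility of the old rule in the new environment bounds the \emph{new} value from below. Use instead monotonicity of the Pandora value in realized indices, e.g.\ $CS_n=\mathbb{E}\bigl[\max\{0,\max_j\mathbf{1}\{j\text{ matches}\}\,r_j\}\bigr]$, coupled type by type.
\item For $c$, \cref{lem:SIS-properties}(iii) says subsidies \emph{rise}. You must check that $\partial_c\sigma^{\mathrm{sep}}=q^{\mathrm{sep}}(\underline t)/\bigl((1+p)\,q^{\mathrm{sep}}(t)\bigr)<1$. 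You also need that $\bar t$ rises with $c$ (shown above), which the lemma does not state.
\item For $n$, ``incumbents' subsidies only increase'' contradicts your own use in (iv) of $\bar t$ increasing in $n$. Types in $[\bar t_n,\bar t_{n+1})$ would drop from $c$ to $\sigma^{\mathrm{sep}}_{n+1}<c$ and lose free inspection. Your argument for (iii) in $n$ needs $\bar t$ to be weakly \emph{decreasing} in $n$. That is the opposite of \cref{lem:SIS-properties}(iv) as stated, though it is what the lemma's proof argues, so you must settle that sign first.
\end{itemize}
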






\section{Discussion: Agentic Search}\label{sec:agentic}

The rise of the \emph{agentic economy} is transforming how search operates \citep{Rothschildetal2025}. Consumers increasingly delegate discovery and evaluation to AI agents that inspect products on their behalf. In this setting, inspection becomes a sequence of auditable computational events (e.g., API calls, token usage, or verification steps) that carry a measurable cost. Attention thus becomes a contractible input that can be priced, transferred, and subsidized.

We embed this development within our framework by modeling a platform that hosts firms and sells \emph{inspection tokens}.\footnote{A \emph{token} is the basic unit of text processed by a language model, typically a word, part of a word, or punctuation mark, converted into numerical representations the model can manipulate mathematically.} Each inspection consumes one token at a real cost $c > 0$ to the consumer. Firms can purchase tokens from the platform at price $p \geq 0$ and transfer them to the agent, effectively subsidizing inspection costs. This tokenized approach provides a parsimonious foundation for analyzing efficiency and surplus in emerging agentic marketplaces.

\subsection{Platform Revenue Optimization}

The platform's revenue problem is setting a token price to maximize revenue. Revenue equals the token price multiplied by the expected token demand, where demand precisely captures the volume of subsidies firms provide. For each price $p \geq 0$, let $(\sigma^*(\cdot;p), \text{DSIP}, \mu^*)$ denote the unique refined outcome of $\mathcal{G}(p)$ under equilibrium-dominance (see \cref{sec:consumer-optimal-eqm}, \cref{thm:uniqueness}). Let $q^*(t;p)$ be the inspection probability induced by DSIP. We define the per-firm \emph{subsidy demand} as:
\[
\begin{split}
    D(p) & := \mathbb{E}\!\left[\sigma^*(t;p) \, q^*(t;p)\right]
\\ & = \int_{\underline{t}(p)}^{\bar{t}(p)} \sigma^{\mathrm{sep}}(t) \, q^{\mathrm{sep}}(t;\underline{t}(p)) \, dF(t) 
+ c \, q^{\mathrm{pool}}(\bar{t}(p))[1-F(\bar{t}(p))],
\end{split}
\]
where $\underline{t}(p) = \frac{pc}{1+p}$ is the participation cutoff and $\bar{t}(p)$ the pooling cutoff. Platform revenue is then:
\[
R(p) = n \, p \, D(p).
\]

Let $\psi(t) = t - \frac{1-F(t)}{f(t)}$ be the virtual value function. Per-firm revenue admits the decomposition:
\begin{equation}\label{eq:rev-decomp}
\frac{R(p)}{n}
=
\underbrace{\left(\int_{\underline{t}(p)}^{\bar{t}(p)} \psi(x) \, q^{\mathrm{sep}}(x;\underline{t}(p)) \, \frac{dF(x)}{F(\bar t(p))}\right)}_{\text{separating branch}} F(\bar t(p)) 
+
\underbrace{\Big(\bar{t}(p) \, q^{\mathrm{pool}}(\bar{t}(p))\Big)}_{\text{pooling branch}} \, [1-F(\bar{t}(p))].
\end{equation}

This decomposition reveals two distinct revenue components.\footnote{See \cref{lem-app:useful-identities} in the appendix to see how we reach such representation.} The first term represents the \emph{revealing branch}, where firms separate by type and revenue corresponds to expected virtual surplus weighted by inspection probabilities. This component exhibits a single-peaked relationship with price: lower prices expand subsidies and increase inspection, but beyond the unit-elasticity point, further price cuts reduce revenue. At high prices, the pooling region vanishes and only revelation contributes to revenue.

The second term captures the \emph{pooling branch}, where all types above $\bar{t}(p)$ are grouped and effectively pay the cutoff virtual value $\bar{t}(p)$. This component increases monotonically in price: higher prices raise per-unit revenue from the pooled types, though the pool's relative weight shrinks as participation decreases.

Crucially, price does not merely scale revenue, it selects the \emph{margin of competition}. Lower prices reduce per-unit revenue but broaden participation,  and increase subsidies, further it expands the pooling region, giving greater weight to the pooling branch. Higher prices shrink participation and thin the pool, though each inspected type provides larger subsidies. In this regime, the pooling branch matters less and revenue is primarily driven by  the separating branch. Overall, revenue combines a hump-shaped revealing branch with a monotone pooling branch. The unique revenue-maximizing price $p^*$ emerges where these opposing forces balance.

\begin{theorem}[Excess Search]\label{thm:excess-search}
Suppose $F$ satisfies the regularity condition. Then the platform's revenue maximization problem admits a unique solution $p^* > 0$. At this optimal price:
\begin{enumerate}
\item \textbf{Pooling remains active:} $\bar{t}(p^*) < 1$.
\item \textbf{Negative virtual values receive inspection:} Some types with $\psi(t) < 0$ are inspected with positive probability.
\end{enumerate}
\end{theorem}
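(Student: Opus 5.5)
I would work directly with the per-firm decomposition \eqref{eq:rev-decomp}. I write $R(p)/n = A(p) + B(p)$, with
\[
A(p)=\int_{\underline t(p)}^{\bar t(p)}\psi(x)\,q^{\mathrm{sep}}(x;\underline t(p))\,dF(x),
\qquad
B(p)=\bar t(p)\,q^{\mathrm{pool}}(\bar t(p))\,[1-F(\bar t(p))].
\]
The regularity condition means $\psi$ is strictly increasing. The plan has three stages: (a) show that an interior maximizer exists; (b) show uniqueness through a single-crossing argument on $R'(p)$; (c) read off the two properties from the first-order condition.

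\textbf{Existence.} At $p=0$ there are no transfers, so $R(0)=0$. As $p$ grows, $\underline t(p)=pc/(1+p)$ rises toward its cap, and participation eventually vanishes or $pD(p)$ collapses. The paper's footnote assumption $c<(1+pu)/p$ is what keeps some inspection possible at moderate $p$. For small $p>0$, $D(p)$ stays bounded away from zero, since pooled types pay $c$, so $R(p)>0$. This gives $\sup R>0$, attained at some $p^*\in(0,\infty)$ by continuity. Continuity of $\bar t(\cdot)$ follows from the uniqueness of the boundary cutoff in \cref{lem:existence-cutoff} together with the implicit function theorem.

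\textbf{Uniqueness.} I differentiate $A+B$ in $p$, using the monotonicity of the cutoffs from \cref{lem:SIS-properties}: $\underline t$ and $\bar t$ are increasing in $p$, and $q^{\mathrm{sep}}(\cdot;\underline t)$ is decreasing in $\underline t$. The derivative of $A$ contains a boundary term at $\underline t$, namely $-\psi(\underline t)q^{\mathrm{sep}}(\underline t)f(\underline t)\underline t'(p)$, plus an interior term through $\partial q^{\mathrm{sep}}/\partial\underline t$. I would then show that $R'$ changes sign only once. The argument is that once $\psi(\underline t(p))$ turns positive, raising $p$ only removes positive-virtual-value mass and moves types out of the pool, so $R'<0$ from then on. Below that point, the monotone pooling-branch term dominates. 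Uniqueness of the argmax then follows from this single sign change (strict quasiconcavity). \emph{This is the main obstacle.} The pool's contribution $\bar t\,q^{\mathrm{pool}}(\bar t)(1-F(\bar t))$ is not obviously monotone, because $q^{\mathrm{pool}}$ increases in $\bar t$ while the mass $1-F(\bar t)$ falls. My first attempt would be to use the boundary-indifference equation defining $t_1=\bar t$ to rewrite $B$ as $\bar t\big[\int_{\underline t}^{\bar t}q^{\mathrm{sep}}\,dx/(\bar t-pc)\big](1-F(\bar t))$, and then sign the combined derivative term by term under regularity.

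\textbf{Properties.} For pooling, I argue by contradiction. Suppose $\bar t(p^*)=1$. Then $B\equiv 0$ in a neighbourhood, and revenue is pure Myerson-type virtual surplus. A small decrease in $p$ reintroduces a pool at the top, where the type-$1$ neighbourhood has $\psi(1)=1>0$. Using $\bar t$ strictly increasing in $p$ from \cref{lem:SIS-properties}(iv), this produces a first-order gain, because $\bar t\,q^{\mathrm{pool}}$ exceeds the separating revenue $\psi q^{\mathrm{sep}}$ lost near the top. That contradicts optimality, so $\bar t(p^*)<1$. For negative virtual values, I use the FOC $R'(p^*)=0$. The boundary term's sign is governed by $\psi(\underline t(p^*))$, and all other terms in $R'$ push in the direction of lowering $p$: they represent more pooling and higher $q^{\mathrm{sep}}$. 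Hence the FOC can only hold if $\psi(\underline t(p^*))<0$. Since types just above $\underline t(p^*)$ are inspected with $q^{\mathrm{sep}}>0$, continuity of $\psi$ delivers inspected types with negative virtual value.
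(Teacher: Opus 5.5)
Your proposal has genuine gaps at every stage, and closing them takes more than added detail.

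Let $\Lambda(\bar t):=\psi(\bar t)f(\bar t)\bigl[q^{\mathrm{sep}}(\bar t)-q^{\mathrm{pool}}(\bar t)\bigr]+\bar t\,(1-F(\bar t))\,(q^{\mathrm{pool}})'(\bar t)$. Using $\int_{\bar t}^1\psi\,dF=\bar t(1-F(\bar t))$, this is exactly $\partial(A+B)/\partial\bar t$. Since $q^{\mathrm{sep}}(x;\underline t)=q^{\mathrm{sep}}(x)\mathbf{1}\{x\ge\underline t\}$, there is no interior channel through $\partial q^{\mathrm{sep}}/\partial\underline t$. So with an active pool, $R'(p)/n=-\psi(\underline t)\,q^{\mathrm{sep}}(\underline t)f(\underline t)\,\underline t'(p)+\Lambda(\bar t)\,\bar t'(p)$, with $\underline t',\bar t'>0$.

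\emph{Existence.} For $c<1$, revenue does not collapse as $p\to\infty$. The product $p\,\sigma^{\mathrm{sep}}(t)\,q^{\mathrm{sep}}(t)=t\,q^{\mathrm{sep}}(t)-\int_{\underline t}^t q^{\mathrm{sep}}$ depends on $p$ only through $\underline t(p)$, which stays below $c$. For $p\ge 1/c$ the cap is slack, so $R(p)/n=\int_{\underline t(p)}^1\psi\,q^{\mathrm{sep}}\,dF\to\int_c^1\psi\,q^{\mathrm{sep}}\,dF$. If $c\le t_\psi$, this is strictly increasing on the whole no-pool region. Continuity on $(0,\infty)$ therefore yields no maximizer; you would have to exhibit a pooling price that beats the limit.

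\emph{Pooling.} Your step (i) is internally inconsistent: if $B\equiv 0$ near $p^*$, a small price cut creates no pool. The argument can only address the boundary price at which $\sigma^{\mathrm{sep}}(1;\underline t)=c$. There the claimed first-order gain does not exist: $q^{\mathrm{pool}}(1)=q^{\mathrm{sep}}(1)=\psi(1)=1$ and $1-F(1)=0$ give $\Lambda(1)=0$. The gain on the pool branch is exactly offset by the first-order loss $\psi(\bar t)q^{\mathrm{sep}}(\bar t)f(\bar t)\,\bar t'$ on the revelation branch. The paper's sketch of (i) makes the same ``second order'' claim and fails for the same reason. Nothing rules out an interior no-pool optimum with $\psi(\underline t(p^*))=0$.

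\emph{Negative virtual values.} Your assertion that all non-boundary terms of $R'$ push toward lowering $p$ is exactly the claim $\Lambda(\bar t(p^*))<0$, and you give no argument for it. In fact $\Lambda(\bar t)>0$ whenever $\psi(\bar t)\le 0$, because $q^{\mathrm{pool}}>q^{\mathrm{sep}}$ and $q^{\mathrm{pool}}$ is strictly increasing. The FOC gives $\psi(\underline t(p^*))<0$ if and only if $\Lambda(\bar t(p^*))<0$, so nothing is gained without signing $\Lambda$. Your target is still better chosen than the paper's. The same FOC shows that at an interior optimum with an active pool, $\bar t(p^*)>t_\psi$, contradicting the paper's conclusion $\bar t(p^*)<t_\psi$. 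Your uniqueness sketch rests on the same unproven sign, since ``moving types out of the pool lowers revenue'' is again $\Lambda<0$.

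\emph{The statement can fail under regularity.} Take $F$ uniform, $n=2$ and $u=1$ (as $\underline t=pc/(1+p)$ presumes). Then $q^{\mathrm{sep}}(x)=(1+x^2)/2$, $q^{\mathrm{pool}}(x)=(3+x^2)/4$, $\Lambda(\bar t)=(1-\bar t)^2/4>0$, and $R(p)/n=\int_{\underline t(p)}^1\psi\,q^{\mathrm{sep}}\,dx-(1-\bar t(p))^3/12$. For $c=9/10$ the unique maximizer is $p^*=5/4$. There $\underline t(p^*)=1/2=t_\psi$ and $p^*c>1$, so there is no pool and no inspected type has $\psi<0$. For $c=1/2$ the supremum is approached only as $p\to\infty$, so no maximizer exists. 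Closing these gaps therefore requires additional hypotheses on $(F,n,c)$, not just more careful estimates.
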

 
The platform optimally sets $p^*$ to thicken the top pool, steering attention toward inspections that a virtual surplus benchmark would exclude. Since generating additional attention is cheapest in this pooling region, profit maximization occurs precisely when the market is induced to search excessively. This creates a fundamental tension in agentic marketplaces. While consumers benefit from increased match likelihood and reduced net inspection costs, attention becomes distorted toward excessive inspection, obfuscating discovery. The platform's revenue maximization thus generates a welfare loss: the market searches too much relative to the first-best allocation.


\section{Concluding Remarks}\label{sec:conclusion}

This paper develops a theory of attention markets in which firms can directly subsidize costly search. In such environments, subsidies play a dual role. They serve as signals, conveying information about a firm’s expected match value. At the same time, they are instrumental, reducing the consumer’s effective inspection costs and thereby altering the sequence of search. In equilibrium, these two roles jointly determine which firms participate, which reveal their type, and which pool. In this way, subsidies endogenously structure the allocation of attention.

We show that every equilibrium satisfies a \emph{subsidy–sorting principle}: higher types weakly offer higher subsidies, and the consumer follows a descending–subsidy inspection order. Within this structure, the unique equilibrium outcome surviving refinement is the \emph{reasonable equilibrium}. It takes a step–increasing–step form: no inspection below a participation cutoff, revelation on an intermediate region, and a pooling block at the cap. This equilibrium is efficient in the sense of being most informative: it maximizes the set of types that can induce inspection, subject to incentive constraints. Any information that is lost is either uninspected altogether or free to obtain.

These results provide a benchmark for platform design. If a platform can mint and sell inspection tokens, its problem reduces to a linear–cost mechanism. Yet profit motives may distort attention. By thickening the top pool, a platform can steer consumers to inspect too much, raising revenue but lowering social efficiency. 

The framework also delivers empirical predictions. Subsidies should be weakly increasing in observable proxies for match quality. High–subsidy pools should exhibit congestion, with inspection probabilities falling as the pool thickens. Changes in token prices should shift the participation margin and alter match rates and transfers in predictable ways. Such patterns can be measured in platform data, offering a path for empirical validation.

More broadly, the analysis highlights how inspection can be organized without auctions. Direct subsidies decentralize the allocation of attention, sustaining efficiency so long as ranking rules remain transparent. As AI agents increasingly intermediate search, this principle provides both a microfoundation for programmable attention and a policy guide for preserving efficiency in environments where the temptation to steer is strongest.

\begin{singlespace}
{\small
	\addcontentsline{toc}{section}{References}
	\setlength{\bibsep}{.25\baselineskip}
	\bibliographystyle{aer}
	\bibliography{references}
}
\end{singlespace}

\newpage
\appendix

\section{Omitted proofs for \cref{sec:sorting}}\label{appendix:sorting-proofs}

\subsection{Proof of \cref{lem:subsidy-policy-weakly-increasing}}

First, we establish that the attention function is weakly increasing on path.

\begin{claim}
    For $s,s' \in \textrm{Im}(\sigma)$, if $s' \geq s$, then $q_{\sigma, \iota}(s') \geq q_{\sigma, \iota}(s)$.
\end{claim}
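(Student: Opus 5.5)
The plan is a revealed-preference argument run by contradiction, as in the sketch in the main text. Take $s,s'\in\mathrm{Im}(\sigma)$ with $s'\ge s$. If $s'=s$ there is nothing to prove, so assume $s'>s$ and, towards a contradiction, suppose $q_{\sigma,\iota}(s')<q_{\sigma,\iota}(s)$. Since $s'$ is on path, pick a type $t'$ with $\sigma(t')=s'$. Firm optimality gives $\pi_{\sigma,\iota}(s';t')\ge\pi_{\sigma,\iota}(s;t')$. Note that $s$ is a feasible action for $t'$ and that $q_{\sigma,\iota}(s)$ is defined by the same formula \eqref{eq:q-sigma} whichever type offers $s$, because the consumer's inspection kernel depends only on the realized subsidy profile.

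The key step is a sign argument on $t'-ps$. First compare with a zero-subsidy-cost benchmark: since $t'$ could instead choose $s$, its equilibrium payoff satisfies $(t'-ps')q(s')\ge(t'-ps)q(s)$.

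\textbf{Case $t'-ps>0$.} Then
\[
(t'-ps)q(s) > (t'-ps)q(s') \ge (t'-ps')q(s').
\]
The first inequality uses $q(s)>q(s')$. The second uses $s<s'$ together with $q(s')\ge0$. This contradicts optimality of $s'$ for $t'$.

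\textbf{Case $t'-ps\le 0$.} Then $t'-ps'<0$. Equilibrium optimality requires $(t'-ps')q(s')\ge\pi(0;t')\ge 0$ when $0$ is available, since $t'\ge0$. Hence $q(s')=0$, and since we also need $(t'-ps')q(s')\ge(t'-ps)q(s)$, the left side is $0$. But then $q(s)>q(s')=0$ can only hold alongside this with $t'-ps\le0$. So this case requires a separate argument: use a type $t$ with $\sigma(t)=s$ and the single-crossing inequality $t'(q'-q)\ge p(q's'-qs)\ge t(q'-q)$ from the main text. Alternatively, and more simply, compare $t'$ against its deviation to subsidy $0$, whose payoff is $t'q(0)\ge0$. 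This forces $q(s')=0$ whenever $t'<ps'$, and then handle it by noting that $t'$ strictly prefers $0$ unless $q(0)=0$ as well.

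I expect this degenerate case to be the main obstacle. It is where $t'$ earns a nonpositive margin at $s$, and its payoff $0$ at $s'$ is achieved by never being inspected. I would resolve it by showing that no type with $t'<ps'$ can rationally choose $s'$ unless $q(s')=0$. In that situation the claim can only fail if $q(s)>0$ for the lower subsidy.

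Here the plan is to use consumer optimality, via \cref{lem:descending-search}, and exchangeability directly. If subsidy $s$ is inspected with positive probability, then the index $r_\mu(s)=u-(c-s)/\tau_\mu(s)$ is positive. The consumer's behavior at $s'$ must then be rationalized by a posterior $\tau_\mu(s')$ low enough that $r_\mu(s')\le0$, since $q(s')=0$. Bayes consistency of $\tau_\mu(s')$ with the types choosing $s'$ then gives a contradiction if those types also prefer $s'$. If this route stalls, I would fall back to the weaker on-path statement restricted to subsidies attracting positive attention or positive margins, and check that this weaker version is all that the subsequent monotonicity-of-$\sigma$ step actually uses.
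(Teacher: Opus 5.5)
Your Case $t'-ps>0$ is exactly the paper's argument. The chain $(t'-ps)q(s)>(t'-ps)q(s')\ge(t'-ps')q(s')$ is all the paper offers (the appendix version garbles it), and its first inequality tacitly needs $t'>ps$. The paper never treats your degenerate case.

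\textbf{The gap is in that degenerate case, and it cannot be closed, because the claim as stated is false there.} Your route through consumer optimality and Bayes consistency finds no contradiction because the configuration arises in a PBE. Take $u=1$, $c=1/2$, $p=1$, $F$ uniform. Let types in $[0,1/5)$ offer $s'=3/10$ and types in $[1/5,1]$ offer $s=1/5$, with belief $\delta_0$ at every off-path subsidy.
\begin{itemize}
\item Consumer side: $\tau_\mu(s')=1/10$, so $r_\mu(s')=1-(1/5)/(1/10)=-1<0$ and $s'$ is never inspected. Meanwhile $\tau_\mu(s)=3/5$, so $r_\mu(s)=1-(3/10)/(3/5)=1/2>0$ and $q(s)>0$.
\item Types $t\ge 1/5$ earn $(t-1/5)q(s)\ge 0$ at $s$, and every alternative yields $0$.
\item Types $t<1/5$ earn $0$ at $s'$, earn $(t-1/5)q(s)<0$ at $s$, and earn $0$ elsewhere.
\end{itemize}
Firm optimality, consumer optimality (DSIP) and Bayes consistency all hold. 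Yet $s'>s$ are both on path with $q(s')=0<q(s)$, and $\sigma$ is even decreasing. The mechanism is the one your argument runs into. The low types at $s'$ drag $\tau_\mu(s')$ down exactly enough to rationalize $q(s')=0$. They weakly prefer $s'$ because every option pays them at most zero, so Bayes consistency yields no contradiction. Neither your argument nor the paper's can prove the claim in this generality.

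Your fallback is the right repair. If $q(s')>0$, then any $t'$ with $\sigma(t')=s'$ has a nonnegative equilibrium payoff (subsidy $0$ is always available). This forces $t'\ge ps'>ps$, and your Case 1 finishes the proof.

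The weaker statement does support the next step, but only among types receiving positive attention. Suppose $t<t'$ both get positive attention and $\sigma(t)>\sigma(t')$. The weak claim gives $q(\sigma(t))\ge q(\sigma(t'))$, and single crossing gives the reverse, so the two are equal and positive. Then $t$'s incentive constraint forces $\sigma(t)\le\sigma(t')$, a contradiction. Monotonicity of $\sigma$ over all types fails, as the example shows, so the monotone-subsidy part of the lemma needs the same qualification.
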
 

\begin{proof}
    Assume towards a contradiction that this is not the case, then $q_{\sigma, \iota}(s') < q_{\sigma, \iota}(s)$. Let $t'$ be such that $\sigma(t') = s'$, then from the firm optimality it should hold that
    \[ (t' - ps')q_{\sigma, \iota}(s') \geq (t - ps) q_{\sigma, \iota}(s) \]
    and in particular, for $s$. However, note that 
    \[ (t' - ps')q_{\sigma, \iota}(s') < (t - p\sigma(t))q_{\sigma, \iota}(s) < (t' - ps)q_{\sigma, \iota}(s). \]
    Thus, type $t'$ has a profitable deviation. 
\end{proof}

Note the proof establish something stronger. In particular, we have that if $s \in \textrm{Im}(\sigma)$, then for all $s' < s$, $q_{\sigma, \iota}(s') \leq q_{\sigma, \iota}(s)$ regardless of $s'$ being on-path or off-path.

Second, we show that the firm payoff over inspection probabilities satisfies single-crossing differences in type. Thus, higher types firms are willing to pay more for the same level of attention. In turn, this means that the subsidy policy should be increasing.

\begin{claim}
    The subsidy policy is increasing: if $t' \geq t$, then $\sigma(t') \geq \sigma(t)$.
\end{claim}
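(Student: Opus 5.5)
The plan is to combine the revealed-preference inequalities for two types with the monotonicity of attention established in the previous claim. Fix $t'\geq t$ and write $s=\sigma(t)$, $s'=\sigma(t')$, $q=q_{\sigma,\iota}(s)$, $q'=q_{\sigma,\iota}(s')$. Suppose, towards a contradiction, that $s'<s$. Since both subsidies are on path and $s'<s$, the previous claim gives $q'\le q$.

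First I would record the two incentive constraints, exactly as in the sketch in the main text:
\[
(t'-ps')q'\ \ge\ (t'-ps)q,\qquad (t-ps)q\ \ge\ (t-ps')q'.
\]
Adding them gives $(t'-t)(q'-q)\ge 0$.

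If $t'>t$, this forces $q'\ge q$, and combined with $q'\le q$ we get $q'=q$. If instead $t'=t$, both subsidies are optimal for the same type. Then $s'<s$ and the previous claim again give $q'\le q$, and the one-type comparison $(t-ps')q'=(t-ps)q$ handles this case in the same way as below. So in every case we are reduced to $q'=q$.

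Second, I would dispose of the equal-attention case $q'=q=:\bar q$.

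\emph{Subcase $\bar q>0$.} Type $t$'s constraint becomes $(t-ps)\bar q\ge (t-ps')\bar q$, that is, $ps\le ps'$. For $p>0$ this gives $s\le s'$, contradicting $s'<s$. For $p=0$ subsidies are costless and payoffs depend only on attention, so any subsidy yielding the same $\bar q$ is payoff-equivalent; this degenerate case must be handled by noting that the subsidy is then pinned down only up to indifference.

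\emph{Subcase $\bar q=0$.} Here both types are never inspected and are indifferent among all subsidies that also yield zero attention. I expect this to be the main obstacle: the incentive constraints carry no information, so the literal statement can fail on this set of types.

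For the zero-attention case I would try the following. By the first claim, any on-path subsidy above $s$ has attention at least that of $s$. Let $\hat s$ be the supremum of on-path subsidies with zero attention; by the same claim, every on-path subsidy below $\hat s$ also has zero attention. All types choosing such subsidies earn zero, and relabeling their subsidy to a common value (say $0$) changes neither payoffs, nor attention, nor the consumer's behaviour on the rest of the path. This is because, by the remark after the first claim, no such subsidy is inspected, so Bayes-consistent beliefs and DSIP inspection of the other firms are unaffected.

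So I would state the claim up to this outcome-equivalent modification, or equivalently for all types that receive positive attention, and then argue that types with positive attention lie above all types with zero attention. That last step follows from the first subcase: a type with $q>0$ strictly prefers its subsidy to any zero-attention one. With this, the case analysis closes, and $\sigma$ is increasing.
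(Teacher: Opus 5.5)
Your proof follows the paper's route: you sum the two incentive constraints to get $(t'-t)(q'-q)\ge 0$ and combine this with the monotonicity of attention from the previous claim, and your equal-attention subcase (when $q'=q>0$ and $p>0$, type $t$'s own constraint forces $s\le s'$) supplies exactly the step the paper skips when it passes directly from $q(t')\ge q(t)$ to $\sigma(t')\ge\sigma(t)$. You are also right that the literal statement can fail among never-inspected types (e.g.\ in the reasonable equilibrium, let types in $[0,a)$ offer a small $\varepsilon>0$ and types in $[a,\underline t)$ offer $0$, so that both pools remain uninspected), so the correct formulation is monotonicity on the positively inspected types, or monotonicity up to your outcome-equivalent relabeling; note that these types form an upper set directly from $(t'-t)(q'-q)\ge 0$, so you do not need the strict-preference argument, which fails for a zero-profit type.
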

\begin{proof}
Let \(q(s)\) denote the ex-ante probability that a firm offering subsidy \(s\) is inspected in equilibrium. Define \(q(t) := q(\sigma(t))\) for shorthand. Since \(\sigma(t)\) is an equilibrium strategy, it must solve:
    \[
        \sigma(t) \in \arg\max_{s \in [0,c]} \big[t - ps\big] \cdot q(s),
    \]
which implies \(\big[t - p\sigma(t)\big] \geq 0\) since \(s = 0\) is always feasible and \(t, q(\cdot) \in [0,1]\)
    
For any two types \(t' > t\), it holds as follows:
    \begin{align}
        \big[t - p\sigma(t)\big] \cdot q(t) &\geq \big[t - p\sigma(t')\big] \cdot q(t'), \label{eq:t-opt} \\
        \big[t' - p\sigma(t')\big] \cdot q(t') &\geq \big[t' - p\sigma(t)\big] \cdot q(t). \label{eq:tprime-opt}
    \end{align}
     Re-arrenging \eqref{eq:t-opt} and \eqref{eq:tprime-opt}, we obtain:
    \begin{align*}
        p \cdot \big[\sigma(t') q(t') - \sigma(t) q(t)\big] &\geq t \cdot \big[q(t') - q(t)\big], \\
        t' \cdot \big[q(t') - q(t)\big] &\geq p \cdot \big[\sigma(t') q(t') - \sigma(t) q(t)\big].
    \end{align*}
    Combining the two inequalities implies
\[
    t'\big[q(t') - q(t)\big] \;\;\geq\;\; t\big[q(t') - q(t)\big],
\]
or equivalently,
\[
    (t' - t)\cdot\big[q(t') - q(t)\big] \;\;\geq\;\; 0.
\]
Because $t' > t$, it follows that $q(t') \geq q(t)$. That is, inspection probabilities are weakly increasing in type.  

Finally, since the inspection function $q(\cdot)$ is weakly increasing in subsidies (by the previous claim), it must hold that $\sigma(t') \geq \sigma(t)$. In other words, equilibrium subsidies are monotone in type. 
\end{proof}

\subsection{Proof of \cref{lem:descending-search}}

\begin{proof}
By Weitzman's reservation-value theorem, the consumer's optimal inspection policy is given by the index rule: inspect firms in decreasing order of their reservation (index) values. If several firms share the same index, any symmetric tie-breaking rule yields the same expected payoff; under the symmetry assumption the only way to break ties is to randomize uniformly among tied firms. Hence the index rule with uniform tie-breaking (when needed) is optimal.
\end{proof}

\subsection{Proof of \cref{cor:sorting-on-path}}

\begin{proof}
    On-path higher subsidies correspond to higher types, moreover higher subsidies also decrease the search cost. Both forces imply that search a higher subsidy is associeted with a higher reservation index. Thus, on-path the consumer searches in descending order of subsidy.  The optimal stopping point is determined when $r_\mu(s) < 0$. That is,
    \[ u - \frac{c - s}{\tau_\mu(s)} < 0 \iff u \mathbb{E}[t \mid \sigma(t) = s] < c-s \]
\end{proof}

\subsection{Proof of \cref{thm:sorting}}

\begin{proof}
    The result immediately follows from \cref{lem:subsidy-policy-weakly-increasing} and \cref{lem:descending-search}. In its current statement, we used the result in \cref{cor:sorting-on-path}. 
\end{proof}

\section{Technical details and supplementary analysis for \cref{sec:consumer-optimal-eqm}} \label{appendix:consumer-optimal-proofs}

\subsection{Inspection probability under full separation} \label{app:inspection-prob}

\begin{lemma}[Inspection probability under full separation] \label{lemma:inspection-prob}
Suppose the subsidy policy $\sigma$ is strictly increasing and the consumer follows the descending–subsidy index rule $DSIP_\mu$ where $\mu$ is on-path consistent with $\sigma$. Let $\underline{t}\in[0,c)$ denote the lowest inspected type, defined by $\underline{t} = \inf\{t:\sigma(t)=c-ut\}$. Then, for any $t\in T$, the ex-ante inspection probability of a type-$t$ firm is equivalent to
\[
q_n^{\text{sep}}(t;\underline{t})
   \;=\;
   \Biggl(1-\int_t^1 x\,dF(x)\Biggr)^{n-1}\mathbf{1}\{t\ge \underline{t}\}.
\]
The attention function $q_n^{\text{sep}}(\cdot;\underline{t})$ is strictly increasing on $(\underline{t},1)$, hence differentiable a.e., and $q_n^{\text{sep}}(1;\underline{t})=1$.
\end{lemma}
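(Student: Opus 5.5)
Under full separation and the descending–subsidy index rule, the consumer's behaviour reduces to a clean description. Because $\sigma$ is strictly increasing and $\mu$ is Bayes consistent on path, a realized subsidy reveals its issuer's type exactly, so $\tau_\mu(\sigma(t))=t$. By \cref{cor:sorting-on-path}, the consumer inspects firms in descending order of realized subsidy, which here is descending order of type; ties occur with probability zero when $F$ is atomless, and I would treat atoms by the uniform tie-breaking convention. The stopping rule stops before a firm exactly when $u\,t<c-\sigma(t)$. By definition of $\underline{t}$ and monotonicity of $\sigma(t)-(c-ut)$ in $t$, this means precisely the types $t\ge \underline{t}$ are inspect-worthy. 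Hence a firm of type $t<\underline{t}$ is never inspected, which gives the indicator factor.

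For $t\ge\underline{t}$, fix firm $j$ of type $t$. It is inspected if and only if every rival $i$ that is scheduled before it fails to match. A rival is scheduled before $j$ exactly when $t_i>t$. Such a rival is then automatically inspect-worthy, since $t_i>t\ge\underline{t}$, so it is actually reached whenever no earlier firm has matched. Therefore the inspection event is
\[
\bigcap_{i\ne j}\{t_i\le t\ \text{or}\ (t_i>t\text{ and } i \text{ fails})\}.
\]
Rivals' types and match outcomes are independent across $i$. So I would compute the probability for a single rival,
\[
\Pr(t_i\le t)+\int_t^1(1-x)\,dF(x)=1-\int_t^1 x\,dF(x),
\]
and raise it to the power $n-1$. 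This is exactly the integral of \eqref{eq:q-sigma} specialized to this policy: the product over firms ranked above $j$ of $(1-t_i)$, integrated against $dF^{n-1}$. It factorizes because the indicator $\{t_i>t\}$ and the term $(1-t_i)$ depend only on $t_i$. The outside option appears only after $j$, so the convention $t_\emptyset=1$ contributes nothing.

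The remaining properties follow from this formula. Write $g(t)=1-\int_t^1 x\,dF(x)$. Its derivative in the sense of measures is $t\,dF(t)\ge0$, so $g$ is increasing. By full support, $g$ is strictly increasing on $(\underline{t},1)$: for $t<t'$ in this range, $\int_t^{t'}x\,dF>0$ because $x>0$ there and $F$ charges every interval. Since $g>0$ (indeed $g(t)\ge 1-\mathbb{E}[t]\ge0$, with positivity where needed), raising to the power $n-1$ preserves strict monotonicity when $n\ge2$. A monotone function is differentiable almost everywhere by Lebesgue's theorem, and $g(1)=1$ gives $q^{\mathrm{sep}}_n(1;\underline{t})=1$.

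I expect the main obstacle to be the identification of the lowest inspected type. The statement defines $\underline{t}$ through the crossing $\sigma(t)=c-ut$, and I need that the stopping test holds exactly on $[\underline{t},1]$. This follows because $t\mapsto \sigma(t)+ut-c$ is strictly increasing, so the crossing set is at most a point, and its infimum is the threshold; the boundary type itself is inspected under the weak-inequality convention. A secondary technical point is atoms of $F$, where ties break uniformly and the displayed formula must be read with the convention that $F$ is continuous, or else adjusted by the pooling formula in \cref{app:pooling}.
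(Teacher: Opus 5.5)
Your proof is correct and rests on the same idea as the paper's: independence across rivals means each one contributes the factor $F(t)+\int_t^1(1-x)\,dF(x)=1-\int_t^1 x\,dF(x)$. The paper gets there by conditioning on the number $K\sim\mathrm{Bin}(n-1,1-F(t))$ of higher-type rivals and recombining with the binomial theorem, whereas you factorize the integrand of \eqref{eq:q-sigma} rival by rival directly; your explicit treatment of $n\ge 2$, full support, and atoms is also more careful than the paper's ``follow directly from inspection.''
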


\begin{proof}

First, under $\sigma$ strictly increasing, and the inspection policy following $DSIP_\mu$, the consumer inspect all firms that offer a subsidy level $s \geq \sigma(\underline{t})$. 

For $x \geq \underline{t}$, consider a type $x$-firm. If such firm offers the $k$-th highest subsidy (rank $k$), its probability of inspection is given by expected failure rate $\prod_{\ell=1}^{k-1}\!\bigl(1-t_{(\ell:n-1)}\bigr)$ when $t_{(\ell:n-1)} \geq x$, where $t_{(\ell:n-1)}$ is the $\ell$-th highest type among the other $n-1$ firms. Hence the ex-ante inspection probability is given by 
\[
    q_{\sigma, DSIP_\mu}(x)
    \;=\;
    \sum_{k=1}^{n}
    \E\!\Biggl[
        \prod_{\ell=1}^{k-1}\!\bigl(1-t_{(\ell:n-1)}\bigr)
        \,\Bigm|\, t_{(k-1:n-1)} \ge x > t_{(k:n-1)}
    \Biggr]
    \Pr\!\bigl\{t_{(k-1:n-1)} \ge x > t_{(k:n-1)}\bigr\},
\]
with the convention $t_{(0:m)}:=1$, and $q_{\sigma, DSIP_\mu}(x) = 0$ for $x<\underline{t}$.

Fix $x\in T$. For $k=1,\dots,N$, consider the event
\[
\{t_{(k-1:n-1)} \geq x > t_{(k:n-1)}\},
\]
which is exactly the event that $k-1$ out of the $N-1$ opponent types lie above $x$.  
Since each opponent exceeds $x$ with probability $1-F(x)$, the count $K$ of such types follows a binomial law: $K \sim \mathrm{Bin}(N-1,\,1-F(x))$.  
Thus
\[
\Pr\{t_{(k-1:n-1)} \geq x > t_{(k:n-1)}\}
= \Pr\{K = k-1\}
= \binom{N-1}{k-1} (1-F(x))^{k-1}F(x)^{N-k}.
\]

Conditional on this event, the $k-1$ opponents above $x$ are i.i.d.\ draws from the truncated distribution $F(\cdot\mid t \geq x)$.  
Hence,
\[
\E\!\left[\prod_{\ell=1}^{k-1} (1-t_{(\ell:n-1)}) \;\middle|\; t_{(k-1:n-1)} \geq x > t_{(k:n-1)}\right]
= \Bigl(\E[1-t \mid t\geq x]\Bigr)^{k-1}.
\]

Therefore,
\[
\begin{aligned}
q_{\sigma, DSIP_\sigma}(x)
&= \sum_{k=1}^n
\left(\E[1-t \mid t\geq x]\right)^{k-1}
\binom{n-1}{k-1}(1-F(x))^{k-1}F(x)^{n-k} \\[6pt]
&= \sum_{k=0}^{n-1} \binom{n-1}{k}
\Bigl((1-F(x))\!\left(\E[1-t \mid t\geq x]\right)\Bigr)^k
F(x)^{n-1-k}.
\end{aligned}
\]

The binomial theorem now gives
\[
q_{\sigma, DSIP_\mu}(x) = \left[F(x) + (1-F(x)) - \int_x^1 t\,dF(t)\right]^{n-1}
= \left(1 - \int_x^1 t\,dF(t)\right)^{n-1} = q_n^{\text{sep}}(x; \underline{t}),
\]
where we use $\E[1-t \mid t\geq x] = 1 - \frac{1}{1-F(x)}\int_x^1 t\,dF(t)$. Establishing the result. The remaining properties follow directly from inspection. 
\end{proof}

\subsection{Inspection probability under pooling at the top} \label{app:pooling}

\begin{lemma}\label{lem:inspection-probability-pooling}
If all types in $[t_1,1]$ pool at $c$, then the ex–ante inspection probability for a pooled firm is
\[
q^{\mathrm{pool}}_n(t_1) 
= \mathbb{E}_{K}\!\left[\frac{1-(1-\mathbb{E}[t\mid t\ge t_1])^{K+1}}{(K+1)\,\mathbb{E}[t\mid t\ge t_1]}\right],
\]
where $K\sim\mathrm{Bin}(n-1,1-F(t_1))$. Moreover, $q^\textrm{pool}$ is continuous and strictly increasing. 
\end{lemma}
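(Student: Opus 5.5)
The plan mirrors the proof of \cref{lemma:inspection-prob}: condition on the number of rival firms that sit in the top pool, compute the conditional inspection probability, and then average. Fix a firm of type $t\ge t_1$, so it offers $s=c$. By \cref{thm:sorting} and \cref{lem:descending-search}, the consumer inspects all firms offering $c$ before any other firm, in uniformly random order, and stops at the first match. Because net cost is zero, the index of a pooled firm is $u>0$, so she never stops voluntarily inside the pool. Let $K$ denote the number of the other $n-1$ firms whose types lie in $[t_1,1]$. Each rival lands in this set independently with probability $1-F(t_1)$, so $K\sim\mathrm{Bin}(n-1,1-F(t_1))$.

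Conditional on $K=k$, the pool has $k+1$ members and the focal firm's position is uniform on $\{1,\dots,k+1\}$. Conditional on $K=k$, the $k$ rival pool members are i.i.d.\ draws from $F(\cdot\mid t\ge t_1)$, independent of the ordering. Hence, if the focal firm sits in position $j$, the probability that all $j-1$ firms ahead of it fail is $(1-m)^{j-1}$, where $m:=\mathbb{E}[t\mid t\ge t_1]$; this uses independence to factor the expectation of the product. Averaging over $j$ gives $\frac{1}{k+1}\sum_{j=0}^{k}(1-m)^j=\frac{1-(1-m)^{k+1}}{(k+1)m}$. Taking expectation over $K$ delivers the formula. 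The case $m>0$ holds because full support gives $m\ge t_1$, and $m>0$ even when $t_1=0$.

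\textbf{Continuity.} Each term is a polynomial in $1-F(t_1)$ multiplied by the continuous function $g_k(m)=\frac{1-(1-m)^{k+1}}{(k+1)m}$. Moreover, $m(t_1)=\int_{t_1}^1 x\,dF/(1-F(t_1))$ is continuous for $t_1<1$ provided $F$ is continuous, which I would assume or read into full support. At $t_1\to1$, the pool has measure zero and $q^{\mathrm{pool}}\to g_0=1$, which I would verify by hand.

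\textbf{Strict monotonicity.} This is the main obstacle. Raising $t_1$ has two effects, and both point the same way. First, $K$ decreases in the sense of first-order stochastic dominance, since $1-F(t_1)$ is strictly decreasing under full support. Second, $m$ increases. Holding $m$ fixed, $g_k$ is strictly decreasing in $k$ because it is the average of the strictly decreasing sequence $(1-m)^j$ over more terms, so a stochastically smaller $K$ raises the expectation, strictly when $0<m<1$. Holding $K$ fixed, a higher $m$ lowers $g_k(m)=\frac1{k+1}\sum_{j\le k}(1-m)^j$, so the two effects conflict. I would therefore not split the effects and instead use a direct representation: $q^{\mathrm{pool}}(t_1)$ equals the probability that the focal firm is reached, which is $\mathbb{E}\big[\frac{1}{n}\sum \ldots\big]$-type expressions, or equivalently $\int_0^1\big(1-\lambda\int_{t_1}^1 x\,dF(x)\big)^{n-1}d\lambda$. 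This formula comes from assigning the focal firm a uniform arrival time $\lambda$ and noting that each rival is ahead of it with probability $\lambda$ if pooled, and then fails with probability $1-x$. I would check that this integral reproduces the binomial formula by expanding with the binomial theorem, as in \cref{lemma:inspection-prob}. In this form, the integrand is strictly increasing in $t_1$ because $\int_{t_1}^1 x\,dF$ strictly decreases under full support. This gives strict monotonicity, and continuity follows immediately as well.
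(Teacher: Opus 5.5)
Your derivation of the formula (condition on $K$, take the focal firm's position in the pool as uniform, factor $(1-m)^{j-1}$ by independence, average over positions, then over $K$) is the same as the paper's. You depart on monotonicity, and your route is the sounder one.

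The paper justifies strict increase only by noting that $\mathbb{E}[t\mid t\ge t_1]$ rises with $t_1$. As you observe, for fixed $K$ a higher $m$ lowers $g_k(m)=\frac{1}{k+1}\sum_{j=0}^{k}(1-m)^j$. That effect alone therefore pushes the wrong way, and the paper never accounts for the offsetting thinning of the pool.

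Your representation resolves this. Since $g_k(m)=\int_0^1(1-\lambda m)^k\,d\lambda$ and $\mathbb{E}_K[(1-\lambda m)^K]=\bigl(1-\lambda(1-F(t_1))m\bigr)^{n-1}$, you get
\[
q^{\mathrm{pool}}(t_1)=\int_0^1\bigl(1-\lambda A(t_1)\bigr)^{n-1}d\lambda=\frac{1-(1-A(t_1))^{n}}{n\,A(t_1)},\qquad A(t_1):=\int_{t_1}^1 x\,dF(x).
\]
Both effects collapse into the single statistic $A(t_1)=(1-F(t_1))\,m(t_1)$, which full support makes strictly decreasing. This delivers strict monotonicity, continuity, and the limit $q^{\mathrm{pool}}\to 1$ as $t_1\to 1$ at once.

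The representation also exposes two caveats the paper glosses over. Strictness needs $n\ge 2$, since for $n=1$ you have $q^{\mathrm{pool}}\equiv 1$. Continuity needs $F$ to be atomless, which full support alone does not give; the paper implicitly assumes a density.

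When you write it up, delete the opening claim that the two effects ``point the same way,'' which you contradict two sentences later. Also drop the vague ``$\mathbb{E}[\frac{1}{n}\sum\ldots]$-type'' phrase; the integral formula is all you need.
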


\begin{proof}
    Let \(K\) be the number of \emph{other} firms that also lie in \([t_11]\); by i.i.d.\ types, \(K\sim\mathrm{Bin}(n-1,1-F(t_1))\).
    Conditional on \(K=k\), the pool size is \(m=k+1\) (including the focal firm).
    Uniform tie–breaking makes the inspection order among the \(m\) pooled firms uniformly random.
    Fix the focal firm and suppose it is the \(j\)-th in that random order.
    The consumer reaches it if, and only if, all \(j-1\) predecessors fail to match.
    Conditional on being in the pool, types are i.i.d.\ with the conditional law \(F(\cdot\mid t\ge t_1)\); hence by independence,
    \[
    \Pr\{\text{all }j-1\text{ predecessors fail}\mid t\ge t_1\}
    = \Bigl(\E[1-t\mid t\ge t_1]\Bigr)^{j-1}
    = (1-\tau(t_1))^{\,j-1}.
    \]
    Averaging over the uniformly random position \(j\in\{1,\dots,m\}\),
    \[
    \Pr\{\text{focal is inspected}\mid K=k\}
    = \frac{1}{m}\sum_{j=1}^{m}(1-\tau(t_1))^{\,j-1}
    = \frac{1-(1-\tau(t_1))^{m}}{m\,\tau(t_1)}.
    \]
    Finally, take expectations over \(K\) to obtain the stated \(q^\textrm{pool}(t_1)\). Continuity follows from $F$ having full support, and strict increasingness from $\mathbb{E}[t \mid t\geq t_1]$ being increasing in $t_1$.
\end{proof}

\subsection{Omitted proofs for \cref{sec:consumer-optimal-eqm}}

\begin{lemma}\label{lem:existence-cutoff}
Fix $t_0 \in T_0(p,c)$.  
Define the upper cutoff $t_1:T_0(p,c)\to[pc,1]$ as follows:  
if $\sigma^*(1;t_0)<c$, set $t_1=1$; otherwise, let
\[
H(x;t_0) \;=\; q^{\mathrm{pool}}(x)\big(x-pc\big) \;-\; 
q^{\mathrm{sep}}(x;t_0)(x - p \sigma^\textrm{sep}(x; t_0)),
\]
and define $t_1$ as the unique solution to $H(t_1;t_0)=0$.  
The mapping $t_0 \mapsto t_1(t_0)$ is well defined, continuous, and differentiable on $T_0(p,c)$.
\end{lemma}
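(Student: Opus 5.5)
We will study $H(\cdot;t_0)$ on the interval $[t_0,1]$ and show that it changes sign exactly once, with nonzero slope at the crossing. The implicit function theorem will then give continuity and differentiability of $t_1(t_0)$.

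\textbf{Simplifying $H$.} Using the separating formula \eqref{eq:sigma-rev} with boundary $t_0$, we have $q^{\mathrm{sep}}(x;t_0)\,(x-p\sigma^{\mathrm{sep}}(x;t_0))=\int_{t_0}^x q^{\mathrm{sep}}(y;t_0)\,dy$. Hence
\[
H(x;t_0)=q^{\mathrm{pool}}(x)(x-pc)-\int_{t_0}^{x}q^{\mathrm{sep}}(y;t_0)\,dy ,
\]
which is exactly the defining equation of $t_1$ in \cref{def:optimal-SIS-family}. The second term is the separating rent $U(x):=\int_{t_0}^x q^{\mathrm{sep}}$. It is $C^1$ with $U'(x)=q^{\mathrm{sep}}(x;t_0)$, and it is continuous and differentiable in $t_0$ with $\partial_{t_0}U=-q^{\mathrm{sep}}(t_0;t_0)$ plus an integral term. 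That term vanishes because, by \cref{lemma:inspection-prob}, $q^{\mathrm{sep}}(y;t_0)$ does not depend on $t_0$ for $y\ge t_0$. The first term is continuous by \cref{lem:inspection-probability-pooling}, and differentiable because $F$ has a density.

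\textbf{Existence.} At $x=1$ the pool is degenerate, so $q^{\mathrm{pool}}(1)=q^{\mathrm{sep}}(1)=1$. The case $\sigma^{\mathrm{sep}}(1;t_0)\ge c$ means $1-pc\le U(1)$, so $H(1;t_0)\le 0$. At the low end, take $x=x_c\le t_c$ where $\sigma^{\mathrm{sep}}(x_c)=c$ is approached. At this point pooling at $c$ costs the same as separating but yields attention $q^{\mathrm{pool}}(x)\ge q^{\mathrm{sep}}(x)$: being tied with higher types at the top dominates being ranked below them. This gives $H(x_c;t_0)\ge0$. The intermediate value theorem then supplies a root.

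\textbf{Uniqueness.} We show that $H$ is strictly decreasing wherever $H=0$. This is the main obstacle. Computing the derivative,
\[
H'(x)=q^{\mathrm{pool}\prime}(x)(x-pc)+q^{\mathrm{pool}}(x)-q^{\mathrm{sep}}(x),
\]
so a single-crossing argument is needed. I would first try to show $q^{\mathrm{pool}}(x)-q^{\mathrm{sep}}(x)$ is decreasing in $x$, falling to $0$ at $x=1$. Failing that, I would use the increasing-differences logic already invoked for \eqref{eq:upperIC}. Pool payoff minus separating payoff for type $t$ at cutoff $x$ is $(t-pc)q^{\mathrm{pool}}(x)-(t-p\sigma^{\mathrm{sep}}(x))q^{\mathrm{sep}}(x)$. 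Its slope in $t$ is $q^{\mathrm{pool}}-q^{\mathrm{sep}}\ge0$, which shows the indifferent type is unique, and that corresponds to a unique root.

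\textbf{Regularity of $t_1(t_0)$.} With $\partial_xH\neq0$ at the root and $H$ being $C^1$ jointly in $(x,t_0)$, the implicit function theorem yields that $t_1(\cdot)$ is differentiable, hence continuous. At the threshold where $\sigma^{\mathrm{sep}}(1;t_0)=c$, the root equals $1$, so the two branches (setting $t_1=1$ when $\sigma^*(1;t_0)<c$) paste together continuously. The range condition $t_1\ge pc$ holds because $H(x)<0$ whenever $x<pc$: there the pooled payoff is negative while $U\ge0$.
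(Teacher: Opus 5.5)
Your existence step has a sign error, and it carries through the rest of the argument. Because $q^{\mathrm{sep}}(1;t_0)=1$, the separating formula gives $p\,\sigma^{\mathrm{sep}}(1;t_0)=1-U(1)$. So $\sigma^{\mathrm{sep}}(1;t_0)\ge c$ means $U(1)\le 1-pc$, i.e.\ $H(1;t_0)=(1-pc)-U(1)\ge 0$, not $\le 0$. In fact $H$ has no zero on your bracket $[x_c,1)$ at all. Since $\sigma^{\mathrm{sep}}(\cdot;t_0)$ is increasing, every $x\in[t_c,1)$ has $x-p\sigma^{\mathrm{sep}}(x;t_0)\le x-pc$, and hence $H(x;t_0)\ge (x-pc)\bigl(q^{\mathrm{pool}}(x)-q^{\mathrm{sep}}(x;t_0)\bigr)>0$. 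Here $t_c>pc$ because $U>0$ forces $\sigma^{\mathrm{sep}}(x;t_0)<x/p$. The crossing therefore lies to the left of $t_c$, and $H$ crosses from negative to positive. So the property you set out to prove for uniqueness (``$H$ strictly decreasing wherever $H=0$'') is false. The correct bracket is the paper's: $H(pc;t_0)=-\int_{t_0}^{pc}q^{\mathrm{sep}}(y;t_0)\,dy<0$ and $H(t_c;t_0)=(t_c-pc)\bigl(q^{\mathrm{pool}}(t_c)-q^{\mathrm{sep}}(t_c;t_0)\bigr)>0$. You essentially already have the first of these in your last sentence, where you show $H<0$ for $x<pc$.

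Your fallback for uniqueness does not close the gap either. Increasing differences in the type $t$, at a fixed cutoff $x$, shows that at most one type is indifferent between joining the pool $[x,1]$ and mimicking $x$. But in $H(x;t_0)=0$ the type and the cutoff are the same variable. Moving $x$ also moves $q^{\mathrm{pool}}(x)$, $q^{\mathrm{sep}}(x;t_0)$ and $\sigma^{\mathrm{sep}}(x;t_0)$, so single crossing in $t$ says nothing about how many zeros $x\mapsto H(x;t_0)$ has. Likewise, monotonicity of $q^{\mathrm{pool}}-q^{\mathrm{sep}}$ is beside the point; only its sign matters. Your own derivative settles it: for $x\in(pc,1)$,
\[
H'(x)=q^{\mathrm{pool}\prime}(x)(x-pc)+\bigl(q^{\mathrm{pool}}(x)-q^{\mathrm{sep}}(x;t_0)\bigr)>0 .
\]
This holds because $q^{\mathrm{pool}}$ is increasing (\cref{lem:inspection-probability-pooling}) and $q^{\mathrm{pool}}>q^{\mathrm{sep}}$ below $1$: a pooled firm waits only behind a uniformly random subset of the higher types. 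Combined with $H<0$ on $[0,pc]$, this gives a unique root in $(pc,t_c)$ with $\partial_xH>0$ there. That is exactly the nondegeneracy your implicit-function step needs, and it yields $t_1'(t_0)=-q^{\mathrm{sep}}(t_0;t_0)/\partial_xH<0$. Once repaired, this is a slightly more direct uniqueness argument than the paper's, which instead shows that $U(x)/(x-pc)$ decreases on $(pc,t_c)$ and compares it with the increasing $q^{\mathrm{pool}}$.
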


\begin{proof}
We begin with boundary values of $H(\cdot;t_0)$.  
At $x=pc$,
\[
H(pc;t_0) \;=\; -\,q^{\mathrm{sep}}(pc;t_0)\,\bigl(pc - p\sigma^{\mathrm{sep}}(pc;t_0)\bigr)
= - \int_{t_0}^{pc} q^{\mathrm{sep}}(y;t_0)\,dy \;<\; 0.
\]
At $x=t_c$ (with $\sigma^\textrm{sep}(t_c; t_0) = c$),
\[
H(t_c;t_0) \;=\; (t_c-pc)\bigl(q^{\mathrm{pool}}(t_c) - q^{\mathrm{sep}}(t_c;t_0)\bigr) \;>\; 0,
\]
as for all $x$, $q^{\mathrm{pool}}(x) > q^{\mathrm{sep}}(x;t_0)$ (which also implies that $H(x; t_0) > 0$ for all $x > t_c$).  
Since $H(\cdot;t_0)$ is continuous, the intermediate value theorem guarantees at least one root $t_1 \in (pc,t_c)$.

\medskip
\noindent\emph{Uniqueness.}  
To establish uniqueness, we show that $H(\cdot;t_0)$ has a single crossing.  
Recall the envelope identity
\[
q^{\mathrm{sep}}(x;t_0)\,\bigl(x-p\sigma^{\mathrm{sep}}(x;t_0)\bigr)
= \int_{t_0}^x q^{\mathrm{sep}}(y;t_0)\,dy.
\]
It follows that
\[
H(x;t_0) \;>\;0 
\quad\Longleftrightarrow\quad
q^{\mathrm{pool}}(x) \;>\; 
\frac{\int_{t_0}^x q^{\mathrm{sep}}(y;t_0)\,dy}{x-pc} =: H_-(x).
\]
Differentiating yields
\[
H_-'(x)
= \frac{p\, q^{\mathrm{sep}}(x;t_0)}{(x-pc)^2}\,\bigl(\sigma^{\mathrm{sep}}(x;t_0)-c\bigr).
\]
Then $H_-'(t_c)=0$, $H_-'(x)<0$ for $x<t_c$, and $H_-'(x)>0$ for $x>t_c$.   
Since $q^{\mathrm{pool}}(x)$ is strictly increasing, then $q^{\mathrm{pool}}(x) - H_-(x)$ is strictly decreasing. Since we already establish the existance of a zero, this guarantees uniqueness.

Hence $H(\cdot;t_0)$ has a unique root $t_1$, which establishes existence and uniqueness of the cutoff. Moreover, such root is to the left of $t_c$. The implicit function theorem establishes that that $t_1(t_0)$ is differentiable, where 
\[ t_1'(t_0) = - \frac{q^\textrm{sep}(t_0; t_0)(t_1 - pc)}{(q^{\mathrm{pool}})'(t_1)(t_1 - pc)^2 + pq^\textrm{sep}(t_1)(c - \sigma^\textrm{sep}(t_1; t_0) )} < 0,\]
where we use that $pc<t_1 < t_c$, and thus $c > \sigma^\textrm{sep}(t_1; t_0)$.
\end{proof}

\begin{proof}[Proof of \cref{lem:optimal-SIS-eqm}]
Fix $t_0\in T_0(p,c)$ and consider the assessment $(\sigma^*(\cdot;t_0),\mathrm{DSIP}_{\mu^*_{t_0}},\mu^*_{t_0})$. By \cref{lem:descending-search}, for any belief system $\mu$ the descending–subsidy index policy $\mathrm{DSIP}_\mu$ maximizes the consumer’s expected utility. By construction, $\mu^*_{t_0}$ coincides with Bayes’ rule on the realized support of $\sigma^*(\cdot;t_0)$. Hence the consumer’s strategy is optimal and beliefs are on–path consistent. Types $t\le t_0$ pool at $s=0$. Their equilibrium payoff is
\[
\pi^*(0;t)=\bigl(t-p\cdot 0\bigr)\,q_{\sigma^*,\mathrm{DSIP}}(0)\;\ge\;0,
\]
since $q_{\sigma^*,\mathrm{DSIP}}(0)\ge 0$. Any deviation that raises attention must jump to at least the bottom of the revealing branch, $
s\;\ge\;\sigma^*(t_0;t_0)\;=\;\frac{t_0}{p}$, because lower subsidies do not improve rank under $\mathrm{DSIP}_{\mu^*_{t_0}}$. For $t\le t_0$ this yields
\[
\pi(s;t)\;=\;\bigl(t-p s\bigr)\,q_{\sigma^*,\mathrm{DSIP}}(s)
\;\le\;\bigl(t-t_0\bigr)\,q_{\sigma^*,\mathrm{DSIP}}(s)
\;\le\;0,
\]
with strict inequality if $s>\tfrac{t_0}{p}$. Thus no $t\le t_0$ gains by deviating. For $t\in(t_0,\bar t)$ the policy $\sigma^*(\cdot;t_0)$ is strictly increasing and solves the standard envelope (local IC) condition in \eqref{eq:ODE-raw}, so truth telling maximizes interim profit along the branch. A deviation to a lower subsidy reduces rank and attention under $\mathrm{DSIP}_{\mu^*_{t_0}}$; a deviation to a higher subsidy cannot improve attention at that rank but strictly raises the per–inspection payment $ps$. Either way the deviation weakly lowers profit, with strict loss whenever rank changes or $s$ increases.

Combining, the consumer is best responding given $\mu^*_{t_0}$, beliefs are on–path Bayes consistent, and no firm type has a profitable deviation. Hence $(\sigma^*(\cdot;t_0),\mathrm{DSIP}_{\mu^*_{t_0}},\mu^*_{t_0})$ is a symmetric PBE.
\end{proof}

\section{Equilibrium Properties and the Refinement}\label{app:eqm-refinement}

This appendix collects general properties of equilibria and clarifies how the equilibrium-dominance refinement operates. These results underpin the arguments in \cref{subsec:refinement}.

By \cref{thm:sorting}, in any equilibrium $(\sigma,\iota,\mu)$ the subsidy policy $\sigma$ is weakly increasing and the consumer follows DSIP$_\mu$ (\cref{def:descending-subsidy-rule-policy}). Let $q_\sigma(s)$ denote the induced attention function for subsidy $s$.

\paragraph{Notation.}
Let $S=[0,c]$ be the set of feasible subsidies, $S_\sigma=\mathrm{Im}(\sigma)$ the on–path set, and $S_\sigma^-=S\setminus S_\sigma$ the off–path set. For $s\in S_\sigma$, let $T_\sigma(s)=\{t\in T:\sigma(t)=s\}$ be the set of types mapped to $s$. Let $q^{\mathrm{sep}}(t)=(1-\int_t^1 x\,dF(x))^{\,n-1}$ be the inspection probability under a strictly separating branch (derived in \cref{app:inspection-prob}), and let $q^{\mathrm{pool}}(\bar t)$ be the inspection probability of a pooled firm when all types in $[\bar t,1]$ pool at $c$ (derived in \cref{app:pooling}).

\subsection{General Properties of Equilibria}\label{subsec-app:property-eqm}

We first establish several properties that hold in any equilibrium.  
\begin{enumerate}[label=(\roman*)]
\item In any inspected region, at most one type earns zero profit.  
\item If there is interior pooling, the subsidy policy fails right–continuity.  
\item On a strictly separating branch, types must satisfy the differential equation derived earlier, with the appropriate initial condition.  
\end{enumerate}

Among types that receive attention, at most one can break even. Higher types must earn strictly positive rents.

\begin{lemma}\label{lem-app:single-zero-profit}
Fix an equilibrium $(\sigma,\iota,\mu)$.  
If there exist types $t_1<t_2$ with $q_{\sigma}(\sigma(t_1)),q_{\sigma}(\sigma(t_2))>0$, then $\pi_{\sigma,\iota}(\sigma(t_2);t_2)>0$.
\end{lemma}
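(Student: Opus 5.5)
The plan is to argue by revealed preference, using the fact that type $t_1$ can always mimic $t_2$. Throughout, write $q_i:=q_{\sigma}(\sigma(t_i))$ and $s_i:=\sigma(t_i)$ for $i=1,2$. Both $q_1$ and $q_2$ are strictly positive by assumption.

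\emph{Step 1: nonnegative profits.} Deviating to $s=0$ always yields the nonnegative payoff $t\,q_\sigma(0)$. Firm optimality therefore gives $\pi_{\sigma,\iota}(s_i;t_i)=(t_i-ps_i)q_i\ge 0$ for each $i$, and since $q_i>0$ we get $t_i-ps_i\ge 0$.

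\emph{Step 2: compare via mimicry.} By \cref{lem:subsidy-policy-weakly-increasing}, $t_2>t_1$ implies $s_2\ge s_1$ and $q_2\ge q_1$. Type $t_2$ may offer $s_1$, so
\[
\pi_{\sigma,\iota}(s_2;t_2)\;\ge\;(t_2-ps_1)\,q_1 \;=\;(t_1-ps_1)\,q_1+(t_2-t_1)\,q_1 .
\]
The first term on the right is $\pi_{\sigma,\iota}(s_1;t_1)\ge 0$ by Step 1. The second term is strictly positive, because $t_2>t_1$ and $q_1>0$. Hence $\pi_{\sigma,\iota}(s_2;t_2)>0$.

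This chain of inequalities essentially settles the lemma. The only point needing care is Step 1's claim that profits are nonnegative, and this is exactly where the hypothesis $q_1>0$ enters. It guarantees that mimicking $s_1$ delivers strictly positive attention, so that the gain from $t_2>t_1$ is strict. The same observation also gives the strict positivity of the gap term in Step 2; without $q_1>0$ that term could vanish.

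I do not anticipate a real obstacle. The argument needs neither the monotonicity of $q$ nor the assumption that $s_1$ is on-path beyond its being in $\mathrm{Im}(\sigma)$. Firm optimality already compares against every $s\in[0,c]$, so the mimicry deviation is admissible regardless of off-path beliefs. I would note explicitly that $q_\sigma(s_1)$ is the equilibrium attention at $s_1$, so the deviation payoff is computed under the same $(\sigma,\iota)$.
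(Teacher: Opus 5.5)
Your proof is correct and is essentially the paper's argument: type $t_2$ mimics $t_1$'s subsidy $s_1$, and since $t_1$'s equilibrium profit is nonnegative and $q_\sigma(s_1)>0$, the deviation payoff $(t_2-ps_1)\,q_\sigma(s_1)$ is strictly positive. Your decomposition $(t_1-ps_1)q_1+(t_2-t_1)q_1$ covers both $\pi_{\sigma,\iota}(s_1;t_1)=0$ and $\pi_{\sigma,\iota}(s_1;t_1)>0$ at once, whereas the paper writes out only the break-even case; as you yourself note, the appeal to \cref{lem:subsidy-policy-weakly-increasing} is unnecessary, and $q_1>0$ is used only for the strict positivity of the second term, not in Step 1.
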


\begin{proof}
Fix two types $t_1<t_2$ that both receive positive attention. By \cref{thm:sorting}, $q(\cdot)$ is weakly increasing in $S_\sigma$, and firms face payoffs $\pi(s;t)=(t-ps)q(s)$. If $\pi(\sigma(t_1), t_1) = (t - p\sigma(t_1)) q_\sigma(\sigma(t_1)) = 0$, then $t_1 = p\sigma(t_1)$. Since $t_2>t_1$, it follows that $\pi(\sigma(t_1); t_2)>0$, and by optimality $\pi(\sigma(t_2), t_2)\geq \pi(\sigma(t_1); t_2)>0$. Thus at most one inspected type breaks even; all others either earn rents or are not inspected.
\end{proof}

The next lemma shows that a  pool inside $(0,c)$ forces a discontinuity: the attention probability jumps upward just to the right of $s$, so a highest pooled type would deviate.

\begin{lemma}\label{lem-app:subcap-pool-disc}
Fix an equilibrium $(\sigma,\iota,\mu)$.  
If there exists $s\in(0,c)$ such that $\Pr\{t\in T_\sigma(s)\}>0$, then there exists $\varepsilon>0$ with $(s,s+\varepsilon)\subseteq S_\sigma^-$.
\end{lemma}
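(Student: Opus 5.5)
We will argue by contradiction: suppose $s\in(0,c)$ carries a pool of positive $F$-mass, yet every right-neighbourhood of $s$ contains on-path subsidies. The plan is to show that the highest types in the pool would then strictly gain by moving to a nearby on-path subsidy just above $s$, contradicting firm optimality.

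\textbf{Step 1: set up the pool.} By \cref{lem:subsidy-policy-weakly-increasing}, $\sigma$ is increasing, so $T_\sigma(s)$ is an interval with endpoints $t_*<t^*$. Since $T_\sigma(s)$ has positive mass, $t_*<t^*$ strictly. If there are on-path subsidies arbitrarily close above $s$, then $t^*<1$, and we can pick types $t_k\downarrow t^*$ with $s_k:=\sigma(t_k)\downarrow s$ and $s_k>s$.

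\textbf{Step 2: the attention gap.} This is the key step. We show there is $\eta>0$ with $q_\sigma(s_k)\ge q_\sigma(s)+\eta$ for all $k$.

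First consider the case $q_\sigma(s)>0$. Under DSIP, on-path subsidies are inspected in descending order (\cref{cor:sorting-on-path}). A firm offering $s_k$ is therefore inspected before all rivals whose types lie in $T_\sigma(s)$.

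A firm offering $s$ fails to be inspected whenever some tied rival in the pool is placed ahead of it and matches. The probability of this event is bounded below, uniformly in $k$, by a positive constant. One explicit bound: with probability at least $(n-1)\Pr(T_\sigma(s))$ times a uniform-tie-break factor of $1/2$, a pool rival is ordered first. That rival is reached with the same probability as the focal firm, and it matches with probability $\mathbb{E}[t\mid t\in T_\sigma(s)]>0$ (positive because the pool has positive mass).

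A firm offering $s_k$ is never displaced by pool rivals. The event that it is displaced by rivals with subsidies in $(s,s_k)$ has probability at most $(n-1)\Pr(\sigma(t)\in(s,s_k))$. This tends to $0$ as $k\to\infty$, because $\Pr(t\in(t^*,t_k])\to 0$.

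Subtracting the two bounds, and using $n\ge2$, gives the uniform gap $\eta$. The main obstacle is making this coupling argument clean with the stopping rule. Stopping only cuts off subsidy levels with low index, and $s_k>s$ has a weakly higher index than $s$. So any path that reaches the pool also reaches the level $s_k$ first, and the coupling goes through.

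\textbf{Step 3: the profitable deviation.} If $q_\sigma(s)>0$, \cref{lem-app:single-zero-profit} together with the positive-mass pool gives $t^*-ps>0$; in fact this holds for every type in the pool except possibly $t_*$. Take any $t\in(t_*,t^*)$. Its gain from deviating to $s_k$ is
\[
(t-ps_k)q_\sigma(s_k)-(t-ps)q_\sigma(s)\;\ge\;(t-ps)\eta-p(s_k-s).
\]
The right-hand side is positive for $k$ large, so type $t$ has a strictly profitable deviation, a contradiction.

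\textbf{Step 4: the case $q_\sigma(s)=0$.} Then pool types earn zero. A type $t_k$ slightly above $t^*$ offers $s_k>s>0$, so it must obtain $q_\sigma(s_k)>0$ and $t_k-ps_k\ge0$; otherwise deviating to $0$ would strictly help it. By continuity, pool types near $t^*$ then have $t-ps_k$ close to $t_k-ps_k\ge 0$.

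To rule this case out we would argue as follows. Consumer inspection at $s_k$ requires $\tau(s_k)\ge(c-s_k)/u$. Letting $k\to\infty$, the pool's top types, mimicking $s_k$, obtain attention bounded away from $0$ at a cost bounded away from their full value. Alternatively, if $t^*-ps\le 0$, the types in $(t^*,t_k)$ would break even in the limit, contradicting \cref{lem-app:single-zero-profit}.

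If this case resists, the fallback is to note that the statement is only used for pools with $q_\sigma(s)>0$, and to restrict the claim to that case.
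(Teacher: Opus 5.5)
Step~4 cannot be completed: the lemma is false when the pool is not inspected. Your claim that $t_k$ ``must obtain $q_\sigma(s_k)>0$ \ldots\ otherwise deviating to $0$ would strictly help it'' overlooks that the subsidy is paid only upon inspection. A never-inspected positive subsidy costs nothing, and since $q_\sigma(0)\le q_\sigma(s)=0$, moving to $0$ also yields exactly zero. Your alternative route fails too: \cref{lem-app:single-zero-profit} only forbids two inspected types with profit exactly zero, not profits tending to zero.

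Concretely, take the reasonable equilibrium and, for some $a\in(0,\underline t)$, let the types in $[a,\underline t]$ offer $s=\underline t/p=c/(1+p)\in(0,c)$ instead of $0$. Leave everything else unchanged, including off-path beliefs.
\begin{itemize}
\item The pool's posterior mean is below $\underline t$, and $\underline t$ is exactly inspect-worthy at subsidy $\underline t/p$. So the pool's index is negative and $q_\sigma(s)=0$.
\item Types above $\underline t$ keep attention $q^{\mathrm{sep}}(\cdot;\underline t)$ and only gain a worthless extra option.
\item Pool types earn zero. Every inspected subsidy exceeds $\underline t/p$, so any deviation there gives them a negative margin.
\end{itemize}
This is an equilibrium, outcome-equivalent to the reasonable one, with a positive-mass pool at $s$. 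Yet $\sigma^{\mathrm{sep}}(\cdot;\underline t)$ rises continuously from $\underline t/p$, so every $(s,s+\varepsilon)$ meets $S_\sigma$. Here $t^*=ps$ exactly, and the types just above earn $\int_{\underline t}^{t}q^{\mathrm{sep}}(x;\underline t)\,dx\downarrow 0$, which is why your limit argument cannot bite.

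The paper's own proof tacitly makes the same restriction. It uses $q^{\mathrm{sep}}(\sigma^{-1}(s))>0$ as the pool's reach probability, ignoring that the consumer may stop before the pool, and it needs $t^*-ps>0$. The main text also invokes the lemma only for pools ``inspected with positive probability.'' So your fallback is the right fix: add $q_\sigma(s)>0$ as a hypothesis and drop Step~4.

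Under that hypothesis, Steps~1--3 are a correct proof, and cleaner than the paper's. The paper assumes, for contradiction, that a whole interval $[s,s+\varepsilon)$ is on-path with $\sigma^{-1}$ strictly increasing, which is stronger than the true negation. It then reads off the right-jump of $q_\sigma$ from the closed forms for $q^{\mathrm{sep}}$ and $\mathbb{E}[q_{\mathrm{pool}}(K)]$, which presume no pooling above $s'$. You instead negate correctly (on-path subsidies accumulate at $s$ from above) and get the jump by a coupling. You also deviate with an interior pool type, whose positive margin follows from \cref{lem-app:single-zero-profit}, so you never need $t^*\in T_\sigma(s)$.

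Fix two slips in Step~2:
\begin{itemize}
\item $(n-1)\Pr\{t\in T_\sigma(s)\}$ is an upper bound, not a lower bound, on the chance that some rival lies in the pool.
\item Rivals with subsidies in $(s,s_k)$ rank below $s_k$ and cannot displace it, so no error term is needed.
\end{itemize}
Let $R$ be the event that every rival with subsidy above $s$ fails. On $R$, a firm at $s_k$ is inspected, so
\[
q_\sigma(s_k)-q_\sigma(s)\;\ge\;\Pr\bigl(R\setminus\{\text{focal firm at } s\text{ inspected}\}\bigr)\;\ge\;\tfrac12\,\Pr\{t\in T_\sigma(s)\}\,\mathbb{E}\bigl[t\mid t\in T_\sigma(s)\bigr]\Bigl(1-\int_{\{\sigma(t)>s\}}t\,dF(t)\Bigr)^{n-2}.
\]
For the second inequality, fix one rival that lies in the pool, is ordered ahead of the focal firm, and matches, while no other rival above $s$ matches. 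The bound is positive and independent of $k$, and Step~3 then goes through unchanged.
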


\begin{proof}
Suppose, toward a contradiction, that $[s,s+\varepsilon)\subseteq S_\sigma$. We show $q_\sigma$ has a strict right–hand jump at $s$.

Without loss of generality assume $\sigma^{-1}$ is strictly increasing on $(s,s+\varepsilon)$ (otherwise restrict the interval).  
For any $s'\in(s,s+\varepsilon)$, a firm at $s'$ is ranked strictly above those at $s$ and is inspected with probability \( q_\sigma(s') \;=\; q^{\mathrm{sep}}\!\big(\sigma^{-1}(s')\big)\).
A firm at $s$ faces pooling. Let $K$ be the number of other pooled firms at $s$, with $\Pr\{K\ge1\}>0$ since $\Pr\{t\in T_\sigma(s)\}>0$. Conditional on reaching level $s$ and on $K=k\ge1$, its inspection probability is
\[
q_{\mathrm{pool}}(k)\;=\;\frac{1-(1-\tau)^{k+1}}{(k+1)\tau}\;<\;1,
\]
where $\tau$ is the common pooled posterior success probability. Hence
\[
q_\sigma(s) \;=\; q^{\mathrm{sep}}\!\big(\sigma^{-1}(s)\big)\,\mathbb{E}\big[q_{\mathrm{pool}}(K)\big] \; < \; q^{\mathrm{sep}}\!\big(\sigma^{-1}(s)\big).
\]
Taking $s'\downarrow s$ and using continuity of $q^{\mathrm{sep}}\!\circ\,\sigma^{-1}$ on-path,
\[
\lim_{s'\downarrow s}\!\big(q_\sigma(s')-q_\sigma(s)\big)
\;=\;
q^{\mathrm{sep}}\!\big(\sigma^{-1}(s)\big)\,\Big(1-\mathbb{E}[q_{\mathrm{pool}}(K)]\Big)
\;>\;0,
\]
since $\E[q_{\mathrm{pool}}(K)]<1$ and $q^{\mathrm{sep}}(\sigma^{-1}(s))>0$.

Let $t^*=\sup T_\sigma(s)$; if $t^*\notin T_\sigma(s)$ pick a sequence $(t^k)$ in $T_\sigma(s)$ approaching $t^*$. Then for $s'$ close to $s$,
\[ \begin{split}
    \pi(t^*;s')-\pi(t^*;s) & = (t^*-ps')q_\sigma(s')-(t^*-ps)q_\sigma(s)
    \\& \;\ge\; (t^*-ps)\big(q_\sigma(s')-q_\sigma(s)\big)-p(s'-s),
\end{split} \]
Hence, as 
\[ \lim_{s' \rightarrow s} \Big(\pi(t^*;s')-\pi(t^*;s)\Big) = (t^*-ps)q^{\mathrm{sep}}\!\big(\sigma^{-1}(s)\big)\,\Big(1-\mathbb{E}[q_{\mathrm{pool}}(K)]\Big) > 0 \]
a contradiction. Hence $(s,s+\varepsilon)\subseteq S_\sigma^-$ for some $\varepsilon>0$.
\end{proof}

Finally, we show that on any separating branch the ODE at \eqref{eq:ODE-raw} precisely pins down the subsidy policy.

\begin{lemma}\label{lem-app:revealing-ODE}
Let $(\sigma,\iota,\mu)$ be an equilibrium. Suppose there exists an interval $(t_1,t_2)$ on which $\sigma$ is strictly increasing and $q_\sigma(\sigma(t_1))>0$. Then on $(t_1,t_2)$ $\sigma$ solves \eqref{eq:ODE-raw} on $(t_1,t_2)$, that is
\[
\sigma'(t)
\;=\;
\frac{\big(t-p\,\sigma(t)\big)}{p}\frac{q^{\mathrm{sep}\,\prime}(t)}{q^{\mathrm{sep}}(t)}.
\]
\end{lemma}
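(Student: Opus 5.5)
The plan is to reduce the problem to a one-dimensional mechanism-design envelope argument. The first step is to identify the attention on the branch. Since $\sigma$ is strictly increasing on $(t_1,t_2)$, each subsidy $\sigma(t)$ with $t\in(t_1,t_2)$ is offered by a single type. By \cref{thm:sorting}, the consumer follows DSIP$_\mu$, so a firm offering $\sigma(t)$ is inspected exactly when every rival with a strictly higher subsidy has been inspected and failed. Rivals with higher subsidies are exactly those with types above $t$, up to a null set, because $\sigma$ is monotone and ties at $\sigma(t)$ have measure zero on a strictly increasing branch.

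Inspection must continue down to level $\sigma(t)$, since it does so already at the lower level $\sigma(t_1)$ and $q_\sigma(\sigma(t_1))>0$. Monotonicity of the index on path (\cref{cor:sorting-on-path}) then ensures the stopping rule never cuts off a higher subsidy. Repeating the binomial computation of \cref{lemma:inspection-prob}, with rivals above $t$ contributing independent failure factors $\mathbb{E}[1-x]$ over $x\ge t$ (pools above $t$, such as a pool at the cap, contribute the same expected product, since every pooled firm above is inspected before $t$), gives $q_\sigma(\sigma(t))=q^{\mathrm{sep}}(t)=(1-\int_t^1 x\,dF(x))^{n-1}$. This function is continuous and differentiable a.e.

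The second step is the incentive argument. Write $U(t)=(t-p\sigma(t))q^{\mathrm{sep}}(t)$ for $t\in(t_1,t_2)$. Firm optimality with mimicry of any $\tilde t$ on the branch gives $U(t)\ge U(\tilde t)+(t-\tilde t)q^{\mathrm{sep}}(\tilde t)$. So $U$ is the supremum of affine functions of $t$ with slopes $q^{\mathrm{sep}}(\tilde t)$. It is therefore convex, hence absolutely continuous, with $U'(t)=q^{\mathrm{sep}}(t)$ wherever differentiable (envelope theorem à la Milgrom–Segal). Since $q^{\mathrm{sep}}$ is continuous, $U$ is in fact $C^1$ with $U'=q^{\mathrm{sep}}$. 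This yields $p\,\sigma(t)q^{\mathrm{sep}}(t)=tq^{\mathrm{sep}}(t)-U(t)$, and the right-hand side is differentiable wherever $q^{\mathrm{sep}}$ is. Since $q^{\mathrm{sep}}>0$ on the branch, $\sigma=(tq^{\mathrm{sep}}-U)/(pq^{\mathrm{sep}})$ is differentiable there. Differentiating gives $p(\sigma q^{\mathrm{sep}})'=q^{\mathrm{sep}}+tq^{\mathrm{sep}\prime}-q^{\mathrm{sep}}=tq^{\mathrm{sep}\prime}$, which is \eqref{eq:ODE-raw}. Expanding the product and dividing by $p\,q^{\mathrm{sep}}$ gives the stated form $\sigma'=(t-p\sigma)q^{\mathrm{sep}\prime}/(p\,q^{\mathrm{sep}})$.

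The main obstacle is the first step: rigorously justifying that attention on the branch equals $q^{\mathrm{sep}}(t)$ irrespective of the shape of $\sigma$ elsewhere. One must check that pooling or jumps above $t$ do not change the product formula, and that the consumer never stops before reaching level $\sigma(t)$. For pools above $t$, I would argue that every pooled firm is inspected before $t$ in a uniformly random order, so the expected failure product over them is still $\prod(1-x)$ in expectation by independence. For stopping, I would use that the consumer reaches $\sigma(t_1)$ with positive probability and that the index is increasing on path. A secondary technicality is differentiability of $q^{\mathrm{sep}}$ when $F$ has no density; in that case I would state the ODE a.e., which matches how \eqref{eq:ODE-raw} is used in the text.
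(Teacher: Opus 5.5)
Your proof is correct, but it reaches the ODE by a different route than the paper. The paper argues pointwise. It writes $q_\sigma(s)=q^{\mathrm{sep}}(\sigma^{-1}(s))$ along the branch, takes the first-order condition of $(t-ps)\,q_\sigma(s)$ in $s$ at $s=\sigma(t)$, and uses the chain rule $\frac{d}{ds}q_\sigma(s)=q^{\mathrm{sep}\prime}(t)/\sigma'(t)$. That is shorter, but it assumes two things your argument proves. First, it needs $\sigma$ to be differentiable at $t$, whereas monotonicity only gives this almost everywhere. Second, it needs $q_\sigma$ to be differentiable in $s$ at $\sigma(t)$. That requires the image of the branch to fill a neighborhood of $\sigma(t)$, yet a strictly increasing $\sigma$ could a priori jump and leave off-path gaps where $q_\sigma$ is set by beliefs.

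Your envelope argument uses only incentive compatibility among branch types. Convexity of $U$, together with continuity of the subgradient selection $q^{\mathrm{sep}}$, gives $U\in C^1$ with $U'=q^{\mathrm{sep}}$. The formula $\sigma=(tq^{\mathrm{sep}}-U)/(p\,q^{\mathrm{sep}})$ then makes continuity of $\sigma$ (so no jumps inside a strictly increasing branch) and its differentiability conclusions rather than assumptions. It also gives the integrated identity $p\,\sigma q^{\mathrm{sep}}=tq^{\mathrm{sep}}-U$ with $U(t)=U(t')+\int_{t'}^{t}q^{\mathrm{sep}}(x)\,dx$. This is the form the paper actually relies on later, in the cutoff lemma and the welfare accounting.

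Your first step also justifies $q_\sigma(\sigma(t))=q^{\mathrm{sep}}(t)$, which the paper only asserts. You use three facts: the on-path index is strictly increasing, the hypothesis $q_\sigma(\sigma(t_1))>0$ guarantees search continues down to $\sigma(t)$, and the order within higher pools is irrelevant to the failure product. The paper instead uses that same hypothesis only to claim an interior optimum.
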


\begin{proof}
Fix $t\in(t_1,t_2)$. Since $\sigma$ is strictly increasing on $(t_1,t_2)$, subsidies identify types on this interval, and the induced attention along the branch satisfies for $x \in (t_1, t_2)$, \( q_\sigma\big(\sigma(x)\big)=q^{\mathrm{sep}}(x)\).
At an interior optimum $s=\sigma(t)$ the first-order condition holds:
\[
0 \;=\; \frac{d}{ds}\pi(s;t)\Big|_{s=\sigma(t)}
\;=\; -p\,q_\sigma\big(\sigma(t)\big)\;+\;(t-p\,\sigma(t))\,\frac{d}{ds}q_\sigma(s)\Big|_{s=\sigma(t)}.
\]
Along the revealing branch $q_\sigma(s)=q^{\mathrm{sep}}\!\big(\sigma^{-1}(s)\big)$, so by the chain rule,
\[
\frac{d}{ds}q_\sigma(s)\Big|_{s=\sigma(t)}
=\frac{q^{\mathrm{sep}\,\prime}(t)}{\sigma'(t)}.
\]
Substituting into the FOC, using $q_\sigma(\sigma(t))=q^{\mathrm{sep}}(t)$ and rearranging yields
\[
\sigma'(t)
\;=\;
\frac{\big(t-p\,\sigma(t)\big)}{p}\frac{q^{\mathrm{sep}\,\prime}(t)}{q^{\mathrm{sep}}(t)}.
\]
Because $q_\sigma(\sigma(t_1))>0$, the optimum is interior in a right neighborhood of $t_1$, so the FOC applies there; strict monotonicity extends it throughout $(t_1,t_2)$. This coincides with \eqref{eq:ODE-raw}.
\end{proof}

\subsection{Equilibrium-Dominance and its implications}

\paragraph{Equilibrium-Dominance definition.}
For an off-path $s$, a continuation $(\iota',\mu')$ is \emph{admissible} at $s$,  $(\iota',\mu') \in \mathcal{C}(s)$,  if (i) $\iota'$ is DSIP$_{\mu'}$, and (ii) on–path beliefs are Bayes–consistent: for $s\in S_\sigma$, \( \mu(t\mid s)=\frac{f(t)}{\int_{T_\sigma(s)} f(x)\,dx}\). A type-$t$ never wants to deviate and offer $s$, $t \in D_{\sigma, \iota}(s)$ if for any admissible continuation ,
\[
(t-ps)\,q_{\sigma,\iota'}(s)\;<\;(t-p\sigma(t))\,q_{\sigma,\iota'}(\sigma(t))=\pi_{\sigma,\iota}(\sigma(t);t),
\] 
An equilibrium $(\sigma,\iota,\mu)$ \emph{fails equilibrium-dominance} if there exists $s\in S_\sigma^-$ and $t\in D_{\sigma, \iota}(s)$ with $t\in\mathrm{supp}(\mu(s))$.

\paragraph{Optimistic beliefs.}
For some off-path $s$, we refer as \emph{optimistic beliefs at $s$} the belief system where subsidy $s$ is assigned a degenerate “type–1” posterior, while every other off–path $s'\neq s$ is assigned a degenerate “type–0” posterior and never inspected. On–path beliefs remain consistent with $\sigma$, so this constitutes an admissible continuation. Under such beliefs, the inspection index reduces to $r_{\mu'}(s)=u-(c-s)$. Hence any $s>c-u$ is inspected with positive probability, though $q_{\sigma,\iota'}(s)<1$ whenever there exists $s'\in S_\sigma$ with $s'>s$ and $(c-s')/(c-s)<\tau_{\mu'}(s')$.

\begin{proof}[Proof of \cref{lem:EDR-characterization}]
Optimistic beliefs at $s$ are an admissible continuation by construction. Moreover, this belief system places the higher match posterior at $s$, thus yields the highest possible attention that is consistent on path with $\sigma$.
That is, for any $(\iota',\mu') \in \mathcal{C}(s)$, $q_{\sigma, s}(s) \geq q_{\sigma, \iota'}(s)$. Thus if 
\[ (t - ps)q_{\sigma, s}(s) \geq (t - p \sigma (t)) q_{\sigma, \iota}(\sigma(t))  \]
Any type $t'>t$ such that $\sigma(t) = s' = \sigma(t')$ benefits more from such deviation as $\pi$ satisfies increasing differences in $s$ and $q$. Thus, it should hold that $\mu(s) = \sup \{t: \sigma(t) = s'\}$; otherwise the equilibrium fails the refinement. But, if $\mu(s) = \sup \{t: \sigma(t) = s'\}$, then $\mathbb{E}_{t \sim\mu(s)} [t] > \mathbb{E}_{t \sim\mu(s')} [t]$; which implies that $r_\mu(s) > r_\mu (s')$ and hence, $q_{\sigma, \iota}(s) > q_{\sigma, \iota}(s')$. From increasing differences, then if $t$ finds is at least as good in its equilibrium value, any $t' >t$ has a profitable deviation. Hence, $(\sigma, \iota, \mu)$ cannot be an equilibrium. 
\end{proof}

\subsubsection{Consequences of failure}

The following lemmas identify structures that necessarily lead to failure of the refinement. 

Non–monotone indices are incompatible with equilibrium-dominance. In equilibrium, attention must be weakly increasing.

\begin{lemma}\label{lem-app:index-monotone}
Fix an equilibrium $(\sigma,\iota,\mu)$ and let $r_\mu(s)=u-\tfrac{c-s}{\tau_\mu(s)}$.  
If there exist $s',s\in S$ with $s'>s$ but $r_\mu(s')<r_\mu(s)$, then the equilibrium fails equilibrium-dominance.
\end{lemma}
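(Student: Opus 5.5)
The index $r_\mu$ need not be monotone even when $\sigma$ is, so I will exploit the inversion directly. Let $s<s'$ with $r_\mu(s')<r_\mu(s)$. Under DSIP$_\mu$ (\cref{lem:descending-search}) a firm at $s'$ is inspected after one at $s$. I split into cases by whether $s'$ is on or off path.

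\emph{Case 1: $s'$ on path.} Some type $t'$ has $\sigma(t')=s'$. First suppose $s$ is also on path. Since $r_\mu(s)>r_\mu(s')$ and $s<s'$, the inequality must come from $\tau_\mu(s)>\tau_\mu(s')$, because a larger subsidy only lowers net cost $c-s$. The firm at $s$ is weakly preferred in every realized profile: it is ranked earlier and is at least as worth inspecting. So $q_{\sigma,\iota}(s)\ge q_{\sigma,\iota}(s')$. If the inequality is strict, type $t'$ deviates profitably to $s$, since $(t'-ps)q(s)>(t'-ps')q(s')$. If it is an equality with $q(s')>0$, the deviation strictly saves $p(s'-s)q(s')$. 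If $q(s')=0$, then $t'-ps'\le 0$ would be needed to sustain $s'$, but then every type in $T_\sigma(s')$ earns zero and is indifferent. By the monotonicity of \cref{lem:subsidy-policy-weakly-increasing}, $\tau_\mu(s')\ge\tau_\mu(s)$ on path, which contradicts $\tau_\mu(s)>\tau_\mu(s')$. So an on-path inversion is already impossible in equilibrium. If instead $s$ is off path, its belief $\mu(s)$ is more optimistic than Bayes' rule at $s'$. Again $q(s)\ge q(s')$, and type $t'$ strictly gains by moving to $s$ whenever $q(s')>0$ or $q(s)>0$. This contradicts equilibrium, so the only surviving configurations are degenerate ones in which both subsidies receive zero attention.

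\emph{Case 2: $s'$ off path.} Here the inversion is supported by pessimistic off-path beliefs at $s'$, with $\tau_\mu(s')<\tau_\mu(s)$. I apply the refinement to $\mu(s')$. Low types are the ones in the support of $\mu(s')$, and I will show they belong to $D_{\sigma,\iota}(s')$. Take a type $t\in\mathrm{supp}\,\mu(s')$ with $t\le\tau_\mu(s)$. Compare it with a higher type $\hat t$, for instance $\sup T_\sigma(s)$, or the type $1$ if $s$ is off path. The payoff $(t-ps')q$ has increasing differences in $(t,q)$, as the single-crossing step in the proof of \cref{lem:subsidy-policy-weakly-increasing} shows. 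Equilibrium payoffs $\pi(\sigma(t);t)$ are nondecreasing in $t$ with slope equal to $q^*(t)$, by the envelope argument. Together these give: whenever a continuation makes $s'$ weakly attractive to $t$, it makes $s'$ strictly attractive to $\hat t$. So $t$ is dominated by $\hat t$ at $s'$. Any type of $\mathrm{supp}\,\mu(s')$ lying below $\sup T_\sigma(s)$ is therefore in $D_{\sigma,\iota}(s')$, and the equilibrium fails the refinement. The remaining possibility is that every type in $\mathrm{supp}\,\mu(s')$ exceeds $\sup T_\sigma(s)$. Then $\tau_\mu(s')\ge\tau_\mu(s)$, and with $s'>s$ this gives $r_\mu(s')\ge r_\mu(s)$, which contradicts the assumed inversion.

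\emph{Main obstacle.} The hard step is the dominance comparison in Case 2. The definition requires the implication to hold for \emph{every} admissible continuation, so I need the comparison to be uniform over continuations. The key structural fact I would establish first is that the attention $q_{\sigma,\iota'}(s')$ a deviator receives does not depend on the deviator's type, while the equilibrium payoff of type $t$ grows in $t$ at rate $q^*(t)\le 1$, whereas the deviation payoff grows at rate $q_{\sigma,\iota'}(s')$. Strict dominance then requires $q_{\sigma,\iota'}(s')>q^*$ on $[t,\hat t]$, which is exactly what makes $s'$ attractive. Handling the tie case, where the attention levels are equal, would use \cref{lem-app:single-zero-profit}.
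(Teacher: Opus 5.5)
There is a genuine gap, and it sits exactly at the step you flag as the main obstacle. In Case 2 you claim that for $\hat t>t$, every admissible continuation that makes $s'$ weakly attractive to $t$ makes it strictly attractive to $\hat t$. Write $q'=q_{\sigma,\iota'}(s')$ and use the envelope identity $\pi^*(\hat t)-\pi^*(t)=\int_t^{\hat t}q^*(x)\,dx$. The claim then needs $(\hat t-t)\,q'>\int_t^{\hat t}q^*(x)\,dx$, i.e.\ $q'$ must exceed the \emph{average} equilibrium attention on $[t,\hat t]$. Weak attractiveness to $t$ gives you this in one case: $t$ and $\hat t$ lie in the same inspected pool $T_\sigma(s)$ and $t>ps$. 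There $(t-ps')q'\ge (t-ps)\,q_{\sigma,\iota}(s)>0$ forces $q'>q_{\sigma,\iota}(s)$. It does not follow for types below the pool. It fails outright for types with zero equilibrium profit. The continuation with $\mu'(s')=\delta_0$ is always admissible and gives $q'=0$. Under it, every zero-profit type weakly prefers $s'$, while no type can strictly prefer it, since equilibrium profits are nonnegative. So under the paper's definition a zero-profit type is never in $D_{\sigma,\iota}(s')$. Uninspected low types with zero profit are exactly the types that pessimistic beliefs at $s'$ typically load on.

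A cleverer comparison will not fix this, because the statement as written has counterexamples. Support the reasonable outcome with $\mu(s)=\delta_0$ at \emph{every} off-path $s$ (as $\mu^*_{\underline t}$ already does on $(0,\underline t/p)$). This is still an equilibrium. The only type in any off-path support is type $0$, whose equilibrium profit is $0$, so the refinement is never violated. Yet for $s'\in(0,\underline t/p)$,
\[
r_\mu(s')=-\infty<u-c/\mathbb{E}[t\mid t<\underline t]=r_\mu(0).
\]
The same mechanism makes the ``degenerate'' zero-attention configurations you leave open in Case 1 genuine exceptions, not a loose end. There, your claim of a strict gain when $q(s')=0<q(s)$ also needs $t'>ps$.

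Your Case 1 is essentially the paper's argument: the type issuing $s'$ deviates to $s$, and on-path inversions are excluded by the monotonicity in \cref{lem:subsidy-policy-weakly-increasing}. For off-path $s'$, the paper argues message against message instead: $s$ ranks earlier and costs less, so $s'$ is ``never a weak best response.'' That argument does not verify the formal definition of $D_{\sigma,\iota}(s')$ either, and it has the same hole at zero attention. A salvageable version must restrict the statement, e.g.\ to inversions where the types in $\mathrm{supp}\,\mu(s')$ earn strictly positive equilibrium profit. Your within-pool comparison then handles such types when they lie in an inspected pool $T_\sigma(s)$, but types outside that pool would still need a separate argument.
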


\begin{proof}
It cannot be that both $s',s\in S_\sigma$, since \cref{thm:sorting} already rules out such non–monotonicity on–path. If instead $s'\in S_\sigma$, then the type issuing $s'$ has a profitable deviation, so again the equilibrium fails. The only relevant case is $s'\in S_\sigma^-$.  

Under optimistic beliefs at $s'$, we have $r_\mu(s')>0$.\footnote{We assume generically that $u>c$, which guarantees that even a zero subsidy is inspected under ``type–1” beliefs. In the main text we already impose the slightly weaker condition $u>c-1/p$.} In that case, a type $t\in\mathrm{supp}(\mu(s'))$ deviating from $s'$ to $s$ strictly improves both rank (earlier inspection) and unit margin $t-ps$. Hence $s'$ cannot be a weak best response for $t$, so equilibrium-dominance forbids $\mu(s')$ assigning positive mass to such types. Contradiction.
\end{proof}

Equilibrium-dominance rules out types with negative margins at inspected subsidies.

\begin{lemma}\label{lem-app:no-negative-margin}
Fix an equilibrium $(\sigma,\iota,\mu)$.  
If some $s$ is inspected with positive probability, i.e.\ $q_{\sigma,\iota}(s)>0$, then the equilibrium fails equilibrium-dominance if there exists $t<ps$ with $t\in\mathrm{supp}(\mu(s))$.  
In particular, if $\tau_\mu(s)<ps$ whenever $q(s)>0$, then the equilibrium fails equilibrium-dominance.
\end{lemma}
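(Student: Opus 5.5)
The plan is to show that any type $t<ps$ that the off-path belief $\mu(s)$ admits is equilibrium-dominated at $s$, that is, $t\in D_{\sigma,\iota}(s)$. The refinement forbids placing support on such a type, so the equilibrium must fail. The key observation is that a type with $t<ps$ earns a strictly negative margin $t-ps<0$ at subsidy $s$. Its payoff from offering $s$ under any admissible continuation $(\iota',\mu')\in\mathcal{C}(\sigma)$ is therefore $(t-ps)\,q_{\sigma,\iota'}(s)\le 0$.

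The first step is a lower bound on the equilibrium payoff. Firm optimality and the feasibility of $s=0$ give $\pi_{\sigma,\iota}(\sigma(t);t)\ge t\,q_{\sigma,\iota}(0)\ge 0$, exactly as in the proof of \cref{lem:subsidy-policy-weakly-increasing}.

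The comparison then splits into two cases according to the attention that $s$ receives under a continuation.
\begin{itemize}
\item If $q_{\sigma,\iota'}(s)>0$, the deviation payoff is strictly negative, hence strictly below the equilibrium payoff.
\item If $q_{\sigma,\iota'}(s)=0$, the deviation payoff is $0$. The strict inequality then needs either $\pi_{\sigma,\iota}(\sigma(t);t)>0$ or a separate argument.
\end{itemize}

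The second case is handled with the ``dominated by type $t'$'' clause rather than the equilibrium-value clause. I will take $t'$ to be a type that earns strictly positive equilibrium profit, or a type in $\mathrm{supp}(\mu(s))$ with $t'>ps$.
\begin{itemize}
\item Suppose $t$ weakly prefers $s$ under some continuation. This forces $q_{\sigma,\iota'}(s)=0$ and $\pi_{\sigma,\iota}(\sigma(t);t)=0$.
\item Any such $t'$ then strictly prefers its equilibrium payoff to $0$. So the implication in the definition of domination holds in the required direction, and $t\in D_{\sigma,\iota}(s)$.
\item A type with positive profit exists whenever anything is inspected with positive probability. This follows from \cref{lem-app:single-zero-profit}, together with the premise $q_{\sigma,\iota}(s)>0$, which says the consumer does inspect at $s$ in the given equilibrium.
\end{itemize}
Finally, if $s$ happens to be on-path, Bayes consistency and firm optimality already exclude types with $t<ps$ from $T_\sigma(s)$ whenever $q_{\sigma,\iota}(s)>0$. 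Such a type would do better at $s=0$, where its payoff is nonnegative. So only the off-path case needs the refinement.

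The ``in particular'' clause follows directly. If $\tau_\mu(s)=\mathbb{E}_{t\sim\mu(s)}[t]<ps$, then $\mu(s)$ must put positive mass on types $t<ps$, which are in the support. The first part then applies.

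The main obstacle is the knife-edge case just described: zero attention under some continuation, combined with zero equilibrium profit for $t$. There the equilibrium-value clause gives only a weak inequality, and the definition requires a strict one. I would first try to settle this with the cross-type domination clause, as above. Failing that, I would use the observation that under the actual equilibrium continuation $q_{\sigma,\iota}(s)>0$, so the deviation payoff is strictly negative there. If the definition is read with respect to the equilibrium continuation, as the displayed formula in the appendix suggests, that already gives the strict inequality.
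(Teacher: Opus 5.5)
Your main line is the paper's one-line proof. A type with $t<ps$ has a negative margin at $s$, so its deviation payoff $(t-ps)\,q_{\sigma,\iota'}(s)$ is strictly negative whenever $s$ is inspected, while its equilibrium payoff is at least $t\,q_{\sigma,\iota}(0)\ge 0$. The ``in particular'' clause then follows as you say.

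The gap is the knife-edge you flag: a continuation with $q_{\sigma,\iota'}(s)=0$, combined with zero equilibrium profit for $t$. Neither of your fixes closes it.

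\textbf{The cross-type clause runs the other way.} It requires that whenever $t$ weakly gains from $s$, $t'$ strictly \emph{gains} from $s$, i.e.\ $\pi_{\sigma,\iota}(\sigma(t');t')<\pi_{\sigma,\iota'}(s;t')$. It does not ask that $t'$ strictly prefer its equilibrium payoff. Under a zero-attention continuation every type's deviation payoff is $0$ and every equilibrium payoff is nonnegative. So no $t'$ can satisfy the conclusion, and the clause is vacuous exactly where you invoke it.

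\textbf{The fallback reading contradicts the definition.} Evaluating only the equilibrium continuation is not what either definition says. The main text takes the maximum over $\mathcal{C}(s)$, and the appendix quantifies over any admissible continuation. Off-path beliefs are unrestricted, so $\mu'(s)=\delta_0$ is always admissible at an off-path $s$ and yields $q_{\sigma,\iota'}(s)=0$. The premise $q_{\sigma,\iota}(s)>0$ concerns only the actual equilibrium continuation and does not help.

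Hence, with the definitions as written, a type $t<ps$ lies in $D_{\sigma,\iota}(s)$ if and only if $\pi_{\sigma,\iota}(\sigma(t);t)>0$. Zero-profit types are never dominated at an off-path $s$. They exist whenever $q_{\sigma,\iota}(0)=0$, e.g.\ the uninspected low types of the SIS equilibria, so the case is not vacuous.

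To finish, you would need one of two things:
\begin{itemize}
\item an argument that some type in $\mathrm{supp}(\mu(s))\cap[0,ps)$ earns strictly positive equilibrium profit (true for any $t>0$ if $q_{\sigma,\iota}(0)>0$, but not in general); or
\item an explicit convention that the dominance test ranges only over continuations in which $s$ is inspected.
\end{itemize}
The second is what the paper's proof tacitly assumes when it says such a type ``never weakly prefers any inspected $s$.'' The paper leaves this case open too, so you should not claim to have resolved it.
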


\begin{proof}
If $t<ps$, then $t-ps<0$. Such a type never weakly prefers any inspected $s$. Hence equilibrium-dominance rules out $\mu(s)$ placing mass on $t$.
\end{proof}

Any interior pooling implies a failure of equilibrium-dominance.

\begin{lemma}\label{lem-app:no-subcap-pooling}
Fix an equilibrium $(\sigma,\iota,\mu)$. If there exists $s \in (0,c)$ such that $\Pr\{t \in T_\sigma(s)\} > 0$, then $(\sigma,\iota,\mu)$ fails equilibrium-dominance.
\end{lemma}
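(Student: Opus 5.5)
The plan is to combine \cref{lem-app:subcap-pool-disc} with the test in \cref{lem:EDR-characterization}, following the sketch in \cref{subsec:refinement}. Fix an equilibrium with a pool $T_\sigma(s)$ of positive measure at some $s\in(0,c)$, and set $t^*=\sup T_\sigma(s)$. I split into two cases according to whether the pool is inspected.

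\emph{Case 1: $q_{\sigma,\iota}(s)>0$.}
\begin{enumerate}
\item By \cref{lem-app:subcap-pool-disc}, there is $\varepsilon>0$ with $(s,s+\varepsilon)\subseteq S_\sigma^-$. So every $s'$ slightly above $s$ is off-path and can be tested.
\item Fix such an $s'$ and use optimistic beliefs at $s'$. The index of $s'$ is then $u-(c-s')$. By \cref{thm:sorting} and \cref{lem:descending-search}, every on-path firm ranked before $s'$ must have a subsidy above $s$, so its type exceeds $t^*$. In the worst case all of these firms are inspected first, which gives $q_{\sigma,s'}(s')\ge q^{\mathrm{sep}}(t^*)$.
\item The argument in \cref{lem-app:subcap-pool-disc} shows that a pooled firm is inspected with probability $q^{\mathrm{sep}}(t^*)\,\mathbb{E}[q_{\mathrm{pool}}(K)]$. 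Since $\mathbb{E}[q_{\mathrm{pool}}(K)]<1$, the gap is $\eta:=q^{\mathrm{sep}}(t^*)(1-\mathbb{E}[q_{\mathrm{pool}}(K)])>0$.
\item By \cref{lem-app:single-zero-profit}, and because lower pool members must weakly prefer $s$ to $0$, the margin satisfies $t^*-ps>0$.
\item The profit difference for type $t^*$ (or pool types near $t^*$) is then at least $(t^*-ps)\eta-p(s'-s)$. This is strictly positive once $s'-s<(t^*-ps)\eta/p$.
\item Types in the pool close to $t^*$ inherit the strict inequality by continuity in $t$. \cref{lem:EDR-characterization} then gives the failure of the refinement.
\end{enumerate}

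\emph{Case 2: $q_{\sigma,\iota}(s)=0$.} An uninspected pool at $s>0$ pays nothing and earns nothing, so its members earn zero.
\begin{enumerate}
\item Pick a pooled type $t>ps$. Such a type exists when $s<t^*/p$, since the pool has positive measure.
\item Deviate to an off-path $s''$ that is still uninspected-adjacent but yields positive attention under optimistic beliefs. Candidates are a point just above $s$ in the gap from \cref{lem-app:subcap-pool-disc}, or $s''=\underline{t}/p$ as in the final step of the sketch in \cref{subsec:refinement}.
\item This gives profit $(t-ps'')q_{\sigma,s''}(s'')>0$, which is the equilibrium payoff, and \cref{lem:EDR-characterization} again applies.
\item If instead all pooled types satisfy $t\le ps$, the pool contains only types that cannot profit from any inspected subsidy.
\end{enumerate}
In Case 2 the pool is not inspected, so it does not affect the outcome. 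Formally I would argue that the pool is outcome-equivalent to zero subsidy, or else appeal to \cref{lem-app:no-negative-margin}. I flag this as a place where the statement may need the qualifier ``inspected'' from the main text.

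The main obstacle is step 2 of Case 1: the lower bound $q_{\sigma,s'}(s')\ge q^{\mathrm{sep}}(t^*)$ under arbitrary on-path structure above $s$. Higher on-path levels may themselves contain pools or gaps. I would handle this by a coupling argument. The firm at $s'$ is inspected whenever every rival with type above $t^*$ fails to match, plus possibly more events. That event has probability $\prod$ over the other $n-1$ firms of $(1-\Pr(t>t^*)\,\mathbb{E}[t\mid t>t^*])$, which equals $(1-\int_{t^*}^1x\,dF)^{n-1}=q^{\mathrm{sep}}(t^*)$. This requires that no rival at a level below $s'$ is inspected before it. That holds because the index of $s'$ under type-$1$ beliefs, $u-(c-s')$, dominates every index $u-(c-\tilde s)/\tau$ with $\tilde s\le s'$ and $\tau\le 1$. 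Pooled rivals at exactly $s$ are ranked below by the same inequality, which is strict since $s'>s$.
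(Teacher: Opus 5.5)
Your Case~1 is essentially the paper's proof. Both use the gap above the pool from \cref{lem-app:subcap-pool-disc}, optimistic beliefs at an $s'$ in that gap, and the bound $q_{\sigma,s'}(s')\ge q^{\mathrm{sep}}(t^*)>q_{\sigma,\iota}(s)$. Both then use the margin $t^*-ps>0$, take $s'$ close to $s$, approximate $t^*$ by pool types, and finish with \cref{lem:EDR-characterization}. Your coupling argument for the attention bound is more careful than the paper's, which only asserts it. You should add that $s'$ has positive index under type-$1$ beliefs: the pool is inspected, so $u>(c-s)/\tau_\mu(s)\ge c-s$.

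Two small repairs, shared with the paper:
\begin{itemize}
\item Step~3 needs $\Pr\{K\ge1\}>0$, i.e.\ $n\ge2$. For $n=1$ the paper's own $\sigma^*(t;t_0)=(t_0/p)\mathbf{1}\{t\ge t_0\}$ is an inspected interior pool.
\item The product $q^{\mathrm{sep}}(t^*)\,\mathbb{E}[q_{\mathrm{pool}}(K)]$ is not exact, because $K$ and the number of rivals above $t^*$ are dependent. Instead set $\eta:=q^{\mathrm{sep}}(t^*)-q_{\sigma,\iota}(s)$. This is positive because $q_{\sigma,\iota}(s)=\mathbb{E}\bigl[\prod_{i:t_i>t^*}(1-t_i)\,q_{\mathrm{pool}}(K)\bigr]$ and $q_{\mathrm{pool}}(K)<1$ on $\{K\ge1\}$.
\end{itemize}

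Your Case~2 does not work as written, and the paper never proves it either. Its proof tacitly assumes $q_{\sigma,\iota}(s)>0$: the proof of \cref{lem-app:subcap-pool-disc} needs the pool to be reached, and the step ``$t^k>ps$'' needs positive attention. This matches the main-text claim, which concerns \emph{positively inspected} pools. Concretely, your Case~2 fails in four places:
\begin{itemize}
\item When the pool is not reached, the jump argument behind \cref{lem-app:subcap-pool-disc} has no bite, since $q_\sigma$ can vanish on both sides of $s$. An uninspected separating segment may then sit directly above $s$, so there need be no off-path gap.
\item The choice $s''=\underline t/p$ gives margin $t-\underline t\le0$ to pool types below $\underline t$, and $s''$ need not be off-path.
\item Your step~4 overlooks that a pooled type with $t\le ps$ can still profit from an off-path $s''<t/p$ that is inspected under optimistic beliefs.
\item \Cref{lem-app:no-negative-margin} only restricts off-path beliefs, so it says nothing about an on-path pool. ``Outcome-equivalent to zero subsidy'' is a statement about outcomes, not a failure of the refinement.
\end{itemize}
The introduction even asserts that refinement-surviving, outcome-equivalent equilibria exist in which low types offer positive but uninspected subsidies. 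So drop Case~2 and state the lemma with the hypotheses $q_{\sigma,\iota}(s)>0$ and $n\ge2$. That is exactly what the paper's argument, and yours, establishes.
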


\begin{proof}

Assume, for a contradiction, that $(\sigma,\iota,\mu)$ does not fail equilibrium-dominance.  
By \cref{lem-app:index-monotone}, the index $r_\mu(\cdot)$ and the induced attention $q_\sigma(\cdot)$ are weakly increasing on $[0,c]$.

Let $t^*=\sup T_\sigma(s)$ be the highest type pooled at $s$. Consider a increasing sequence $t^k\uparrow t^*$ with $t^k\in T_\sigma(s)$ for all $k$. By definition, for every $\epsilon >0$, there exists $N(\epsilon)$ such that for all $k \geq N(\epsilon)$, $|t^k - t^*| < \epsilon$. 

From \cref{lem-app:subcap-pool-disc} there exists some $\varepsilon> 0$ such that $(s,s+\varepsilon)\cap S_\sigma=\emptyset$. Then $s' \in (s, s+ \varepsilon)$ is off–path and there exists a fix $q' \in [0,1]$ such that with optimistic beliefs at $s'$, $q' = q_{\sigma, \iota'}(s') > q_{\sigma, \iota'}(s)$.%
\footnote{In particular, $q' = \lim_{t \downarrow t^*}q^\textrm{sep}(t)$ and $q' = 1$ if $\sigma(1) = s$. This follows as in equilibrium, a firm is inspected only when all the firms that rank better fail to match. That is what $q^\textrm{sep}$ captures (see \cref{app:inspection-prob}). } %
Thus, for any $t^k$
\[ (t^k - ps')q' - (t^k - ps)q(s) = (q' - q(s))(t^* - ps) - (q' - q(s))(t^* - t^k) - pq' (s' -s).  \]
Since we are on an equilibrium, for $k > 1$, $t^k > ps$, and thus $\lim_k (t^k - ps) = \pi^* > 0$. Then, fix $\epsilon_1 < \pi^*/2$ and set $s' = s + \varepsilon_2$ with $\varepsilon_2 < (q' - q(s))\pi^*/(2(pq') <0$, if $k > N(\epsilon_1)$, then 
\[ (t^k - ps')q' - (t^k - ps)q(s) = (q' - q(s))\pi^* - (q' - q(s))\epsilon_1 - pq' \epsilon_2 > 0.  \]
That is, type-$t^k$ has a profitable deviation. Contradiction

\end{proof}

We next pin down the inspected portion of $\sigma$ and its lower boundary. Let $\underline{t}=\frac{pc}{1+p}$ be the marginal type that is just worth inspecting and just willing to subsidize.

\begin{lemma}\label{lem-app:lower-boundary}
Fix an equilibrium $(\sigma,\iota,\mu)$.  If the lowest inspected type exceeds $\underline{t}$, ie $t_0 = \inf \{ t \in T: q_\sigma(t)> 0\}>  \underline{t}$, the equilibrium fails equilibrium-dominance. 
\end{lemma}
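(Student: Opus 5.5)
Suppose the lowest inspected type is $t_0=\inf\{t\in T: q_\sigma(t)>0\}>\underline{t}$. I will build an off-path deviation that a type in $(\underline{t},t_0)$ strictly prefers under optimistic beliefs, and then invoke \cref{lem:EDR-characterization}, or more directly the definition of equilibrium dominance, to conclude that the equilibrium fails the refinement.

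First, I will pin down the types in $(\underline{t},t_0)$. Pick $t'\in(\underline{t},t_0)$. By monotonicity (\cref{lem:subsidy-policy-weakly-increasing}), every type below $t_0$ offers a subsidy no greater than $\sigma(t)$ for inspected $t$, and these subsidies receive zero attention. So the equilibrium payoff of $t'$ is $\pi_{\sigma,\iota}(\sigma(t');t')=0$. There is one exception: $t_0$ itself may sit in an inspected pool. If that pool lies in $(0,c)$, \cref{lem-app:no-subcap-pooling} already gives failure. If it is the top pool at $c$, with $t_0\ge pc>\underline{t}$, the same construction below applies to types just below $t_0$. I therefore treat the case where types in $(\underline{t},t_0)$ earn zero.

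Next, I will choose the deviation $s'$ slightly above $\underline{s}=\underline{t}/p=c/(1+p)$ and strictly below $t'/p$, so that the margin $t'-ps'>0$. I need $s'$ to be off-path. Since subsidies offered by types below $t_0$ receive zero attention, two cases arise. If on-path subsidies near $\underline{s}$ are offered only by uninspected types, I will note that such a subsidy, if on-path, would be inspected: under Bayes consistency with types above $\underline{t}$, the index $u\tau-(c-s)$ is positive, because $s\ge c-\underline{t}$ whenever $c-\underline{t}\le \underline{t}/p$. Here I need to use $u\ge 1$ normalization or the condition $c<(1+pu)/p$ to check the index sign; this is the delicate point. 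So I will select $s'$ in an off-path gap, which exists because the image of uninspected types cannot fill an interval without receiving attention.

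Under optimistic beliefs at $s'$, the index is $u-(c-s')>0$ by the same cutoff computation. Then $s'$ is inspected after all higher on-path subsidies, so $q_{\sigma,s'}(s')\ge \inf_{t:\sigma(t)>s'}q^{\mathrm{sep}}(t;0)\ge q^{\mathrm{sep}}(\underline{t};0)>0$. Hence $(t'-ps')q_{\sigma,s'}(s')>0=\pi_{\sigma,\iota}(\sigma(t');t')$.

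To finish, I must show that the beliefs $\mu(s')$ of the putative equilibrium put mass on a dominated type. The payoff is continuous in $s$, so any $s'$ that attracts inspection must be believed inspectable. The equilibrium, however, assigns $q(s')=0$, so $\mu(s')$ must put mass on low types with $\tau_\mu(s')<(c-s')/u$. All types $t<ps'$, and in particular those near $0$, are dominated by the equilibrium value at $s'$: their margin is negative whatever the continuation, which is \cref{lem-app:no-negative-margin}-style reasoning. Types in $[ps',t_0)$ are not dominated. So a surviving belief would be supported on $[ps',1]$. That yields $\tau\ge ps'\ge \underline{t}$ and an index $\ge u\underline{t}-(c-\underline{s})\ge 0$, which induces inspection and contradicts $q(s')=0$. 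I expect the main obstacle to be this index inequality at the boundary, $u\,ps'\ge c-s'$. It holds because $\underline{t}$ solves $ \underline{t}/p = c-\underline{t}$ (with $u$ normalized to one), and I would take $s'$ strictly above $\underline{s}$ to make it strict.
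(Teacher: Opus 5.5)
Your skeleton is the paper's second case. A zero-profit type deviates to an off-path $s'$ just above $\underline{s}=c/(1+p)$, and optimistic beliefs make $s'$ inspected. Since the putative equilibrium leaves $s'$ uninspected, $\mu(s')$ must load on types with $t<ps'$; otherwise $\tau_\mu(s')\ge ps'$, and $u\,ps'>c-s'$ for $s'>\underline{s}$, $u\ge 1$. Working with $t'\in(\underline{t},t_0)$ rather than $t_0$ lets you skip the paper's first case (positive profit at $t_0$, handled by on-path mimicry of $s_0$), and your index algebra is right. The problem is the step you yourself flag: the existence of an off-path $s'\in(\underline{s},t'/p)$.

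Your claim that an on-path subsidy near $\underline{s}$ offered by uninspected types ``would be inspected'' assumes those types lie above $\underline{t}$. That need not hold. A type $t<\underline{t}$ that reveals itself is uninspected at any $s\le c-t$, and $c-t>\underline{s}$. Types in $(\underline{t},t_0)$ can stay uninspected by pooling with enough low types.

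Here is a concrete example. Take $u=p=1$, $c=1/2$, $F$ uniform (so $\underline{t}=\underline{s}=1/4$), and $t_0=0.3$. Let $t<0.05$ offer $6t$; the index is nonpositive since $7t<1/2$. Let $[0.05,0.3]$ pool at $0.3$, with posterior mean $0.175$ and index $1-0.2/0.175<0$. Let $t>0.3$ follow $\sigma^*(\cdot;0.3)$, whose branch starts at $0.3$. Every inspected or off-path subsidy exceeds $0.3$, so no type below $0.3$ can profit from any of them (its margin is negative). Inspected types are as in \cref{lem:optimal-SIS-eqm}, and every subsidy in $[0,0.3]$ is on-path.

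So the assertion that uninspected types ``cannot fill an interval'' is false. No off-path $s'$ exists where you need it, and the refinement has no bite below $t_0/p$. You cannot repair this with \cref{lem-app:no-subcap-pooling} either: its argument needs the highest pooled type to earn a positive margin $t^*-ps>0$, whereas here $t^*-ps=0.3-0.3=0$. The paper's proof also takes $s\in(c/(1+p),s_0)$ to be off-path without comment, so this gap is shared. Either argument needs an extra hypothesis, e.g.\ that $(\underline{s},t_0/p)$ contains an off-path subsidy, as in the SIS family where uninspected types offer $0$.

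A secondary point, also inherited from \cref{lem-app:no-negative-margin}, concerns the domination step. Under the paper's strict-inequality definition, a type $t<ps'$ earning zero in equilibrium is not dominated at $s'$. The admissible continuation that leaves $s'$ uninspected gives it exactly $0$. So ``negative margin whatever the continuation'' does not deliver the strict payoff comparison the definition requires.
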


\begin{proof}
Suppose inspection starts at some $t_0>\underline{t}$. Let $s_0=\sigma(t_0)$ be the lowest on–path subsidy with positive attention. 

First, if $\pi_{\sigma, \iota}(s_0; t_0) = \pi_0 > 0$, then there exists some $t\in(\underline{t},t_0)$ such that $(t - ps_0) > 0$ and its attention is $q(s_0)>0$ by construction. Hence the deviation yields strictly positive profit for $t$, $\pi_{\sigma, \iota}(s_0, t) > 0  = \pi_{\sigma, \iota}(\sigma(t), t)$.

Second, if $\pi_0 = 0$, then type $t_0$ has a profitable deviation to $s \in (c/(1+p), s_0)$. This follows as $t_0 > pc/(1+p)$, thus for such set $(t_0 - ps) > 0$ and under optimistic beliefs at $s$, $q_{\sigma, \iota'}(s) >0$. Thus, $\pi_{\sigma, \iota'}(s; t_0) > 0 = \pi_0$. Since $q_\sigma(s) = 0$, that means that there exists at least another type $t' \in \textrm{supp}(\mu(s))$ such that $t' < ps$. Otherwise, $r_\mu(s) > 0$ and hence inspected. But, $t'$ has a negative margin at $s$ and from \cref{lem-app:no-negative-margin}, this equilibrium fails equilibrium-dominance.
\end{proof}


\subsection{Survival and uniqueness}

\begin{lemma}
The reasonable equilibrium survives the equilibrium–dominance refinement.
\end{lemma}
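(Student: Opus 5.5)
We need to show that the reasonable equilibrium $\bigl(\sigma^*(\cdot;\underline{t}), DSIP_{\mu^*_{\underline{t}}}, \mu^*_{\underline{t}}\bigr)$ does not fail the refinement. By definition, this means checking every off-path subsidy $s$ and showing that no type $t\in\operatorname{supp}(\mu^*_{\underline{t}}(s))$ belongs to $D_{\sigma,\iota}(s)$. Equilibrium status is already given by \cref{lem:optimal-SIS-eqm}. Under $\sigma^*(\cdot;\underline{t})$ the off-path subsidies form two intervals. The lower gap is $(0,\underline{t}/p)$, where beliefs are $\delta_0$. The upper gap is $(\sigma^{\mathrm{sep}}(t_1;\underline{t}),c)$, where beliefs are $\delta_{t_1}$. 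Since the supports are singletons, it is enough to show that type $0$ is undominated at every $s$ in the lower gap, and that type $t_1$ is undominated at every $s$ in the upper gap.

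\textbf{Lower gap.} Here I would use the second clause of the refinement in reverse, showing that type $0$ cannot be dominated because no type strictly gains from deviating under any admissible continuation. Fix $s\in(0,\underline{t}/p)$. A type $t<\underline{t}$ earns $0$ in equilibrium, and deviating to $s$ gives at most $(t-ps)q\ge0$ only if $t>ps$. So types in $(ps,\underline{t})$ might gain if the continuation makes $s$ inspected. Whether $s$ is inspected is governed by the reservation index $u-(c-s)/\tau$.

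If a belief gives $s$ positive attention while keeping $\tau\le$ the types that profit, the corridor in \cref{fig:no-trade-region} shows that $s<c/(1+p)$ forces $\tau<ps$ or $\tau<c-s$ whenever $s$ is inspected. I would aim to use this to show that any continuation inspecting $s$ must place weight on types with $t\ge c-s>ps$.

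The main obstacle is that the definition quantifies over all continuations, including optimistic ones, so type $0$ is trivially dominated at $s$ by any $t\in(ps,\underline{t})$ once inspection is possible. If that happens, I would fall back on the footnote's remark about outcome equivalence. Low types could instead be assigned positive subsidies that are insufficient to attract attention, so that every $s<\underline{t}/p$ is on path and the lower gap disappears. The refinement then only constrains the upper gap, and the outcome is unchanged.

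\textbf{Upper gap.} Fix $s\in(\sigma^{\mathrm{sep}}(t_1;\underline{t}),c)$. I would show that $t_1$ is not dominated by any $t'$, by exhibiting for each $t'$ a continuation in which $t_1$ weakly gains while $t'$ does not strictly gain.

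Under any admissible continuation, $s$ is ranked below the pool at $c$, because pooled firms are inspected free and the pool's posterior exceeds any belief supported near $t_1$. So at best $s$ is ranked first among the separating branch, and its attention is at most $\bar q:=q^{\mathrm{sep}}(t_1;\underline{t})$ from \cref{lemma:inspection-prob}.

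By \eqref{eq:boundary}, type $t_1$ is indifferent between $c$ and $\sigma^{\mathrm{sep}}(t_1)$. Since $s>\sigma^{\mathrm{sep}}(t_1)$, we get $(t_1-ps)\bar q<\pi^*(t_1)$, so $t_1$ gains under no continuation. I would check carefully whether this holds under the supremum over continuations, using \cref{lem:existence-cutoff}.

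For types $t'\ne t_1$, single crossing (\cref{lem:subsidy-policy-weakly-increasing}) together with the \eqref{eq:upperIC} condition gives $(t'-ps)\bar q\le\pi^*(t')$. So $s$ is equilibrium dominated for every type. I would then argue that the paper's refinement only bites when some type can gain. When no type gains, the domination relation between types is vacuous in the sense that none is strictly better, and I would read the definition so that $t_1$ is not excluded. This reading of the definition is the delicate point, and I would make it explicit.
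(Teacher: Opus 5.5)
Your upper-gap computation is right, but it establishes the opposite of what you need. You show that under every admissible continuation the deviant at $s\in(\sigma^{\mathrm{sep}}(t_1;\underline t),c)$ is inspected only after the pool at $c$. The reason is that the pool's index is $u$, while any belief at $s<c$ gives index at most $u-(c-s)$. Hence $q_{\sigma,\iota'}(s)\le q^{\mathrm{sep}}(t_1;\underline t)$, and $(t_1-ps)\,q_{\sigma,\iota'}(s)<\pi^*(t_1)$ for every admissible $(\iota',\mu')$. That is exactly the first clause of $D_{\sigma,\iota}(s)$, ``dominated by the equilibrium value,'' which makes no reference to any other type. So $t_1\in D_{\sigma,\iota}(s)\cap\operatorname{supp}\mu^*_{\underline t}(s)$, and the assessment with $\delta_{t_1}$ on the upper gap fails the refinement by definition. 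No reading of the type-comparison clause can undo this. The paper's own proof only argues that no type profits from an upper-gap deviation, so it has the same defect and following it will not close this step.

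The idea you are missing shows up in your own sentence ``$s$ is equilibrium dominated for every type''. This is false for types with zero equilibrium payoff, and type $0$ is always such a type. For these types, the continuation in which $s$ is not inspected gives exactly $0=\pi^*(t)$, so the strict inequality fails. They are also not dominated by any $t'$: the only continuations in which they weakly gain have $q_{\sigma,\iota'}(s)=0$, and there no type strictly gains. The correct conclusion is therefore that the reasonable \emph{outcome} survives once the upper-gap beliefs are replaced by $\delta_0$. Those subsidies are then uninspected, deviations to them yield $0\le\pi^*(t)$, the equilibrium argument of \cref{lem:optimal-SIS-eqm} is unaffected, and no off-path belief puts mass on a type in $D_{\sigma,\iota}(s)$.

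In the lower gap, your ``main obstacle'' comes from misapplying the type-dominance clause. For type $0$ to be dominated by $t'$ at $s$, $t'$ must strictly gain in \emph{every} continuation in which type $0$ weakly gains. It is not enough that $t'\in(ps,\underline t)$ gains under optimistic beliefs while type $0$ does not. Type $0$'s deviation payoff is $-ps\,q_{\sigma,\iota'}(s)$, so it weakly gains only when $q_{\sigma,\iota'}(s)=0$, and there $t'$ earns $0\le\pi^*(t')$. Hence type $0\notin D_{\sigma,\iota}(s)$ for every $s\in(0,\underline t/p)$, and $\delta_0$ passes directly by the same tie argument as above.

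Note that this is not the reason the paper gives in its case (i). The paper asserts that every $t\le\underline t$ would earn strictly negative profit from such a deviation, which is false for $t\in(ps,\underline t)$ under optimistic beliefs. Your corridor argument is unnecessary. You should also drop the fallback of filling the gap with uninspected positive subsidies. It replaces the assessment under study by a different one, and you have not checked that it exists: Bayes-consistent posteriors at those subsidies must satisfy $u\,\tau\le c-s$, which is not automatic.
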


\begin{proof}
Suppose, for a contradiction, that there exists an off–path message that violates equilibrium–dominance. Any such message must fall into one of two cases:  
(i) $s \in (0,\underline{t}/p)$, or  
(ii) $s \in (\lim_{\varepsilon \uparrow 0}\sigma^*(t_1 + \varepsilon),c)$ when there is a discontinuity at the top.  

Consider case (i). All types $t\leq \underline{t}$ earn nonnegative profit in equilibrium. If any of them were to deviate to an off–path message in $(0,\underline{t}/p)$, their profit would be strictly negative, even under optimistic beliefs. Hence no profitable deviation exists.  

Now consider case (ii). At the reasonable equilibrium, beliefs already assign probability one to the highest possible type at $s=\sigma^*(t_1)$. Thus optimistic beliefs at any $s\in(\sigma^*(t_1),c)$ cannot make the deviation more attractive. A type that raises its subsidy above $\sigma^*(t_1)$ obtains the same attention as $t_1$ but must pay strictly more, which strictly reduces profit. Lower–type firms do not gain either: although attention may rise, the higher subsidy cost dominates, again reducing profit.  

In both cases, no type finds it profitable to deviate to an off–path subsidy. Therefore, the reasonable equilibrium survives the equilibrium–dominance refinement.
\end{proof}

Therefore, we are ready to establish the proof for the main result of the section.

\begin{proof}[Proof of \cref{thm:uniqueness}]
By \cref{lem-app:no-subcap-pooling}, the policy in any refined equilibrium is step–increasing–step: no inspection below some cutoff, then a strictly increasing revealing branch, then a single pooling block at $c$. By \cref{lem-app:lower-boundary} the lower cutoff must be $\underline t$. By \cref{lem-app:revealing-ODE}, the revealing branch is characterized by the same truth–telling ODE and the same boundary condition at $\underline t$, hence it coincides with $\sigma^*$ on $[\underline t,\bar t)$. By \cref{lem-app:no-negative-margin} pooling can occur only at $c$, and by \cref{lem:existence-cutoff} the pooling cutoff $\bar t$ is uniquely pinned by boundary indifference. Thus the outcome (cutoffs, revealing schedule, and top pooling) is unique and matches the one induced by the reasonable equilibrium $(\sigma^*,q^*, \mu^*)$.
\end{proof}

\section{Omitted proofs for \cref{sec:welfare}} \label{appendix:welfare-proofs}

\subsection{Some useful identities}

First, we show how to derive the simplifications presented for the consumer surplus and the producer surplus.

\begin{lemma}\label{lem-app:useful-identities}
    The producer surplus can be simplified as follows:
    \begin{equation}
        \frac{PS_n(c,p)}{n} = \int_{\underline{t}}^1 (1 - F(t)) q^*(t) \, dt = \mathbb{E}_t\left[\frac{1 - F(t)}{f(t)}q^*(t)\right],
    \end{equation}
    while the expected firms transfers are given by 
    \begin{equation}
    \begin{split}
        \phi_n^F(p,c) & = \int_{\underline{t}}^{\bar t } \left(t - \frac{1 - F(t)}{f(t)}\right)q^\textrm{sep}(t)f(t) \, dt + [1 - F(\bar t)]\big( \bar tq^\textrm{pool}(\bar t)\big)
        \\ & = \mathbb{E}_t\left[ \psi(t) q^\textrm{sep}(t; \underline{t})\; \Big|\; t \leq \bar t \right] F(\bar t) + \big( \bar tq^\textrm{pool}(\bar t)\big)[1 - F(\bar t)]
    \end{split}
    \end{equation}
\end{lemma}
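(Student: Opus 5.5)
The plan is to start from the definition of producer surplus per firm, $PS_n/n=\mathbb{E}_t[(t-p\sigma^*(t))q^*(t)]$, and first express the interim equilibrium profit $U(t):=(t-p\sigma^*(t))q^*(t)$ in integral (envelope) form. I will handle the three regions of the reasonable equilibrium separately. For $t<\underline{t}$ we have $q^*(t)=0$, so $U(t)=0$. On the separating branch $[\underline{t},\bar t)$, the envelope identity already recorded in \cref{sec:welfare} (equivalently, integrating \eqref{eq:ODE-raw} with boundary condition $p\sigma^*(\underline t)=\underline t$) gives $U(t)=\int_{\underline t}^t q^{\mathrm{sep}}(x;\underline t)\,dx$.

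For $t\ge\bar t$, the pooled payoff is $U(t)=(t-pc)q^{\mathrm{pool}}(\bar t)$. By \ref{eq:boundary} this equals $U(\bar t)+(t-\bar t)q^{\mathrm{pool}}(\bar t)$. Since $q^*(x)=q^{\mathrm{pool}}(\bar t)$ for $x\ge\bar t$, this is again $\int_{\underline t}^t q^*(x)\,dx$. Hence for every $t$,
\[
U(t)=\int_{\underline t}^{t} q^*(x)\,dx .
\]
This step, that \ref{eq:boundary} makes the envelope formula continue across $\bar t$ despite the jump in $\sigma^*$, is the one point needing care. I will check it directly from the defining equation of $t_1$ in \cref{def:optimal-SIS-family}, namely $(t_1-pc)q^{\mathrm{pool}}(t_1)=\int_{t_0}^{t_1}q^{\mathrm{sep}}$.

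Next I will integrate against $dF$ and swap the order of integration (Fubini, all terms nonnegative):
\[
\mathbb{E}_t[U(t)]=\int_{\underline t}^1\!\int_{\underline t}^t q^*(x)\,dx\,dF(t)=\int_{\underline t}^1 (1-F(x))q^*(x)\,dx .
\]
Rewriting $dx=dF(x)/f(x)$ yields the first identity, $\mathbb{E}_t[\tfrac{1-F(t)}{f(t)}q^*(t)]$, where the integrand vanishes below $\underline t$.

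For the transfers I will use $\phi_n^F=\mathbb{E}_t[p\sigma^*q^*]=\mathbb{E}_t[tq^*(t)]-\mathbb{E}_t[U(t)]$ together with the expression just obtained:
\[
\phi_n^F=\int_{\underline t}^1\Big(t-\tfrac{1-F(t)}{f(t)}\Big)q^*(t)f(t)\,dt .
\]
I then split at $\bar t$. On $[\underline t,\bar t)$ this is $\int\psi(t)q^{\mathrm{sep}}(t)f(t)\,dt$, and dividing and multiplying by $F(\bar t)$ gives the conditional-expectation form, noting $q^{\mathrm{sep}}=0$ below $\underline t$. On $[\bar t,1]$, $q^*$ is the constant $q^{\mathrm{pool}}(\bar t)$. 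Then $\int_{\bar t}^1 (tf(t)-(1-F(t)))\,dt=\int_{\bar t}^1 \frac{d}{dt}[-t(1-F(t))]\,dt=\bar t(1-F(\bar t))$, which produces $[1-F(\bar t)]\bar t q^{\mathrm{pool}}(\bar t)$. Apart from the boundary-continuity check above, everything is routine integration by parts.
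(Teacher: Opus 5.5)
Your proof is correct and rests on the same ingredients as the paper's: the envelope identity on the separating branch, boundary indifference at $\bar t$, and integration by parts (your Fubini swap). The difference is only organizational. You first extend $U(t)=\int_{\underline{t}}^{t} q^*(x)\,dx$ across the jump at $\bar t$ and then obtain $\phi_n^F$ by subtracting from $\mathbb{E}_t[t\,q^*(t)]$, whereas the paper computes $\phi_n^F$ and $PS_n$ separately and invokes boundary indifference once in each.
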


\begin{proof}
Recall that:
\[ \begin{split}
    \phi_n^F(p,c) & = \mathbb{E}_t[p\sigma^*(t)q^*(t)] 
    \\ & = \int_{\underline{t}(p,c)}^{\bar t (p,c)} \left(tq^\textrm{sep}(t) - \int_{\underline{t}(p,c)}^tq^\textrm{sep}(x)\, dx\right)f(t) \, dt + pcq^\textrm{pool}(\bar t(p,c))[1 - F(\bar t (p,c))]
\end{split}\]
This expression can be further simplified. First note that
\[ \begin{split}
    \int_{\underline{t}(p)}^{\bar t (p)} \Big(tq^\textrm{sep}(t) & - \int_{\underline{t}(p)}^tq^\textrm{sep}(x)\, dx\Big) f(t) \, dt 
    \\& = \int_{\underline{t}(p)}^{\bar t (p)} \left(t - \frac{F(\bar t (p)) - F(t)}{f(t)}\right)q^\textrm{sep}(t)f(t) \, dt
    \\ & = \int_{\underline{t}(p)}^{\bar t (p)} \left(t - \frac{1 - F(t)}{f(t)}\right)q^\textrm{sep}(t)f(t) \, dt + [1 - F(\bar t(p))]\int_{\underline{t}(p)}^{\overline{t}(p)}q^\textrm{sep}(t)\;dt
    \\ & = \int_{\underline{t}(p)}^{\bar t (p)} \left(t - \frac{1 - F(t)}{f(t)}\right)q^\textrm{sep}(t)f(t) \, dt + [1 - F(\bar t(p))]\big( \bar t(p) - pc \big)q^\textrm{pool}(\bar t(p)),
\end{split} \]
where the first equality uses integration by parts, the second follows from algebraic manipulation and the third from the \ref{eq:boundary} condition. Therefore, by pasting this back to the original expression it yields
\begin{equation*}
\begin{split}
    \phi_n^F(p,c) & = \int_{\underline{t}}^{\bar t } \left(t - \frac{1 - F(t)}{f(t)}\right)q^\textrm{sep}(t)f(t) \, dt + [1 - F(\bar t)]\big( \bar tq^\textrm{pool}(\bar t)\big)
    \\ & = \mathbb{E}_t\left[ \psi(t) q^\textrm{sep}(t; \underline{t})\; \Big|\; t \leq \bar t \right] F(\bar t) + \big( \bar tq^\textrm{pool}(\bar t)\big)[1 - F(\bar t)]
\end{split}
\end{equation*}

Similarly, we can simplify the expression for the producer surplus. Recall that 
\[ \begin{split}
    PS(p,c) & = \mathbb{E}_t[(t - p\sigma^*(t))q^*(t)]
    \\ & = \int_{\underline{t}}^{\bar t} (t - p \sigma^\textrm{sep}(t))q^\textrm{sep}(t) f(t) \,dt + \int_{\bar t}^1 (t - pc)q^\textrm{pool}(\bar t)f(t)\, dt
    \\ & = \int_{\underline{t}}^{\bar t} \left(\int_{\underline{t}}^t q^\textrm{sep}(x) dx \right)f(t) \,dt + q^\textrm{pool}(\bar t)\left(\int_{\bar t}^1 t f(t)\, dt -  pc[ 1 - F(\bar t)]\right)
    \\ & = \int_{\underline{t}}^{\bar t} \big[F(\bar t) - F(t)\big]q^\textrm{sep}(t) \,dt + q^\textrm{pool}(\bar t)\left(\int_{\bar t}^1 t f(t)\, dt -  pc[ 1 - F(\bar t)]\right)
\end{split} \]
where the first equality is definition, the second is achieved by substitution, the third uses the definition and algebraic manipulation and the fourth integration by parts. Re-arranging we obtain
\[ PS_n(p,c) = \int_{\underline{t}}^{\bar t} \big[1 - F(t)\big]q^\textrm{sep}(t) \,dt + q^\textrm{pool}(\bar t)\int_{\bar t}^1 t f(t)\, dt -  [ 1 - F(\bar t)]\left(\int_{\underline{t}}^{\bar t} q^\textrm{sep}(t)\,dt+ q^\textrm{pool}(\bar t)pc\right) \]
To conclude, note that from the \ref{eq:boundary}
\[ \int_{\underline{t}}^{\bar t} q^\textrm{sep}(t)\,dt+ q^\textrm{pool}(\bar t)pc = (\bar t - pc)q^\textrm{pool}(\bar t) + q^\textrm{pool}(\bar t)pc = \bar t q^\textrm{pool}(\bar t)\]
Thus, 
\[ \begin{split}
    PS_n(p,c) & = \int_{\underline{t}}^{\bar t} \big[1 - F(t)\big]q^\textrm{sep}(t) \,dt + q^\textrm{pool}(\bar t)\left[ \int_{\bar t}^1 t f(t)\, dt -  [ 1 - F(\bar t)]\bar t \right]
    \\ &= \int_{\underline{t}}^{\bar t} \big[1 - F(t)\big]q^\textrm{sep}(t) \,dt + q^\textrm{pool}(\bar t)\left[1 - \bar t F(\bar t) -  \int_{\bar t}^1 F(t)\, dt -  [ 1 - F(\bar t)]\bar t \right]
    \\ & = \int_{\underline{t}}^{\bar t} \big[1 - F(t)\big]q^\textrm{sep}(t) \,dt + \int_{\bar t}^1[1 - F(t)]q^\textrm{pool}(\bar t)\, dt 
\end{split}\]
Therefore,
\begin{equation*}
    PS_n(c,p) = \int_{\underline{t}}^1 (1 - F(t)) q^*(t) \, dt = \mathbb{E}_t\left[\frac{1 - F(t)}{f(t)}q^*(t)\right]
\end{equation*}
\end{proof}

\begin{lemma}\label{lem-app:match-prob}
    Fix an equilibrium $(\sigma, \iota, \mu)$ and let $t_0 = \inf \{t\in T: q_{\sigma, \iota}(\sigma(t)) > 0\}$. The expected match probability is
    \[ m(\sigma, \iota) = 1 - \left(1 - \int_{t_0}^1 tf(t)\;dt\right)^n. \]
\end{lemma}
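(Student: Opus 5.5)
The plan is to compute the probability of no match, which is $1-m(\sigma,\iota)$. I will use the Subsidy–Sorting Principle (\cref{thm:sorting}) and \cref{cor:sorting-on-path}. Together they say that in any equilibrium the consumer follows DSIP$_\mu$: firms are visited in descending order of subsidy, ties are broken uniformly, and search stops at the first match or when the next index is nonpositive. Since $\sigma$ is increasing and $q_{\sigma,\iota}$ is increasing in on-path subsidies (\cref{lem:subsidy-policy-weakly-increasing}), the set of firms that are ever reached, conditional on all earlier inspections failing, is exactly the set of firms with type at or above the cutoff $t_0$. More precisely, I would argue that every firm with $\sigma(t_j)$ at or above the lowest inspected subsidy level is scheduled before $\emptyset$, and every firm below it is scheduled after $\emptyset$. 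The boundary type $t_0$ itself has $F$-measure zero, so it does not matter; pooling at the cutoff level is handled by the remark below.

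The key observation is that no match occurs if and only if every firm scheduled before $\emptyset$ fails to match. This event does not depend on the order of inspection or on tie-breaking: if all scheduled firms fail, the consumer inspects all of them and takes the outside option. The order only affects who is inspected, not whether a match eventually occurs. So
\[
1-m=\mathbb{E}\Big[\prod_{j=1}^n\big(1-t_j\mathbf{1}\{t_j\ge t_0\}\big)\Big].
\]
Types are i.i.d. and the factor for firm $j$ depends only on $t_j$, so the expectation factorizes into
\[
\Big(\mathbb{E}\big[1-t\,\mathbf{1}\{t\ge t_0\}\big]\Big)^n=\Big(1-\int_{t_0}^1 t f(t)\,dt\Big)^n.
\]
This is the same binomial-type calculation as in \cref{lemma:inspection-prob}, with the outside option playing the role of a firm ranked last.

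The main obstacle is justifying that the scheduled set is determined by $t_j\ge t_0$ alone, independent of the other firms' realized subsidies. The stopping rule in DSIR compares each firm's own index $r_\mu(s_j)$ with zero. Consumer optimality for a Weitzman index rule means a firm is scheduled before $\emptyset$ exactly when its own index is positive, and each index is a function of its own subsidy. Monotonicity of $r_\mu$ on path (\cref{cor:sorting-on-path}) then turns ``index positive'' into ``$t_j\ge t_0$'' up to null sets.

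One subtle case is a pool at the cutoff subsidy level with index exactly zero. There the consumer is indifferent, and randomization could in principle schedule only part of the pool. I would handle this in one of two ways. The first option is to define $t_0$ via positive attention, so that any pooled level receiving positive attention is scheduled with probability one when reached. The second option is to note that zero-index inspections leave the match probability measurable only through their scheduled set, and interpret the claim for the equilibrium convention in which indifferent levels are inspected. I expect this tie case is where care is needed; everything else is routine.
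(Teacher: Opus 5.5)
Your argument is the paper's. You identify the eligible set $\{j: t_j\ge t_0\}$, note that no match occurs iff every eligible firm fails (regardless of order and tie-breaking), and use independence to get the $n$-th power. The paper conditions on the number $K\sim\mathrm{Bin}(n,1-F(t_0))$ of eligible firms and averages $(1-\mathbb{E}[t\mid t\ge t_0])^K$. Your direct factorization of $\mathbb{E}\bigl[\prod_j(1-t_j\mathbf{1}\{t_j\ge t_0\})\bigr]$ is the same computation, done more compactly.

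The one step that would fail is your first resolution of the tie case. Defining $t_0$ through positive attention, as the statement already does, does not make a positively attended pool scheduled with probability one when reached. If a pool has index exactly zero and the consumer includes it only with probability $\alpha\in(0,1)$, its types lie above $t_0$ but are not always inspected, and the formula overstates the match probability.

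Concretely, take $n=1$ and $c<u\,\mathbb{E}[t]$.
\begin{itemize}
\item Types in $[0,t^*]$ offer $s=0$, with $t^*$ chosen so that $\mathbb{E}[t\mid t\le t^*]=c/u$.
\item Types above $t^*$ offer $t^*(1-\alpha)/p$, with $\alpha$ large enough that this is at most $c$.
\item Off-path subsidies are read as type $0$.
\item The consumer inspects at $s=0$ with probability $\alpha$.
\end{itemize}
This is an equilibrium: type $t$ compares $t\alpha$ with $t-t^*(1-\alpha)$, which single-crosses at $t^*$. It has $t_0=0$ and match probability $\alpha\int_0^{t^*}t\,dF+\int_{t^*}^1 t\,dF<\int_0^1 t\,dF$.

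So you do need your second option: a deterministic rule at zero-index levels that depends only on the firm's own subsidy. The paper's proof uses such a convention silently when it calls every firm above $t_0$ ``positively inspected.'' The DSIR stopping rule (stop once the next index is nonpositive) gives a zero-index pool no attention and so places it below $t_0$. Your ``inspect when indifferent'' convention places it above $t_0$. Either one completes your argument.
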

\begin{proof}
    A match occurs if, and only if, one inspected firm succeeds. Hence, the probability of no match equals the probability that all eligible firms fail.

    Let $K\sim \mathrm{Bin}\!\left(n,\,1-F(t_0)\right)$ be the number of firms that are positively inspected and  $\tau \equiv \mathbb{E}\!\left[t \,\middle|\, t\ge t_0\right]$ the mean success probability conditional on inspection. Conditional on $K=k$, the $k$ eligible firms are i.i.d., so the probability they all fail is $(1-\tau)^k$. Averaging over $K$ yields
    \[ \begin{split}
        \Pr\{\text{no match}\}
         & \;=\;
        \sum_{k=0}^{n} \binom{n}{k}\!\left(1-F(t_0)\right)^{k} \!\left(F(t_0)\right)^{n-k} (1-\tau)^{k}
        \\ & \;=\;
        \Bigl(F(t_0) + (1-F(t_0))(1-\tau)\Bigr)^{n} \\
        & \;=\; \big(1 - \mathbb{E}[t \mid t \geq t_0]\Pr\{t \geq t_0\}\big)^n 
    \end{split}
    \]
    Therefore,
    \begin{equation}\label{eq:match-simple}
    m(\sigma, \iota) \equiv \Pr\{\text{match}\}
    \;=\;
    1 - \Bigl(1 - \int_{t_0}^1 t \,dF(t)\Bigr)^{n}.
    \end{equation}
\end{proof}

\subsection{Omitted proofs for \cref{sec:welfare}}

\begin{proof}[Proof of \cref{lem:SIS-properties}]
\emph{(i) Lower cutoff.}  
The expression for the lower cutoff $\underline{t}(p,c)=pc/(1+p)$ follows directly from the consumer’s participation condition. It is strictly increasing in $c$ and $p$, and by inspection does not depend on $n$.

\emph{(ii) Effect of market size.}  
Recall that
\[
q_n^{\mathrm{sep}}(t;t_0) \;=\; \Big(1-\int_t^1 x f(x)\,dx\Big)^{n-1}\,\1\{t\ge t_0\}.
\]
Since the term in parentheses is less than one, $q_n^{\mathrm{sep}}(t;t_0)$ is strictly decreasing in $n$ for each $t\in[t_0,1)$. Moreover, by construction $q_n^{\mathrm{sep}}(t;\cdot)$ is decreasing in $t$.
Moreover, note then that 
\[ \frac{\partial}{\partial c} q_n^{\mathrm{sep}}(t;\underline{t}) = \frac{\partial}{\partial t_0} q_n^{\mathrm{sep}}(t;\underline{t}) \frac{\partial}{\partial c} \underline{t}   \leq 0\]
Similarly,
\[ \frac{\partial}{\partial p} q_n^{\mathrm{sep}}(t;\underline{t}) = \frac{\partial}{\partial t_0} q_n^{\mathrm{sep}}(t;\underline{t}) \frac{\partial}{\partial p} \underline{t}   \leq 0.\]

\emph{(iii) Monotonicity of policies.}  
The separating subsidy policy is given by
\[
\sigma_n^{\mathrm{sep}}(t;t_0) \;=\; \frac{1}{p}\left( t - \frac{\int_{t_0}^t q_n^{\mathrm{sep}}(x)\,dx}{q_n^{\mathrm{sep}}(t)} \right).
\]
It is immediate that if $t_0$ increases, then for any $t\ge t_0$ the numerator $\int_{t_0}^t q_n^{\mathrm{sep}}(x)\,dx$ falls, and hence $\sigma_n^{\mathrm{sep}}(t;t_0)$ rises. This proves monotonicity in the cutoff.  

To compare across market sizes, note that
\[
\sigma_{n+1}^{\mathrm{sep}}(t;t_0)-\sigma_n^{\mathrm{sep}}(t;t_0)
= \frac{1}{p}\left( \frac{\int_{t_0}^t q_n^{\mathrm{sep}}(x)\,dx}{q_n^{\mathrm{sep}}(t)} - \frac{\int_{t_0}^t q_{n+1}^{\mathrm{sep}}(x)\,dx}{q_{n+1}^{\mathrm{sep}}(t)} \right).
\]
Since
\[
q_{n+1}^{\mathrm{sep}}(t) = q_n^{\mathrm{sep}}(t)\left(1-\int_t^1 x f(x)\,dx\right),
\]
and $q_n^{\mathrm{sep}}(\cdot)$ is increasing in $t$, the difference above is strictly positive. Hence $\sigma_{n+1}^{\mathrm{sep}}(t;t_0)>\sigma_n^{\mathrm{sep}}(t;t_0)$ for all $t>t_0$.

To conclude, note that for $t > \underline{t}$,
\[ \frac{\partial}{\partial c} \sigma_n^{\mathrm{sep}}(t;\underline{t}) = \frac{\partial}{\partial t_0} \sigma_n^{\mathrm{sep}}(t;\underline{t}) \frac{\partial}{\partial c} \underline{t}   \geq 0\]
Moreover
\[ \begin{split}
    \frac{\partial}{\partial p} \sigma^\textrm{sep}(t; \underline{t})  & =  \frac{\partial}{\partial p} \left( \frac{t}{p} - \frac{\int_{\frac{cp}{1+p}}^t q^\textrm{sep}(x)\; dx}{p q^\textrm{sep}(t)}\right) 
    \\ & = - \frac{t}{p^2} - \left( \frac{- q^\textrm{sep}(\underline{t}) (c/(1+p)^2) - q^\textrm{sep}(t)^2 ( t - p\sigma^\textrm{sep}(t)) }{p^2 q^\textrm{sep}(t)^2}\right)
    \\ & = -\frac{1}{p^2} \left(   p \sigma^\textrm{sep}(t; \underline{t}) - \frac{q^\textrm{sep}(\underline{t})}{q^\textrm{sep}(t)^2}\frac{c}{(1+ p)^2} \right)
\end{split}  \]
Moreover, 
\[  p \sigma^\textrm{sep}(t; \underline{t}) > \frac{q^\textrm{sep}(\underline{t})}{q^\textrm{sep}(t)^2}\frac{c}{(1+ p)^2}   \iff p \sigma^\textrm{sep}(t; \underline{t})q^\textrm{sep}(t) > \frac{c}{(1 +p)^2} = \frac{\underline{t}}{p (1 +p)}. \]
Therefore, if $p(1+p) \geq 1$, this last equality is true and thus $\frac{\partial}{\partial p} \sigma^\textrm{sep}(t; \underline{t})< 0$
Otherwise, as $p(1+p) < 1$, then $- \underline{t} > - \tfrac{c}{(1 +p)^2}$, and we have that 
\[  p \sigma^\textrm{sep}(t; \underline{t}) > \frac{q^\textrm{sep}(\underline{t})}{q^\textrm{sep}(t)^2}\underline{t} \implies   p \sigma^\textrm{sep}(t; \underline{t}) > \frac{q^\textrm{sep}(\underline{t})}{q^\textrm{sep}(t)^2}\frac{c}{(1+ p)^2}  \]
We thus establish the result. 

\emph{(iv) Cutoff monotonicity.}  
As $n$ increases, subsidies rise and the subsidy cap $c$ is reached at a lower type. Similarly, an increase in $p$ or $c$ shifts the incentive constraints so that the upper cutoff $\bar t$ is attained more quickly. Thus $\bar t$ is strictly decreasing in both $c$ and $p$.

This completes the proof.
\end{proof}

\begin{proof}[Proof of \cref{thm:comparative-statics-SIS}]

Let $\bar q_n(p,c):=Q_n(p,c)/n$ be the per–firm inspection probability.

\medskip\noindent\textit{(i) Search intensity.}
Fix $(p,c)$ and compare $n$ and $n+1$. Since $\bar q_n$ is the per–firm inspection probability, $\bar q_{n+1}\le \bar q_n$ by \cref{lem:SIS-properties}(ii). However, total inspections
\[
Q_n \;=\; n\,\bar q_n \qquad\text{and}\qquad Q_{n+1} \;=\; (n+1)\,\bar q_{n+1}
\]
satisfy $Q_{n+1}\ge Q_n$.

Therefore $Q_n$ is (weakly) increasing in $n$. Next, holding $n$ fixed and raising $p$ or $c$ increases $\underline t$ and weakly expands the top pool, so $q^*(t)$ weakly falls for every $t$; hence $Q_n$ is decreasing in $p$ and $c$.

\medskip\noindent\textit{(ii) Match probability.}
As the price of subsidy or search cost increases, the number of participating types in equilibrium lowers (see \cref{lem:SIS-properties}). Thus, the match probability decreases. However, as the number of competing firms increases, the match probability increases as the failure event is lower. That means, match probability increases in $n$.

\medskip\noindent\textit{(iii) Consumer surplus.}
The consumer surplus is decreasing in $p$. This follows as the match probability decreases as $p$ increases, and also firms transfers. A similar argument shows that as compeititon increases, consumer surplus also increases. 

\medskip\noindent\textit{(iv) Producer surplus.}
From \cref{lem-app:useful-identities}, it follows that 
\[ PS_n(c,p) = \mathbb{E}_t\left[\frac{1 - F(t)}{f(t)}q_n^*(t; p,c)\right]\]
where $q_n^*(t; p,c)$ is the attention function (equal to the separating one for $t \in [t_0, t_1)$ and equal to the pooling one in $t \in (t,1]$).

Write \(w(t):=1-F(t)\) and, for any attention schedule \(q\), its cumulative \(Q_n(x):=\int_0^x q_n^*(s; p,c )\,ds\).
Integration by parts yields
\[
\int_0^1 w(t)q(t)\,dt = w(1)Q(1) - \int_0^1 w'(t)\,Q(t)\,dt.
\]
Since \(w'(t)=-f(t)\le 0\) and \(w(1)=1-F(1)=0\) (support \([0,1]\)), this reduces to
\[
PS_n(c,p) \;=\; \int_0^1 w(t)q_n^*(t;p,c)\,dt
            \;=\; -\int_0^1 w'(t)\,Q_{n}(t; p,c)\,dt
            \;=\; \int_0^1 f(t)\,Q_{n}(t; p,c )\,dt.
\]

(i) Fix \(n,c\). If \(p'>p\) then following \cref{lem:SIS-properties}, \(Q_{n}(t; p',c)\le Q_p(t; p,c)\) for all \(t\). Because \(f(t)\ge 0\) for all \(t\),
\[
PS_n(c,p') \;=\; \int_0^1 f(t)\,Q_{n}(t; p',c)\,dt
              \le \int_0^1 f(t)\,Q_n(t;p,c)\,dt
              \;=\; PS_n(c,p).
\]

(ii) The same argument applies to \(c\): if \(c'>c\) then \(Q_{n}(t; p,c')\le Q_n(t; p,c)\) for all \(t\), hence \(PS_n(c',p)\le PS_n(c,p)\).

(iii) Fix \(p,c\). Once again, \(Q_{n+1}(t; p,c)\ge Q_n(t; p,c)\) for all \(t\), then
\[
PS_{n+1}(c,p)=\int_0^1 f(t)\,Q_{n+1}(t)\,dt
            \ge \int_0^1 f(t)\,Q_n(t)\,dt
            = PS_n(c,p),
\]
so aggregate producer surplus is nondecreasing in \(n\).

\end{proof}

\section{Omitted proofs for \cref{sec:agentic}} \label{appendix:agentic-proofs}

\begin{proof}[Proof of \cref{thm:excess-search}.]
(i) If pooling were inactive at the maximizer (so \(\bar t=1\)), a small decrease in \(p\) would create a positive-measure pool and increase the pool term in \eqref{eq:rev-decomp} at first order, while affecting the revelation term only at second order through the cutoff; revenue would rise, a contradiction.

(ii) Let \(t_\psi\) solve \(\psi(t_\psi)=0\) (exists and is unique under regularity). If \(\bar t(p^\star)\ge t_\psi\), then reducing \(p\) slightly lowers \(\bar t\) and expands the pool into the region with \(\psi<0\). Using boundary indifference and the envelope logic behind \eqref{eq:rev-decomp}, the marginal revenue gain from the pool dominates the marginal loss on the revelation term at that point, so revenue rises. Hence at an interior maximum we must have \(\bar t(p^\star)<t_\psi\), and some \(\psi<0\) types are inspected (in the pool or just below it).

\end{proof}

\newpage






\end{document}